\newenvironment{proof}{\noindent\emph{Proof\ }}{\hspace*{\fill}$\Box$\medskip}
\newenvironment{claimproof}{\noindent\emph{Proof of claim\ }}{\hspace*{\fill}$\Box$\medskip}
\newtheorem{theorem}{Theorem}
\newtheorem{lemma}{Lemma}
\newtheorem{claim}{Claim}
\newtheorem{corollary}{Corollary}
\newcommand\restr[2]{{
  \left.\kern-\nulldelimiterspace 
  #1 
  \vphantom{\big|} 
  \right|_{#2} 
  }}
\newcommand{\one}{\ensuremath{\mathds{1}}}
\title{Lagrangian Duality based Algorithms in Online Scheduling}
\author{Nguyen Kim Thang\thanks{Research supported by FMJH program Gaspard Monge in Optimization and
Operations Research and by EDF.}}
\affil{IBISC, University of Evry Val d'Essonne, France 
}
\date{}
\begin{document}

\maketitle

\begin{abstract}
We consider Lagrangian duality based approaches to design and analyze 
algorithms for online energy-efficient scheduling. 
First, we present a primal-dual framework. 
Our approach makes use of the Lagrangian weak duality and convexity to derive 
dual programs for problems which could be formulated as convex assignment
problems. The duals have intuitive structures as the ones in linear programming.
The constraints of the duals explicitly indicate the online decisions
and naturally lead to competitive algorithms. 
Second, we use a dual-fitting approach, which also based on the weak duality, 
to study problems which are unlikely to admit convex relaxations. Through the analysis,
we show an interesting feature in which 
primal-dual gives idea for designing algorithms while the analysis 
is done by dual-fitting.

We illustrate the advantages and the flexibility of the approaches through problems 
in different setting: from single machine 
to unrelated machine environments, from typical competitive analysis to the one with
resource augmentation, from convex relaxations to non-convex relaxations.
\end{abstract}

\newpage 
\tableofcontents

\section{Introduction}	\label{sec:intro}

In the online setting, items arrive over time 
and one must determine how to serve items in order to optimize a quality of 
service without the knowledge about future. 
A popular measure for studying the performance of online algorithms is 
\emph{competitive ratio} in the model of the worst-case analysis. 
An algorithm is said to be $c$-\emph{competitive}
if for any instance its objective value is within factor $c$ of the optimal offline algorithm's one.  
Moreover, to remedy the limitation of pathological instances in the worst-case analysis, 
there is other model called \emph{resource augmentation} \cite{KalyanasundaramPruhs00:Speed-is-as-powerful}. 
In the latter, online algorithms are given an extra power and are compared to the optimal 
offline algorithm without that additional resource. This model has successfully provided theoretical evidence for 
heuristics with good performance in practice, especially in online scheduling where 
jobs arrive online and need to be processed on machines. 
We say a scheduling algorithm is \emph{$s$-speed $c$-competitive}
if for any input instance the objective value of the algorithm with machines of 
speed $s$ is at most $c$ times the objective value of the optimal offline scheduler
with unit speed machines. 

The most successful tool until now to analyze online algorithms 
is the potential function method. 
Potential functions have been designed to show that the corresponding algorithms behave
well in an amortized sense. However, designing such potential functions is far from trivial and often
yields little insight about how to design such potential functions and algorithms for related problems. 

Recently, interesting approaches 
\cite{AnandGarg12:Resource-augmentation,GuptaKrishnaswamy12:Online-Primal-Dual,Thang13:Lagrangian-Duality} 
based on mathematical programming have been presented 
in the search for a principled tool to design and analyze online scheduling algorithms. 
The approaches give insight about the nature of many scheduling problems,
hence lead to algorithms which are usually simple and competitive.

\subsection{Approaches and Contribution}	
In this paper, we present approaches based on Lagrangian duality in designing and analyzing 
algorithms for online energy-efficient scheduling problems. 

\paragraph{Primal-Dual Approach.} We first show a primal-dual framework for a general online 
convex assignment problem and its applications to online scheduling. In the framework, the algorithm 
decisions are interactively guided by the dual variables in the primal-dual sense. 

The online convex assignment problem consists of a set of agents and a set of items which arrive over time.
At the arrival of item $j$, the item needs to be (fractionally) assigned to some agents $i$. 
Let $x_{ij}$ is the amount of $j$ assigned to $i$.
The problem is to minimize $\sum_{i} f_{i} (\sum_{j} a_{ij}x_{ij} )$  
under some constraints $g_{i} (\sum_{j} b_{ij} x_{ij}) \leq 0$
and $h_{j}(\sum_{i} c_{ij} x_{ij}) \leq 0$ for every $i,j$ where functions $f_{i},g_{i},h_{j}$'s are convex.
In offline setting, the optimal solutions are completely characterized by 
the KKT conditions (see \cite{BoydVandenberghe04:Convex-Optimization} for example).
However, for online setting, the conditions could not be satisfied due to the lack of knowledge
about the future. 

Our approach is the following. We first consider the problem as a primal convex mathematical program. 
Then we derive a Lagrangian dual program by the standard Lagrangian duality. 
Instead of analyzing directly the corresponding Lagrangian functions
where in general one can not disentangle the objective and the constraints
as well as the primal and dual variables,
we exploit the convexity property of given functions and construct 
a dual program. In the latter, dual variables are separated from the primal ones. 
The construction is shown in Section \ref{sec:framework}.
As the price of the separation procedure, the strong duality 
property is not guaranteed. However, the weak duality always holds and that is 
crucial (and sufficient) to deduce a lower bound for the given problem.
The dual construction is not standard in optimization but the goal is to derive 
duals with intuitive structures similar to the ones in linear programming which 
are easier to work with.
An advantage of the approach lies in the dual program
in which the constraints could be maintained online. Moreover, the dual constraints 
explicitly indicate the online decisions and naturally lead to a competitive algorithm
in the primal-dual sense.   

The dual construction is inspired by \cite{DevanurJain12:Online-matching}
which used the primal-dual approach for online
matching with concave return. In fact, Devanur and Jain \cite{DevanurJain12:Online-matching}
considered a matching problem with convex objective function and \emph{linear} constraints.
They linearized the objective function and derived the dual in the sense of linear programming. 
In our framework, we consider problems with convex objective and 
\emph{convex} constraints\footnote{The primal-dual machinery of linear programming cannot be applied anymore.}. 
We then construct our duals from Lagrangian dual programs. 
Informally, the construction could be seen as a linearization of Lagrangian duals. 

\paragraph{Applications.}	 We illustrate the advantages and the flexibility of the approach through 
online scheduling problems related to throughput. In the setting, there are a set of 
unrelated machines. Each job $j$ is released at time $r_{j}$, has deadline $d_{j}$, a value $a_{j}$ and 
a processing volume $p_{ij}$ if job $j$ is executed in machine $i$. 
Jobs could be executed preemptively but migration is not allowed, 
i.e., no job could be executed in more than one machine. At any time $t$, the scheduler 
has to choose an assignment of jobs to machines and the speed of each machine 
in order to process such jobs. The energy cost of machine $i$ is 
$\int_{0}^{\infty} P(s_{i}(t))dt$ where $P$ is a given convex energy power and $s_{i}(t)$
is the speed of machine $i$ at time $t$. Typically, $P(z) = z^{\alpha}$ for some constant $\alpha \geq 1$. 
In the setting, we look for competitive and energy-efficient algorithms. The following objectives 
are natural ones representing the tradeoff between value and energy. 
The first objective is to minimize energy cost plus the \emph{lost value} 
--- which is the total value of uncompleted jobs. The second objective is to maximize 
 the total value of completed jobs minus the energy cost.

\begin{enumerate}
	\item For the objective of minimizing energy plus the lost value 
	we derive a primal-dual algorithm for the single machine setting. 
	The competitive ratio is characterized by a system
	of differential equations. For a specific case where $P(z) = z^{\alpha}$, the competitive
	ratio turns out to be $\alpha^{\alpha}$ (and recognize the result in \cite{KlingPietrzyk13:Profitable-Scheduling}. 
	With the primal-dual framework, the result is more general and the analysis is simpler.
	\item For the objective of maximizing the total value of completed jobs minus the energy cost,
	it has been shown that without resource augmentation 
	no algorithm has bounded competitive ratio even for a single machine \cite{PruhsStein10:How-to-Schedule-When}.
	We study the problem for unrelated machines in the resource augmentation model. We give a primal-dual 
	algorithm which is $(1+\epsilon)$-speed and $1/\epsilon$-competitive for 
	every $\epsilon \geq \epsilon(P) > 0$ where $\epsilon(P)$ depends on function $P$.
	For typical function $P(z) = z^{\alpha}$, $\epsilon(P) = 1 - \alpha^{-1/\alpha}$
	which is closed to 0 for large $\alpha$.
\end{enumerate}

Note that for these problems, we consider relaxations with convex objectives and linear constraints.

\paragraph{Dual-fitting approach.} 
An essential point of the primal-dual approach is a \emph{convex} relaxation 
of the corresponding problems. However, some problems unlikely admit such a relaxation. 
To overcome that difficulty, we follow the dual-fitting approach for \emph{non-convex}
programming presented in \cite{Thang13:Lagrangian-Duality}. A summary of the approach is as 
follows.

Given a problem, formulate a relaxation which is \emph{not} necessarily convex and its Lagrangian dual. 
Next construct dual variables such that the Lagrangian dual has objective value within a desired factor 
of the primal one (due to some algorithm). Then by the standard Lagrangian weak 
duality\footnote{For completeness, the proof of weak duality is given in the appendix} 
for mathematical programming, the competitive ratio follows. Since the Lagrangian weak 
duality also holds in the context of calculus of variations, the approach could be applied 
for the unknowns which are not only variables but also functions.  

Let $L(x,\lambda)$ be the Lagrangian function with primal and dual variables $x$
and $\lambda$, respectively. 
Let $\mathcal{X}$ and $\mathcal{Y}$ are feasible sets of $x$ and $\lambda$.
Intuitively, the approach could be seen as a game between an algorithm
and an adversary. The algorithm chooses dual variables $\lambda^{*}$ in such a way that whatever 
the choice (strategy) of the adversary, the value $\min_{x \in \mathcal{X}} L(x,\lambda^{*})$
is always within a desirable factor $c$ of the objective due to the algorithm.
We emphasize that $\min_{x \in \mathcal{X}} L(x,\lambda^{*})$ is taken over $x$ feasible solutions 
of the primal.

An advantage of the approach is the flexibility of the formulation. 
As convexity is not required, we can come up with a direct and natural relaxation for the problem.
The main core of the approach is to determine the dual variables
and to prove the desired competitive ratio. Determining such dual variables is the crucial 
step in the analysis. However, the dual variables usually have intuitive interpretations which are useful  
to figure out appropriate values of such variables. Besides, the dual variables are not interactively 
constructed as in the primal-dual approach --- this is the main difference between two approaches.
Nevertheless, for some problems one could informally separate the convex and non-convex parts. Then 
the dual solution for the original problem may be derived from 
a dual solution for the convex part (constructed using primal-dual) by
adding some correcting terms due to the non-convex part. 


\paragraph{Applications.}
We consider the general energy model: speed scaling with power down. 
There is a machine which can be set either in the sleep state or in the active 
state. Each transition of the machine from the sleep state to the active one has 
cost $A$, which represents the \emph{wake-up} cost. 
In the sleep state, the energy consumption of the machine is 0. 
The machine, in its active state, can choose a speed $s(t)$
to execute jobs. The power energy consumption of 
the machine at time $t$ in its active state is $P(s(t)) = s(t)^{\alpha} + g$
where $\alpha \geq 1$ and $g \geq 0$ are characteristic parameters of 
the machine. Hence, the consumed energy (without wake-up cost) of the machine
is $\int_{0}^{\infty} P(s(t))dt$ where the integral is taken during the machine's 
active periods. We decompose the latter into \emph{dynamic energy} 
$\int_{0}^{\infty} s^{\alpha}(t)dt$ and \emph{static energy} 
$\int_{0}^{\infty}g dt$ (where again the integrals are taken during
active periods).
Jobs arrive over time, a job $j$ is released at time $r_{j}$, has weight $w_{j}$ 
and requires $p_{j}$ units of processing volume if it is processed 
on machine $i$. A job could be processed preemptively.
At any time, the scheduler has to determine the state and the speed of every machine (it it is active) 
and also a policy to execute jobs.
We consider two problems in the setting.

In the first problem, each job $j$ has additionally a deadline $d_{j}$ by which 
the job has to be completed. The objective is to minimize the total consumed
energy.  

In the second problem, jobs do not have deadline. Let $C_{j}$ 
be the completion time of the job $j$. 
The \emph{flow-time} of a job $j$ is defined as $C_{j} - r_{j}$, 
which represented the waiting time of $j$ on the server.
The objective is to minimize the total weighted flow-time of all jobs plus the total 
energy. 

As posed in \cite{Albers10:Energy-efficient-algorithms}, an important direction in energy-efficient scheduling
is to design competitive algorithms for online problems
in the general model of speed scaling with power down. 
Attempting efficient algorithms in the general energy model, one encounters 
the limits of current tools which have been successfully applied 
in previous energy models. That results in a few work on the model
\cite{AlbersAntoniadis12:Race-to-idle:,BampisDurr12:Speed-scaling,HanLam10:Deadline-scheduling},
in contrast to the widely-studied models of speed scaling only or power down only. 
The potential function method, as mentioned earlier, yield little insight
on the construction of new algorithms in this general setting.     
Besides, different proposed approaches based on the duality of 
mathematical programming
\cite{AnandGarg12:Resource-augmentation,GuptaKrishnaswamy12:Online-Primal-Dual,DevanurHuang14:Primal-Dual} 
require that the problems admit linear of convex relaxations. However, it is unlikely to 
formulate problems in the general energy model as convex 
programs.  

Our results in the general energy model are the following.

\begin{enumerate}
	\item For the problem of minimizing the total consumed energy,
	we formulate a natural \emph{non-convex} relaxation using the Dirac delta function. 
	We first revisit a special case with no wake-up cost under the primal-dual view. In this case,
	the relaxation becomes convex and our framework could be applied to show a
	$\alpha^{\alpha}$-competitive algorithm (the algorithm is in fact algorithm 
	\textsc{Optimal Available} \cite{YaoDemers95:A-Scheduling-Model}).
	Next we study the general problem with wake-up cost. 
	The special case effectively gives ideas to determine the machine speed in active state. 
	Thus we consider an algorithm in which the procedure maintaining the machine 
	speed in active state follows the ideas in the special case. The algorithm turns out to be algorithm 
	\textsc{Sleep-aware Optimal Available} (SOA) \cite{HanLam10:Deadline-scheduling} with different description 
	(due to the primal-dual view). \citet{HanLam10:Deadline-scheduling} proved that SOA has competitive ratio 
	$\max\{4,\alpha^{\alpha}+2\}$. We prove that SOA is indeed  
	$\max\{4,\alpha^{\alpha}\}$-competitive by the dual-fitting technique.
	Although the improvement is slight, the analysis is 
	\emph{tight}\footnote{The algorithm has competitive ratio exactly $\alpha^{\alpha}$
	even without wake-up cost \cite{BansalKimbrel07:Speed-scaling}.}
	and it suggests that the duality-based approach is seemingly a right tool for online scheduling. 
	Through the problem, we illustrate an interesting feature in the construction of  
	algorithms for non-convex relaxations. The primal-dual framework
	gives ideas for the design of an algorithm while the analysis is done using dual-fitting technique.     
	\item For the problem of minimizing energy plus weighted flow-time, we derive a $O(\alpha/\log \alpha)$-competitive
	algorithm using the dual fitting framework; that matches the best known competitive ratio (up to a constant)
	for the same problem in the restricted speed scaling model 
	(where the wake-up cost and the static energy cost are 0). 
	Informally, the dual solutions are constructed as the combination of a solution 
	for the convex part of the problem and a term that represents the lost due to 
	the non-convex part. Building upon the salient ideas of the previous analysis,
	we manage to show the competitiveness of the algorithm. 
\end{enumerate}

\subsection{Related work}


In the search for principled methods to design and analyze online problems,
especially in online scheduling, interesting approaches 
\cite{AnandGarg12:Resource-augmentation,GuptaKrishnaswamy12:Online-Primal-Dual,Thang13:Lagrangian-Duality} 
based on mathematical programming have been presented. 
The approaches give insight about the nature of many scheduling problems,
hence lead to algorithms which are usually simple and competitive
\cite{AnandGarg12:Resource-augmentation,GuptaKrishnaswamy12:Online-Primal-Dual,Thang13:Lagrangian-Duality,DevanurHuang14:Primal-Dual,ImKulkarni14:Competitive-Algorithms,ImKulkarni14:SELFISHMIGRATE:-A-Scalable}.

\citet{AnandGarg12:Resource-augmentation} was the first who proposed 
studying online scheduling by linear (convex) 
programming and dual fitting. By this approach, they gave 
simple algorithms and simple analyses with improved performance
for problems where the analyses based on potential functions are complex or 
it is unclear how to design such functions. Subsequently, 
Nguyen \cite{Thang13:Lagrangian-Duality} generalized the approach
in \cite{AnandGarg12:Resource-augmentation} and proposed to study online scheduling 
by non-convex programming and the weak Lagrangian duality. Using that technique,
\cite{Thang13:Lagrangian-Duality} derive competitive algorithms 
for problems related to weighted flow-time.

\citet{BuchbinderNaor09:The-Design-of-Competitive}
presented the primal-dual method for online packing and covering problems.
Their method unifies several previous
potential function based analyses and is a powerful tool to design and analyze 
algorithms for problems with linear relaxations.
\citet{GuptaKrishnaswamy12:Online-Primal-Dual}
gave a primal-dual algorithm for a general class of scheduling problems with 
cost function $f(z) = z^{\alpha}$. \citet{DevanurJain12:Online-matching}
also used the primal-dual approach to derive optimal competitive ratios for online
matching with concave return. 
The construction of dual programs in \cite{DevanurHuang14:Primal-Dual,DevanurJain12:Online-matching}
is based on convex conjugates and Fenchel duality for primal convex programs
in which the objective is convex and the constraints are \emph{linear}. 


An interesting quality of service in online scheduling is the tradeoff between 
energy and throughput. The online problem to minimize the consumed energy plus lost values 
with the energy power $P(z) = z^{\alpha}$ is first studied by \cite{ChanLam10:Tradeoff-between}
where a $(\alpha^{\alpha} + 2 e \alpha)$-competitive algorithm is given for a single machine.
Subsequently, \citet{KlingPietrzyk13:Profitable-Scheduling} derived an 
improved $\alpha^{\alpha}$-competitive for identical machines with migration using the technique in 
\cite{GuptaKrishnaswamy12:Online-Primal-Dual}. 
The online problem to maximize the total value of completed jobs minus the consumed energy
for a single machine has been considered in
\cite{PruhsStein10:How-to-Schedule-When}. \citet{PruhsStein10:How-to-Schedule-When} proved that 
the competitive ratio without resource augmentation is unbounded and gave 
an $(1+\epsilon)$-speed, $O(1/\epsilon^{3})$-competitive algorithm for a single machine.
 
The objective of minimizing the total flow-time plus energy has been widely studied in speed scaling energy model. 
For a single machine, \citet{BansalChan09:Speed-scaling}
gave a $(3+\epsilon)$-competitive algorithm. Besides, they also proved a  
$(2+\epsilon)$-competitive algorithm for minimizing total \emph{fractional} weighted flow-time plus
energy. Their results hold for a general class of convex power functions. Those results also 
imply an $O(\alpha/\log \alpha)$-competitive algorithm for weighted flow-time plus energy 
when the energy function is $s^{\alpha}$. Again, always based on linear programming and dual-fitting, 
\citet{AnandGarg12:Resource-augmentation} proved an $O(\alpha^{2})$-competitive algorithm 
for unrelated machines. Subsequently, Nguyen \cite{Thang13:Lagrangian-Duality} and
\citet{DevanurHuang14:Primal-Dual} presented an $O(\alpha/\log \alpha)$-competitive algorithms 
for unrelated machines by dual fitting and primal dual approaches, respectively. It turns out that the different
approaches lead to the same algorithm. To the best of our knowledge, this objective is not studied 
in the speed scaling with power down energy model.

In the speed scaling with power down energy model, all previous papers
considered the problem of minimizing the energy consumption on a single machine. 
\citet{IraniShukla07:Algorithms-for-power} was the first 
who studied the problem in online setting and derived an algorithm with competitive ratio 
$(2^{2\alpha-2}\alpha^{\alpha} + 2^{\alpha-1} + 2)$. Subsequently, 
\citet{HanLam10:Deadline-scheduling} presented an algorithm 
which is $\max\{4,\alpha^{\alpha}+2\}$-competitive.
In offline setting, the problem is recently showed to be NP-hard 
\cite{AlbersAntoniadis12:Race-to-idle:}. Moreover, \citet{AlbersAntoniadis12:Race-to-idle:}
also gave a 1.171-approximation algorithm, which improved
the 2-approximation algorithm in \cite{IraniShukla07:Algorithms-for-power}. 
If the instances are agreeable then the problem is polynomial 
\cite{BampisDurr12:Speed-scaling}.

\subsection{Organization}
The paper is organized as follows. In Section \ref{sec:framework}, we introduce the 
online convex assignment problem and present a primal-dual framework for this problem.
In Section \ref{sec:energy+values} and Section \ref{sec:values-energy}, we apply the framework
to derive primal-dual algorithms for problems related to the tradeoff between energy and value. 
In Section \ref{sec:4S-energy} and Section \ref{sec:4S-energy+flow}, we study problems
in the speed scaling with power down model using the dual-fitting approach.  
In the former, we study the problem of minimizing energy and 
in the latter we consider the problem of minimizing 
the total energy plus weighted flow-time. In the beginning of each section, we 
restate the considered problem in a short description.

\section{Framework for Online Convex Assignment}		\label{sec:framework}
  
Consider the assignment problem where items $j$ arrive online 
and need to be (fractionally) assigned to some agents $i$ with the following objective and constraints.   
  \begin{alignat}{3}	\label{prog:primal}
    \text{min} \quad P(x) :=  \sum_{i} f_{i}  &\biggl( \sum_{j} a_{ij}x_{ij} \biggl)  \tag{$\mathcal{P}$} \\
	\text{subject to} \quad   g_{i}\biggl(\sum_{j} b_{ij} x_{ij}\biggl) &\leq 0  \qquad \forall i	\notag \\
					     h_{j}\biggl(\sum_{i} c_{ij} x_{ij}\biggl) &\leq 0 \qquad \forall j 	\notag \\ 
	    				     x_{ij} &\geq 0 \qquad \forall i,j	\notag 
  \end{alignat}
where $x_{ij}$ indicates the amount of item $j$ assigned to agent $i$ 
and functions $f_{i},g_{i},h_{j}$ are convex, differential for every $i,j$ and 
$a_{ij}, b_{ij} \geq 0$. Denote $k \prec j$ if item $k$ is released before item $j$. 

Let $\mathcal{X}$ be the set of feasible solutions of $(\mathcal{P})$. 
The Lagrangian dual is
$
\max_{\lambda,\gamma \geq 0} \min_{x \in \mathcal{F}} L(x,\lambda,\gamma)
$ 
where $L$ is the following Lagrangian function 
\begin{align*}
L(x,\lambda,\gamma) &= \sum_{i} f_{i}\biggl( \sum_{j} a_{ij}x_{ij} \biggl) 
+ \sum_{i} \lambda_{i} g_{i}\biggl(\sum_{j} b_{ij} x_{ij}\biggl)
+ \sum_{j} \gamma_{j} h_{j}\biggl(\sum_{i} c_{ij} x_{ij}\biggl) \notag \\
&\geq \sum_{i,j} (x_{ij} - x^{*}_{ij})\biggl[ a_{ij}f'_{i}\biggl( \sum_{k} a_{ik}x^{*}_{ik} \biggl) 
	+ \lambda_{i} b_{ij}g'_{i}\biggl(\sum_{k} b_{ik} x^{*}_{ik}\biggl)  
	+ \gamma_{j}c_{ij} h'_{j}\biggl(\sum_{i'} c_{i'j} x^{*}_{i'j}\biggl) \biggl] \notag \\
& \qquad +  \sum_{i} f_{i}\biggl( \sum_{j} a_{ij}x^{*}_{ij} \biggl)
	+ \sum_{i} \lambda_{i} g_{i}\biggl(\sum_{j} b_{ij} x^{*}_{ij}\biggl)
	+ \sum_{j} \gamma_{j} h_{j}\biggl(\sum_{i} c_{ij} x^{*}_{ij}\biggl)  \\
&\geq \sum_{i,j} (x_{ij} - x^{*}_{ij})\biggl[ a_{ij}f'_{i}\biggl( \sum_{k \prec j} a_{ik}x^{*}_{ik} \biggl) 
	+ \lambda_{i} b_{ij}g'_{i}\biggl(\sum_{k \prec j} b_{ik} x^{*}_{ik}\biggl)  
	+ \gamma_{j}c_{ij} h'_{j}\biggl(\sum_{i'} c_{i'j} x^{*}_{i'j}\biggl) \biggl] \notag \\
& \qquad +  \sum_{i} f_{i}\biggl( \sum_{j} a_{ij}x^{*}_{ij} \biggl)
	+ \sum_{i} \lambda_{i} g_{i}\biggl(\sum_{j} b_{ij} x^{*}_{ij}\biggl)
	+ \sum_{j} \gamma_{j} h_{j}\biggl(\sum_{i} c_{ij} x^{*}_{ij}\biggl)  
\end{align*}
where the inequalities holds for any $x^{*}$ due the convexity of functions $f_{i},g_{i},h_{j}$'s.
In the first inequality, we use $f_{i}(z) \geq f_{i}(z^{*}) + (z-z^{*})f'_{i}(z^{*})$ 
(similarly for functions $g_{i},h_{j}$'s) and in the second inequality, 
we use the monotonicity of $f'_{i}$ (and similarly for $g'_{i}$).
Denote 
\begin{align*}
 M(x,x^{*},\lambda,\gamma) &:= \sum_{i,j} (x_{ij} - x^{*}_{ij}) \biggl[ a_{ij}f'_{i}\biggl( \sum_{k \prec j} a_{ik}x^{*}_{ik} \biggl) 
	+ \lambda_{i} b_{ij}g'_{i}\biggl(\sum_{k \prec j} b_{ik} x^{*}_{ik}\biggl)  
	+ \gamma_{j}c_{ij} h'_{j}\biggl(\sum_{i} c_{ij} x^{*}_{ij}\biggl) \biggl] \\
 N(x^{*},\lambda, \gamma) &:= \sum_{i} f_{i}\biggl( \sum_{j} a_{ij}x^{*}_{ij} \biggl)
	+ \sum_{i} \lambda_{i} g_{i}\biggl(\sum_{j} b_{ij} x^{*}_{ij}\biggl)
	+ \sum_{j} \gamma_{j} h_{j}\biggl(\sum_{i} c_{ij} x^{*}_{ij}\biggl)  
\end{align*}
We have 
\begin{align} 	\label{framework-Lagrangian}
L(x,\lambda,\gamma) \geq M(x,x^{*},\lambda,\gamma) + N(x^{*},\lambda,\gamma)
\end{align}
Intuitively, one could imagine that $x^{*}$ is the solution of an algorithm (or a function 
on the solution of an algorithm). We emphasize that $x^{*}$ is \emph{not} a solution of 
an optimal assignment. The goal is to design 
an algorithm, which produces $x^{*}$ and derives dual variables $\lambda, \gamma$, in such
a way that the primal objective is bounded by a desired factor from the dual one.
   
Inequality (\ref{framework-Lagrangian}) naturally leads to the following idea of an algorithm. 
For any item $j$, we maintain the following invariants  
\begin{align*}
a_{ij}f'_{i}\biggl( \sum_{k \prec j} a_{ik}x^{*}_{ik} \biggl) 
	+ \lambda_{i} b_{ij}g'_{i}\biggl(\sum_{k \prec j} b_{ik} x^{*}_{ik}\biggl)  
	+ \gamma_{j}c_{ij} h'_{j}\biggl(\sum_{i} c_{ij} x^{*}_{ij}\biggl)
	&\geq 0 ~\forall i \\
a_{ij}f'_{i}\biggl( \sum_{k \prec j} a_{ik}x^{*}_{ik} \biggl) 
	+ \lambda_{i} b_{ij}g'_{i}\biggl(\sum_{k \prec j} b_{ik} x^{*}_{ik}\biggl)  
	+ \gamma_{j}c_{ij} h'_{j}\biggl(\sum_{i} c_{ij} x^{*}_{ij}\biggl)
	&= 0 ~\text{if } x^{*}_{ij} > 0 
\end{align*}
Whenever the invariants hold for every $j$,
$M(x,x^{*},\lambda,\gamma) \geq 0$ since $x_{ij} \geq 0$ for every $i,j$. 
Therefore, $L(x,\lambda,\gamma) \geq N(x^{*},\lambda,\gamma)$ and so the dual 
is lower-bounded by $N(x^{*},\lambda,\gamma)$, which does not depend anymore
on $x$. The procedure of maintaining the invariants dictate the decision $x^{*}$ 
of an algorithm and indicates the choice of dual variables. 

Consider the following dual
  \begin{alignat}{3}	\label{prog:dual}
    \text{max} \quad  N(x^{*},\lambda,\gamma)  \tag{$\mathcal{D}$} \\
	\text{subject to} \quad   a_{ij}f'_{i}\biggl( \sum_{k \prec j} a_{ik}x^{*}_{ik} \biggl) 
	+ \lambda_{i} b_{ij}g'_{i}\biggl(\sum_{k \prec j} b_{ik} x^{*}_{ik}\biggl)  
	+ \gamma_{j}c_{ij} h'_{j}\biggl(\sum_{i} c_{ij} x^{*}_{ij}\biggl)
		&\geq 0 \qquad \forall i,j 	\notag \\
	a_{ij}f'_{i}\biggl( \sum_{k \prec j} a_{ik}x^{*}_{ik} \biggl) 
	+ \lambda_{i} b_{ij}g'_{i}\biggl(\sum_{k \prec j} b_{ik} x^{*}_{ik}\biggl)  
	+ \gamma_{j}c_{ij} h'_{j}\biggl(\sum_{i} c_{ij} x^{*}_{ij}\biggl) 
	&= 0 ~\text{if } x^{*}_{ij} > 0 \qquad \forall i,j 	\notag \\ 
	    	  x^{*},\lambda,\gamma &\geq 0 \qquad \forall i,j	\notag
  \end{alignat}

\begin{lemma}[Weak Duality]
Let $OPT(\mathcal{P})$ and $OPT(\mathcal{D})$ be optimal values of primal program
$(\mathcal{P})$ and dual program $(\mathcal{D})$, respectively. Then
$OPT(\mathcal{P}) \geq OPT(\mathcal{D})$.
\end{lemma}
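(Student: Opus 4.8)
The plan is to establish the pointwise inequality $P(x) \ge N(x^*,\lambda,\gamma)$ for every primal-feasible $x$ and every dual-feasible triple $(x^*,\lambda,\gamma)$, and then pass to the infimum over the former and the supremum over the latter. The argument is a chain of three inequalities, of which only one is nonroutine; the heavy lifting (convexity and monotonicity) has already been done in deriving (\ref{framework-Lagrangian}), which I would take as given.

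First I would record the easy half of Lagrangian weak duality. Fix any primal-feasible $x$ and any $\lambda,\gamma \ge 0$. Since $x$ satisfies $g_i(\sum_j b_{ij}x_{ij}) \le 0$ and $h_j(\sum_i c_{ij}x_{ij}) \le 0$, and the multipliers are nonnegative, the two penalty sums in $L(x,\lambda,\gamma)$ are nonpositive; hence $L(x,\lambda,\gamma) \le P(x)$. This uses only primal feasibility and $\lambda,\gamma \ge 0$.

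The key step is to show $M(x,x^*,\lambda,\gamma) \ge 0$ whenever $(x^*,\lambda,\gamma)$ is dual-feasible and $x \ge 0$. Writing $M = \sum_{i,j}(x_{ij}-x^*_{ij})B_{ij}$, where $B_{ij}$ denotes the bracketed derivative expression appearing in the constraints of $(\mathcal{D})$, I would argue term by term. If $x^*_{ij}>0$, the second (equality) dual constraint forces $B_{ij}=0$, so the term vanishes; if $x^*_{ij}=0$, then $x_{ij}-x^*_{ij}=x_{ij}\ge 0$ while the first dual constraint gives $B_{ij}\ge 0$, so the term is nonnegative. Summing over all $i,j$ yields $M \ge 0$. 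Here the two dual constraints are used in exactly the complementary-slackness pattern that motivated their definition, and this is the part that requires care with the case split on the sign of $x_{ij}-x^*_{ij}$ --- it is the only genuine obstacle in the proof.

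Finally I would combine. Invoking (\ref{framework-Lagrangian}), $L(x,\lambda,\gamma) \ge M(x,x^*,\lambda,\gamma) + N(x^*,\lambda,\gamma) \ge N(x^*,\lambda,\gamma)$ by the previous step. Chaining with the easy half gives $P(x) \ge L(x,\lambda,\gamma) \ge N(x^*,\lambda,\gamma)$ for every primal-feasible $x$ and every dual-feasible $(x^*,\lambda,\gamma)$. Taking the infimum of the left-hand side over primal-feasible $x$ and the supremum of the right-hand side over dual-feasible triples yields $OPT(\mathcal{P}) \ge OPT(\mathcal{D})$, as claimed.
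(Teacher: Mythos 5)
Your proof is correct and takes essentially the same route as the paper: the paper's one-line argument chains the weak Lagrangian duality bound $OPT(\mathcal{P}) \geq \max_{\lambda,\gamma \geq 0}\min_{x \in \mathcal{X}} L(x,\lambda,\gamma)$ (proved pointwise in the appendix, exactly as in your first step) with $L \geq M + N \geq N$, where $M \geq 0$ follows from the dual constraints via the same case split on whether $x^{*}_{ij} > 0$ that you spell out. You have merely made explicit the details the paper leaves implicit in citing ``weak Lagrangian duality and the constraints of $(\mathcal{D})$.''
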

\begin{proof}
It holds that
$$
OPT(\mathcal{P}) \geq \max_{\lambda,\gamma \geq 0} \min_{x \in \mathcal{X}} L(x,\lambda,\gamma)
\geq N(x^{*},\lambda,\gamma)
$$
where the inequalities follow the weak Lagrangian duality and the constraints of $(\mathcal{D})$
for every feasible solution $x^{*},\lambda,\gamma$.
Therefore, the lemma follows.
\end{proof}

Hence, our framework consists of maintaining the invariants for every online item $j$ and
among feasible set of dual variables (constrained by the invariants) choose the ones 
which optimize the ratio between the primal and dual values. If an algorithm with output 
$x^{*}$ satisfies $P(x^{*}) \leq r N(x^{*},\lambda,\gamma)$ for some factor $r$ then 
the algorithm is $r$-competitive.

\section{Minimizing Total Energy plus Lost Values}	\label{sec:energy+values}

\paragraph{The problem.}
We are given a machine with a convex energy power $P$ and
jobs arrive over time. Each job $j$ is released at time $r_{j}$, has deadline $d_{j}$, processing volume 
$p_{j}$ and a value $a_{j}$. Jobs could be executed preemptively and at any time $t$, the scheduler 
has to choose a set of \emph{pending} jobs (i.e., $r_{j} \leq t < d_{j}$) and a machine speed $s(t)$ 
in order to process such jobs. The \emph{energy cost} of a schedule is 
$\int_{0}^{\infty} P(s(t))dt$. Typically,
$P(z) = z^{\alpha}$ for some constant $\alpha \geq 1$. The objective of the problem
is to minimize energy cost plus the \emph{lost value} --- which is the total value of uncompleted jobs.

\paragraph{Formulation.} Let $x_{j}$ and $y_{j}$ be variables indicating whether job $j$ is completed or it is not.
We denote variable $s_{j}(t)$ as the speed that the machine processes job $j$ at time $t$.  
The problem could be relaxed as the following convex program.
 
  \begin{alignat*}{3}
    \text{min} \quad \int_{0}^{\infty} P(s(t)) & dt + \sum_{j} a_{j} y_{j}  \\ 
	\text{subject to} \qquad \qquad s(t) &= \sum_{j} s_{j}(t)	\qquad &\forall t \\
						 x_{j} + y_{j} &\geq 1 \qquad &\forall j \\
					         \int_{r_{j}}^{d_{j}} s_{j}(t)dt & \geq  p_{j}x_{j}   \qquad &\forall j\\
					         x_{j}, y_{j}, s_{j}(t) &\geq 0 \qquad &\forall j, t
  \end{alignat*}
In the relaxation, the second constraint indicates that either job $j$ is completed or it is not. 
The third constraint guarantees the necessary amount of work done in order to complete job $j$.  

Applying the framework, we have the following dual.
$$
\max \int_{0}^{\infty} P\biggl(\sum_{j} v^{*}_{j}(t) \biggl)dt - \sum_{j} \lambda_{j} \int_{r_{j}}^{d_{j}} v^{*}_{j}(t)dt 
	+ \sum_{j} \gamma_{j}
$$
subject to
\begin{enumerate}
	\item For any job $j$, $\gamma_{j} \leq p_{j}\lambda_{j}$. 
	Moreover,  if $x^{*}_{j} > 0$ then $\gamma_{j} = p_{j}\lambda_{j}$. 
	\item For any job $j$, $\gamma_{j} \leq a_{j}$ and  if $y^{*}_{j} > 0$ then $\gamma_{j} = a_{j}$.
	\item For any job $j$ and any $t \in [r_{j},d_{j}]$, it holds that 
	$\lambda_{j} \leq P'(\sum_{k} v^{*}_{k}(t))$. 
	Particularly, if $v^{*}_{j}(t) > 0$ then $\lambda_{j} = P'(\sum_{k} v^{*}_{k}(t))$. 
\end{enumerate}
Note that $v^{*}_{j}(t)$ is not equal to $s^{*}_{j}(t)$ (the machine speed on job $j$ according to our algorithm) 
but it is a function depending on $s^{*}_{j}(t)$. That is the reason we use $v^{*}_{j}(t)$ instead of $s^{*}_{j}(t)$.
We will choose $v^{*}_{j}(t)$'s in order to optimize the competitive ratio. To simplify the notation, we drop out 
the star symbol in the superscript of every variable (if one has that). 

\paragraph{Algorithm.} The dual constraints naturally leads to the following algorithm. We first describe
informally the algorithm. 
In the algorithm, we maintain a variable $u_{j}(t)$ representing the 
\emph{virtual} machine speed on job $j$. The virtual speed on job $j$ means 
that job $j$ will be processed with that speed \emph{if} it is accepted; otherwise, 
the real speed on $j$ will be set to 0. 
Consider the arrival of job $j$. Observe that by the third dual constraint, 
we should always increase the machine speed on job $j$
at $\arg \min P'(v(t))$ in order to increase $\lambda_{j}$.
Hence, at the arrival of a job $j$, increase continuously the \emph{virtual} 
speed $u_{j}(t)$ of job $j$ at $\arg \min P'(v(t))$ for $r_{j} \leq t \leq d_{j}$.
Moreover, function $v(t)$ is also simultaneously 
updated as a function of $u(t) = \sum_{k \preceq j} u_{k}(t)$ 
according to a system of differential equations
(\ref{eq:PD-throughput+energy}) in order to optimize the competitive ratio.
The iteration on job $j$ terminates whether one of the first two constraints becomes tight. 
If the first one holds, then accept the job and set the real speed equal to 
the virtual one. Otherwise, reject the job.

Define $Q(z) := P(z) - zP'(z)$.
Consider the following system of differential equations with 
boundary conditions: $Q(v) = 0$ if $u = 0$.
\begin{align}	\label{eq:PD-throughput+energy}
	Q'(v) \frac{dv}{du} + P'(v) &\geq \frac{P'(u)}{r},  \notag \\ 
	(r-1) P'(u) + r Q'(v) \frac{dv}{du} &\geq 0, \\
	 \frac{dv}{du} &> 0, \notag
\end{align}
where $r$ is some constant. Let $r^{*} \geq 1$ be a smallest constant such that the system has 
a solution. 

The formal algorithm is given in Algorithm~\ref{algo:energy+values}. 

\begin{algorithm}[htbp]
\begin{algorithmic}[1] 
\STATE Initially, set $s(t), s_{j}(t)$, $u_{j}(t)$, $v(t)$ and $v_{j}(t)$ equal to 0 for every $j$.
\STATE Let $r^{*} \geq 1$ be the smallest constant such that  
	(\ref{eq:PD-throughput+energy}) has a solution. During the algorithm, keep $v(t)$
	as a solution of (\ref{eq:PD-throughput+energy}) with constant $r^{*}$ and $u(t) = \sum_{j}u_{j}(t)$
	for every time $t$.
\FOR{a job $j$ arrives}
	\STATE Initially, $u_{j}(t) \gets 0$.
	\WHILE{$\int_{r_{j}}^{d_{j}} u_{j}(t)dt < p_{j}$ \textbf{and} $\lambda_{j} p_{j} < a_{j}$}
		\STATE Continuously increase $u_{j}(t)$ at $\arg \min P'(v(t))$  for $r_{j} \leq t \leq d_{j}$
		and update $u(t) \gets \sum_{k \neq j} u_{k}(t) + u_{j}(t)$ and $v(t)$ (as a function of $u(t)$) and
		$\lambda_{j} \gets \min_{r_{j} \leq t \leq d_{j}} P'(v(t))$ simultaneously. 
	\ENDWHILE
	\STATE Set $v_{j}(t) \gets v(t) - \sum_{k \prec j} v_{k}(t)$.
	\IF{$\lambda_{j} p_{j} = a_{j}$ \textbf{and} $\int_{r_{j}}^{d_{j}} u_{j}(t) dt < p_{j}$}
		\STATE Reject job $j$.
		\STATE Set $\gamma_{j} \gets a_{j}$.
	\ELSE
		\STATE Accept job $j$.
		\STATE Set $s_{j}(t) \gets u_{j}(t)$, $s(t) \gets s(t) + s_{j}(t)$ 
			     and $\gamma_{j} \gets \lambda_{j}p_{j}$.
	\ENDIF
\ENDFOR
\end{algorithmic}
\caption{Minimizing the consumed energy plus lost values.}
\label{algo:energy+values}
\end{algorithm}
%

In the algorithm, machine $i$ processes accepted job $j$ with speed $s_{j}(t)$ at time $t$.
As the algorithm completes all accepted jobs, it is equivalent to state that the machine
processes accepted jobs in the earliest deadline first fashion with speed $s(t)$ at time $t$.

By the algorithm, the dual variables are feasible. In the following
we bound the values of the primal and dual objectives.

\begin{lemma}		\label{lem:throughput+energy}
It holds that 
$$
\int_{0}^{\infty} P(s(t))dt + \sum_{j} a_{j} y_{j}  
\leq r^{*} \left( \int_{0}^{\infty} P\biggl(\sum_{j} v_{j}(t) \biggl)dt 
	- \sum_{j} \lambda_{j} \int_{r_{j}}^{d_{j}} v_{j}(t)dt  + \sum_{j} \gamma_{j} \right)
$$
\end{lemma}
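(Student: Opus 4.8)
The plan is to prove the displayed inequality directly; since the text has already asserted that the constructed $\lambda,\gamma,v$ are dual-feasible, the Weak Duality lemma then turns it into $r^{*}$-competitiveness. First I would record the bookkeeping identities forced by the algorithm. By construction $\gamma_{j}=\lambda_{j}p_{j}$ for accepted jobs, and rejection happens exactly when $\lambda_{j}p_{j}=a_{j}$, so $\gamma_{j}=\lambda_{j}p_{j}$ for \emph{every} job and the lost value of a rejected job is $a_{j}y_{j}=a_{j}=\lambda_{j}p_{j}$; for an accepted job $p_{j}=\int_{r_{j}}^{d_{j}}u_{j}(t)\,dt$ and $s_{j}(t)=u_{j}(t)$, so its real processing contributes $\lambda_{j}p_{j}=\int\lambda_{j}u_{j}(t)\,dt$. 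Using the telescoping $\sum_{j}v_{j}(t)=v(t)$, the dual energy term is $\int P(v(t))\,dt$. These reductions reduce the goal to comparing the real energy $\int P(s(t))\,dt$, with $s(t)=\sum_{j\text{ accepted}}u_{j}(t)$, and the lump credits $\lambda_{j}p_{j}$ against the dual.

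The first substantive step is to lower-bound, at each time $t$, the dual quantity $P(v(t))-\sum_{j}\lambda_{j}v_{j}(t)$. Ordering the jobs that place work at $t$ and writing $v^{(j)}(t)$ for $v(t)$ just after job $j$'s water-filling loop, I have $v_{j}(t)=v^{(j)}(t)-v^{(j-1)}(t)$, and since work is added at $\arg\min P'(v(\cdot))$, $\lambda_{j}=P'(v^{(j)}(t))$ on the support of $u_{j}$. Expanding with $Q(z)=P(z)-zP'(z)$ gives, for each chunk, $[P(v^{(j)})-P(v^{(j-1)})]-\lambda_{j}v_{j}(t)=Q(v^{(j)})-Q(v^{(j-1)})+v^{(j-1)}[P'(v^{(j)})-P'(v^{(j-1)})]$, where the last bracket is nonnegative by convexity of $P$. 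Telescoping in $j$ and using the boundary condition $Q(v)=0$ at $u=0$ yields $P(v(t))-\sum_{j}\lambda_{j}v_{j}(t)\ge Q(v(t))$, so the dual is at least $\int Q(v(t))\,dt+\sum_{j}\lambda_{j}p_{j}$.

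It then remains to establish $\int P(s(t))\,dt\le r^{*}\int Q(v(t))\,dt+r^{*}\sum_{\text{acc}}\lambda_{j}p_{j}$, the rejected credits $\sum_{\text{rej}}\lambda_{j}p_{j}$ being absorbed on both sides since $r^{*}\ge 1$. I would prove this by a marginal analysis along the accumulating speed stack at a fixed $t$: parametrise by the total speed level $\mu\in[0,u(t)]$, with $v=v(\mu)$ governed by (\ref{eq:PD-throughput+energy}) and multiplier $P'(v)$; accepted levels build up $s(t)$ and earn credit $P'(v)$, while rejected levels earn no real energy. Differentiating the deficit $r^{*}[Q(v)+(\text{credit up to }\mu)]-P(\text{accepted mass up to }\mu)$ in $\mu$, on an accepted level the increment is $r^{*}Q'(v)\frac{dv}{du}+r^{*}P'(v)-P'(s)\ge r^{*}[Q'(v)\frac{dv}{du}+P'(v)]-P'(u)\ge 0$, which is exactly the first inequality of (\ref{eq:PD-throughput+energy}); on a rejected level the increment is $r^{*}Q'(v)\frac{dv}{du}\ge(1-r^{*})P'(u)$, which is the second inequality of (\ref{eq:PD-throughput+energy}). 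Anchored at $\mu=0$ by $Q(v)=0$, these keep the deficit nonnegative once the slack from the rejected jobs' unfinished volume and the retained nonnegative convexity term are accounted for; integrating over $t$ and summing over $j$ then gives the lemma.

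The cleanest parts are the $Q$-telescoping lower bound and the accepted-level estimate, which is verbatim the first ODE inequality. The main obstacle is the coupling created by rejected jobs: their work simultaneously inflates the stack height (hence the multipliers $\lambda$ of later jobs) and the penalty term $\sum_{j}\lambda_{j}v_{j}$, yet contributes nothing to the real energy $\int P(s)$. A naive pointwise comparison fails precisely on rejected levels, where the dual's energy-minus-penalty term decreases; the role of the second inequality of (\ref{eq:PD-throughput+energy}) is to cap this decrease, and the delicate point is to show that, together with the retained convexity term and the lost-value credits, this loss is always dominated, so the competitive factor comes out to exactly $r^{*}$.
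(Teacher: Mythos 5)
Your proposal is correct and follows essentially the same route as the paper's proof: it reduces the dual objective to $\int_{0}^{\infty} Q(v(t))\,dt + \sum_{j}\gamma_{j}$ and then charges marginal increases using the first inequality of (\ref{eq:PD-throughput+energy}) for accepted jobs and the second for rejected ones, exactly as the paper does in its induction over jobs. Your explicit $Q$-telescoping with the nonnegative convexity bracket is a slightly more careful justification of the reduction step that the paper asserts in one line (via $\lambda_{j} = P'(v(t))$ on the support of $v_{j}$), but otherwise your fixed-$t$ integration along the speed stack is the paper's per-job marginal analysis in a different order of integration.
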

\begin{proof}
By the algorithm, $\lambda_{j} = P'(v(t))$ at every time $t$ such that $v_{j}(t) > 0$
for every job $j$. Hence, it is sufficient to show that 
\begin{equation}	\label{ineq:throughput+energy-1}
\int_{0}^{\infty} P(s(t))dt + \sum_{j} a_{j} y_{j}  
\leq r^{*} \left( \int_{0}^{\infty} Q\biggl(\sum_{j} v_{j}(t) \biggl)dt  + \sum_{j} \gamma_{j} \right)
\end{equation}
where recall $Q(z) = P(z) - zP'(z)$.

We will prove the inequality (\ref{ineq:throughput+energy-1}) 
by induction on the number of jobs in the instance. For the base case where there is 
no job, the inequality holds trivially. Suppose that the inequality holds before the arrival of a job $j$.
In the following, we consider different cases.

\paragraph{Job $j$ is accepted.} 
Consider any moment $\tau$ in the while loop related to job $j$. We emphasize that $\tau$ is a moment 
in the execution of the algorithm, not the one in the time axis $t$.
Suppose that at moment $\tau$, an amount $du_{j}(t)$ is increased (allocated) at $t$. Note that 
$du_{j}(t) = du(t)$ as $u(t) = \sum_{j} u_{j}(t)$.
As $j$ is accepted, $y_{j} = 0$ and 
the increase at $\tau$ in the left hand-side of (\ref{ineq:throughput+energy-1})
is $P'(u(t))du(t)$ 

Let $v(t_{1},\tau_{1})$ be the value of $v(t_{1})$ at moment $\tau_{1}$ in 
the while loop.  By the algorithm, the dual variable $\gamma_{j}$ satisfies
\begin{align*}
\gamma_{j} = \lambda_{j} p_{j} 
&\geq \int_{\tau_{1}} \biggl( \int_{r_{j}}^{d_{j}} u_{j}(t_{2})dt_{2} \biggl) 
			\min_{r_{j} \leq t_{1} \leq d_{j}} P'(v(t_{1},\tau_{1}))d\tau_{1} \\
&= \int_{\tau_{1}} \int_{r_{j}}^{d_{j}} u_{j}(t_{2})
			\min_{r_{j} \leq t_{1} \leq d_{j}} P'(v(t_{1},\tau_{1})) dt_{2} d\tau_{1}
\end{align*}
where the inequality is due to the fact that at the end of the while loop, 
$\int_{r_{j}}^{d_{j}} u_{j}(t) dt = p_{j}$ (by the loop condition) and $P'$ is increasing. 
Therefore, at moment $\tau$, 
$d\gamma_{j} \geq \min_{r_{j} \leq t_{1} \leq d_{j}}P'(v(t_{1},\tau)) du_{j}(t) = P'(v(t))du(t)$
where the equality follows since $t \in \arg \min_{r_{j} \leq t_{1} \leq d_{j}} P'(v(t_{1},\tau))$.
Hence, the increase in the right hand-side of (\ref{ineq:throughput+energy-1}) is at least 
$r^{*}[Q'(v(t))dv(t) + P'(v(t))du(t)]$.

Due to the system of inequations~(\ref{eq:PD-throughput+energy}) and the choice of $r^{*}$, 
at any moment in the execution of the 
algorithm, the increase in the left hand-side of (\ref{ineq:throughput+energy-1}) is 
at most that in the right hand-side. Thus, the induction step follows.

\paragraph{Job $j$ is rejected.}
If $j$ is rejected then $y_{j} = 1$ and so the increase in the left hand-side of (\ref{ineq:throughput+energy-1}) 
is $a_{j}$. Moreover, by the algorithm $\gamma_{j} = a_{j}$.
So we need to prove that after the iteration of the for loop 
on job $j$, it holds that 
$(r^{*}-1) a_{j} + r^{*} \int_{0}^{\infty} \Delta Q(v(t)) dt \geq 0$.
As $j$ is rejected, 
$$
a_{j} = \lambda_{j}p_{j} > \int_{r_{j}}^{d_{j}} \min_{r_{j} \leq t_{1} \leq d_{j}} P'(v(t_{1})) u_{j}(t) dt
$$
Therefore, it is sufficient to prove that 
\begin{equation}	\label{ineq:energy+values-2}
(r^{*} - 1)\int_{r_{j}}^{d_{j}} \min_{r_{j} \leq t_{1} \leq d_{j}} P'(v(t_{1})) u_{j}(t) dt
+ r^{*} \int_{0}^{\infty} \Delta Q(v(t)) dt \geq 0
\end{equation}
Before the iteration of the while loop, the left-hand side of (\ref{ineq:energy+values-2}) 
is $0$. Similar as the analysis of the previous case, during the execution of the algorithm
the increase rate of the left-hand side is 
$(r^{*} - 1) P'(v(t)) du(t) + r^{*} Q'(v(t)) dv(t)$, which is non-negative by
equation~(\ref{eq:PD-throughput+energy}). Thus, inequality (\ref{ineq:energy+values-2})
holds.

By both cases, the lemma follows. 
\end{proof}

\begin{theorem}
The algorithm is $r^{*}$-competitive. Particularly,
if the energy power function $P(z) = z^{\alpha}$ then the algorithm 
is $\alpha^{\alpha}$-competitive
\end{theorem}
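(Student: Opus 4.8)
The plan is to separate the statement into a generic competitiveness bound, valid for any convex power $P$, and the explicit evaluation of $r^*$ when $P(z)=z^\alpha$. For the generic part I would simply chain the facts already in hand. The algorithm outputs a primal-feasible schedule whose cost is the left-hand side of Lemma~\ref{lem:throughput+energy}, together with dual variables $\lambda,\gamma$ and auxiliary speeds $v_j$ that are feasible for the dual (as observed immediately before the lemma). Lemma~\ref{lem:throughput+energy} bounds this cost by $r^*$ times the dual objective $N$; the Weak Duality Lemma gives $N\le OPT(\mathcal{P})$; and, since the displayed convex program is a relaxation of the scheduling problem (every feasible schedule maps to a feasible point of the same objective value), $OPT(\mathcal{P})\le OPT$, where $OPT$ is the optimal offline cost. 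Concatenating these three inequalities shows that the algorithm's cost is at most $r^*\,OPT$, i.e.\ $r^*$-competitiveness.

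For the second assertion I would substitute $P(z)=z^\alpha$ into the system (\ref{eq:PD-throughput+energy}) and look for a self-similar solution $v=cu$ with $c>0$, so that $dv/du=c$. Using $P'(z)=\alpha z^{\alpha-1}$, $Q(z)=-(\alpha-1)z^\alpha$ and $Q'(z)=-\alpha(\alpha-1)z^{\alpha-1}$, every term in the first two inequalities is proportional to $u^{\alpha-1}$; cancelling the common factor $\alpha u^{\alpha-1}$ reduces (\ref{eq:PD-throughput+energy}) to the two scalar conditions $c^{\alpha-1}\bigl(1-(\alpha-1)c\bigr)\ge 1/r$ and $1-1/r\ge(\alpha-1)c^{\alpha}$, while $dv/du=c>0$ is automatic. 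The boundary condition $Q(v)=0$ at $u=0$ is met since $v=cu$ vanishes at $u=0$. Choosing $c=1/\alpha$ makes the left-hand side of the first condition equal to $(1/\alpha)^{\alpha}=\alpha^{-\alpha}$, so it holds exactly when $r\ge\alpha^\alpha$; and at $r=\alpha^\alpha$ the second condition reduces to $\alpha^{1-\alpha}\le 1$, which is true for every $\alpha\ge 1$. Hence $v=u/\alpha$ is a valid solution of (\ref{eq:PD-throughput+energy}) with $r=\alpha^\alpha$, so $r^*\le\alpha^\alpha$, and combined with the first part the algorithm is $\alpha^\alpha$-competitive.

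Since the theorem asserts only the upper bound $\alpha^\alpha$ on the competitive ratio, the linear solution above already finishes the proof. To certify that $\alpha^\alpha$ is the best this analysis can give, I would additionally observe that $c\mapsto c^{\alpha-1}\bigl(1-(\alpha-1)c\bigr)$ has derivative $(\alpha-1)c^{\alpha-2}(1-\alpha c)$, which vanishes only at $c=1/\alpha$, where the value $\alpha^{-\alpha}$ is maximal; thus no linear profile admits $r<\alpha^\alpha$. The hard part is to rule out that some genuinely non-linear trajectory $v(u)$ satisfies (\ref{eq:PD-throughput+energy}) with a smaller $r$. Here I would exploit the scale invariance of the system for $P(z)=z^\alpha$ --- if $v(u)$ is a solution then so is $\beta^{-1}v(\beta u)$ for every $\beta>0$ --- so that, together with the boundary condition $v(0)=0$, the search reduces to scale-invariant (hence linear) profiles; alternatively, integrating the second inequality of (\ref{eq:PD-throughput+energy}) from $0$ to $U$ gives the pointwise bound $(r-1)P(U)+rQ(v(U))\ge 0$, i.e.\ $v(U)\le U\,[(r-1)/(r(\alpha-1))]^{1/\alpha}$, which constrains admissible trajectories and can be fed back into the first inequality. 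This optimality step is the only delicate point; it is not, however, needed for the stated theorem.
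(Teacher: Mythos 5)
Your proposal is correct and takes essentially the same approach as the paper: the generic $r^{*}$-competitiveness is exactly the chaining of Lemma~\ref{lem:throughput+energy} with the Weak Duality lemma and the fact that the convex program relaxes the scheduling problem, and for $P(z)=z^{\alpha}$ you verify a linear solution of the system (\ref{eq:PD-throughput+energy}) at $r=\alpha^{\alpha}$, which is all the paper's two-line proof asserts. One substantive remark: your solution $v=u/\alpha$ is the right one --- the paper writes ``$u(t)=v(t)/\alpha$'', i.e.\ $v=\alpha u$, which for $\alpha>1$ gives $Q'(v)\frac{dv}{du}+P'(v)=\alpha u^{\alpha-1}\alpha^{\alpha-1}\bigl(1-(\alpha-1)\alpha\bigr)<0$ and so violates the first inequality of (\ref{eq:PD-throughput+energy}); this is evidently a typo that your computation silently corrects, while your closing paragraph on ruling out nonlinear profiles is, as you yourself note, not needed for the stated bound (and its scale-invariance reduction to linear profiles is only a sketch, so it should not be presented as a proof of optimality).
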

\begin{proof}
The theorem follows by the framework
and Lemma \ref{lem:throughput+energy}.

If the power energy function $P(z) = z^{\alpha}$ 
then $r^{*} = \alpha^{\alpha}$ and $u(t) = v(t)/\alpha$ satisfy the system 
(\ref{eq:PD-throughput+energy}). Thus, the algorithm is $\alpha^{\alpha}$-competitive.
\end{proof}

\section{Maximizing the Total Value minus Energy}		\label{sec:values-energy}

\paragraph{The problem.}
We are given unrelated machines and jobs arrive over time. 
Each job $j$ is released at time $r_{j}$, has deadline $d_{j}$, a value $a_{j}$ 
and processing volume $p_{ij}$ if it is executed on machine $i$. 
Jobs could be executed preemptively but migration is not allowed, i.e., no job 
could be executed in more than one machine. 
At a time $t$, the scheduler has to choose a set of \emph{pending} jobs (i.e., $r_{j} \leq t < d_{j}$)  
to be processed on each machine, and 
the speeds $s_{i}(t)$'s for every machine $i$ to execute such jobs.  
The energy cost is $\sum_{i} \int_{0}^{\infty} P(s_{i}(t))dt$ where $P$ is a given convex power function.
The objective now is to maximize the total value of completed jobs minus the energy cost. 

We first give some idea about the difficulty of the problem even on a single machine. 
Assume that the adversary releases a job with small value but with a high energy 
cost in order to complete the job. One has to execute the job since otherwise the profit would be zero.
However, at the moment an algorithm nearly completes the job, the adversary releases
other job with much higher value and a reasonable energy demand. One need to switch immediately 
to the second job since otherwise either a high value is lost or the energy consumption becomes 
too much. It means that all energy spending on the first job is lost without any gain. 
Based on this idea, \cite{PruhsStein10:How-to-Schedule-When} has shown that 
without resource augmentation, the competitive ratio 
is unbounded. 

In this section, we consider the problem with resource augmentation, meaning that 
with the same speed $z$ the energy power for the algorithm is $P((1-\epsilon)z)$, 
whereas the one for the adversary is $P(z)$. 
Let $\epsilon(P) > 0$ be the smallest constant such that $zP'((1-\epsilon(P))z) \leq P(z)$ for 
all $z > 0$. For the typical energy power $P(z) = z^{\alpha}$, $\epsilon(P) = 1 - \alpha^{-1/\alpha}$
which is closed to 0 for $\alpha$ large.

\paragraph{Formulation.} Let $x_{ij}$ be variable indicating whether job $j$ is completed
in machine $i$. 
Let $s_{ij}(t)$ be the variable representing the speed
that machine $i$ processes job $j$ at time $t$.   
The problem could be formulated as the following convex program.
 
  \begin{alignat*}{3}
    \text{max} \quad  \sum_{i,j} a_{j} x_{ij} - & \sum_{i}\int_{0}^{\infty} P((1-\epsilon)s_{i}(t)) dt \\
	\text{subject to} \qquad \qquad s_{i}(t) &= \sum_{j} s_{ij}(t)	\qquad &\forall i, t \\
					\sum_{i} x_{ij} &\leq 1 \qquad &\forall j \\
					\int_{r_{j}}^{d_{j}} s_{ij}(t)dt & \geq p_{ij}x_{ij}   
											\qquad &\forall i,j \\
					     x_{ij}, s_{ij}(t) &\geq 0 \qquad &\forall i,j,t
  \end{alignat*}
Note that in the objective, by resource augmentation the consumed energy 
is $\sum_{i}\int_{0}^{\infty} P((1-\epsilon)s_{i}(t)) dt$. 
Applying the framework, we have the following dual.
$$
\min \sum_{j} \gamma_{j} - \sum_{i} \int_{0}^{\infty} P\biggl(\sum_{j} u^{*}_{ij}(t) \biggl)dt 
+ \sum_{i,j} \lambda_{ij} \int_{r_{j}}^{d_{j}} u^{*}_{ij}(t)dt 
$$
subject to
\begin{enumerate}
	\item For any machine $i$ and any job $j$, 
	$\gamma_{j} + p_{ij}\lambda_{ij} \geq a_{j}$. 
	\item For any machine $i$, %
	any job $j$ and any $t \in [r_{j},d_{j}]$, 
	$\lambda_{ij} \leq P'((1-\epsilon)\sum_{k} u^{*}_{ik}(t))$ where the sum is taken over all jobs $k$ released before $j$,
	i.e., $k \preceq j$. 
	Particularly, if $u^{*}_{ij}(t) > 0$ then $\lambda_{ij} = P'((1-\epsilon)\sum_{k \preceq j} u^{*}_{ik}(t))$. 
\end{enumerate}

Similar as in the previous section, the constraints naturally lead to Algorithm~\ref{algo:values-energy}.
In the algorithm and the analysis, to simplify the notation we drop out 
the star symbol in the superscript of every variable (if one has that).

\begin{algorithm}[ht]
\begin{algorithmic}[1] 
\STATE Initially, set $s(t)$ and $u(t)$ equal to 0.
\STATE The algorithm always runs accepted jobs with speed $s(t)$ in the earliest deadline 
	first fashion. 
\FOR{a job $j$ arrives}
	\STATE Initially, $u_{ij}(t) \gets 0$ for every $t$ and let $\mathcal{I}$ be the set of all machines,
		$\mathcal{I'} \gets \emptyset$.
	\WHILE{$\mathcal{I} \neq \emptyset$}
		\STATE For every $i \in \mathcal{I}$, 
		increase $u_{ij}(t)$ at $\arg \min P'(u_{i}(t))$ in the continuous manner for $r_{j} \leq t \leq d_{j}$
		and update $u_{i}(t) = \sum_{k \neq j} u_{ik}(t) + u_{ij}(t)$ and 
		$\lambda_{ij} = \min_{r_{j} \leq t \leq d_{j}} P'((1-\epsilon)u_{i}(t))$ simultaneously. 
		\IF{$\lambda_{ij} p_{ij} = a_{j}$ \textbf{and} $\int_{r_{j}}^{d_{j}} u_{ij}(t)dt < p_{j}$ for some machine $i$}
			\STATE $\mathcal{I} \gets \mathcal{I} \setminus \{i\}$.
		\ENDIF
		\IF{$\lambda_{ij} p_{ij} < a_{j}$ \textbf{and} $\int_{r_{j}}^{d_{j}} u_{ij}(t)dt = p_{j}$ for some machine $i$}
			\STATE $\mathcal{I'} \gets \mathcal{I'} \cup \{i\}$ and  $\mathcal{I} \gets \mathcal{I} \setminus \{i\}$.
		\ENDIF
	\ENDWHILE
	\IF{$\mathcal{I'} = \emptyset$}
		\STATE Reject job $j$ and set $\gamma_{j} \gets 0$ (note that $p_{ij}\lambda_{ij} = a_{j} ~\forall i$).
	\ELSE
		\STATE Let $i = \arg \min_{i' \in \mathcal{I'}} p_{i'j}\lambda_{i'j}$.
		\STATE Accept and assign job $j$ to machine $i$, i.e., $x_{ij} = 1$.
		\STATE Set $s_{ij}(t) \gets u_{ij}(t)$, $s_{i}(t) \gets s_{i}(t) + s_{ij}(t)$
				and $\gamma_{j} \gets a_{j} - \lambda_{j}p_{j}$.
	\ENDIF
\ENDFOR
\end{algorithmic}
\caption{Minimizing the throughput minus consumed energy.}
\label{algo:values-energy}
\end{algorithm}

\begin{lemma}
Dual variables constructed by Algorithm~\ref{algo:values-energy} are feasible.
\end{lemma}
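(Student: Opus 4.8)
The plan is to check the three groups of dual constraints in turn: nonnegativity of $\lambda_{ij}$ and $\gamma_j$, the speed constraint (constraint~2), and the assignment constraint (constraint~1). Nonnegativity of $\lambda_{ij}$ is immediate, since $\lambda_{ij} = \min_{r_j \le t \le d_j} P'((1-\epsilon) u_i(t))$ and $P' \ge 0$; nonnegativity of $\gamma_j$ will fall out of the case analysis for constraint~1. Throughout I use that once a machine $i$ is removed from $\mathcal{I}$ the algorithm stops increasing $u_{ij}$, so the pair $(u_{ij},\lambda_{ij})$ is \emph{frozen} at its value at the removal moment; this lets me reason about each machine at the instant it leaves the while loop.

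For constraint~2, the inequality $\lambda_{ij} \le P'((1-\epsilon)\sum_{k \preceq j} u_{ik}(t))$ for all $t \in [r_j,d_j]$ is immediate from the definition $\lambda_{ij} = \min_{t} P'((1-\epsilon) u_i(t))$ evaluated at the freezing moment, as $u_i(t) = \sum_{k \preceq j} u_{ik}(t)$ does not change afterwards. The equality claim --- that $u_{ij}(t)>0$ forces $\lambda_{ij} = P'((1-\epsilon)\sum_{k\preceq j}u_{ik}(t))$ --- is the water-filling property of the update rule: since on machine $i$ we only ever raise $u_{ij}$ at $\arg\min_t P'((1-\epsilon)u_i(t))$ and $P'$ is nondecreasing, this is equivalent to pouring mass into the lowest slot of the nonuniform floor $\sum_{k \prec j} u_{ik}(t)$. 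I would make this precise by observing that mass poured into the lowest slot raises it only until it meets the next, after which filled slots rise together; hence at the freezing moment all slots with $u_{ij}(t)>0$ share a common level $L$, slots with $u_{ij}(t)=0$ lie weakly above it, and $\lambda_{ij}=P'((1-\epsilon)L)$.

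For constraint~1, namely $\gamma_j + p_{ij}\lambda_{ij} \ge a_j$ for every $i$, I distinguish the two terminal cases of the for-loop. If $\mathcal{I}' = \emptyset$ then every machine left $\mathcal{I}$ through the first exit condition, so $p_{ij}\lambda_{ij} = a_j$ for all $i$; with $\gamma_j = 0$ the constraint holds with equality. If $\mathcal{I}' \ne \emptyset$, let $i^\star = \arg\min_{i'\in\mathcal{I}'} p_{i'j}\lambda_{i'j}$ be the chosen machine and set $\gamma_j = a_j - p_{i^\star j}\lambda_{i^\star j}$. Membership $i^\star \in \mathcal{I}'$ means $p_{i^\star j}\lambda_{i^\star j} < a_j$, giving $\gamma_j > 0$ and settling nonnegativity. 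The constraint for a generic machine $i$ then reduces to $p_{ij}\lambda_{ij} \ge p_{i^\star j}\lambda_{i^\star j}$, which I verify by splitting on whether $i \in \mathcal{I}'$: for $i \in \mathcal{I}'$ it is the defining minimality of $i^\star$, and for $i \notin \mathcal{I}'$ the machine exited through the first condition, so $p_{ij}\lambda_{ij} = a_j > p_{i^\star j}\lambda_{i^\star j}$.

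I expect the only genuinely delicate point to be the equality half of constraint~2: one must argue that the running minimum is preserved under the continuous arg-min updates even though the baseline $\sum_{k \prec j} u_{ik}(t)$ from earlier jobs is nonuniform across $t$, and that freezing a machine at the instant it leaves $\mathcal{I}$ does not spoil this property for the slots already filled. The remaining steps are book-keeping: confirming that the note in the rejection branch ($p_{ij}\lambda_{ij}=a_j$ for all $i$) is exactly what the first exit condition guarantees, and that all signs are consistent with $\lambda_{ij},\gamma_j \ge 0$.
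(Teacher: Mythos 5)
Your proof is correct and follows essentially the same route as the paper's: a direct verification that the algorithm's construction (the arg-min update rule for constraint~2, and the case split between rejection with $\gamma_j=0$ and acceptance with $\gamma_j = a_j - p_{i^\star j}\lambda_{i^\star j}$ for constraint~1) satisfies the dual constraints. The paper's proof is terser --- it simply cites the relevant lines of Algorithm~\ref{algo:values-energy} --- whereas you spell out the water-filling argument behind the equality case of constraint~2 and the reduction of constraint~1 to $p_{ij}\lambda_{ij} \geq p_{i^\star j}\lambda_{i^\star j}$, but the underlying argument is the same.
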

\begin{proof}
By the algorithm (line 6), $\lambda_{ij} \leq P'((1-\epsilon)\sum_{k} u_{ik}(t))$ 
where the sum is taken over all jobs $k$ released before $j$ ($k \preceq j$) and
if $u_{ij}(t) > 0$ then $\lambda_{ij} = P'((1-\epsilon)\sum_{k \preceq j} u_{ik}(t))$. 
Consider the first constraint. If $j$ is rejected then $p_{ij} \lambda_{ij} = a_{j}$ for every machine $i$.
Otherwise, by the assignment (line 17),
it always holds that $\gamma_{j} + p_{ij}\lambda_{ij} \geq a_{j}$ for every $i,j$.
\end{proof}

In the following we bound the values of the primal and dual objectives 
in the resource augmentation model.

\begin{lemma}		\label{lem:throughput-energy}
For every $\epsilon \geq \epsilon(P)$, it holds that 
$$ 
\frac{1}{\epsilon} \biggl( \sum_{i,j} a_{j} x_{ij}  - \sum_{i}\int_{0}^{\infty} P((1-\epsilon)s_{i}(t))dt \biggl)
\geq \sum_{j} \gamma_{j} - \sum_{i} \int_{0}^{\infty} P\biggl(\sum_{j} u_{ij}(t) \biggl)dt 
+ \sum_{i,j} \lambda_{ij} \int_{r_{j}}^{d_{j}} u_{ij}(t)dt 
$$
\end{lemma}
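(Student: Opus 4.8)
The plan is to mirror the inductive argument of Lemma~\ref{lem:throughput+energy}: induct on the number of released jobs, with the empty instance as the trivial base case, and show that when a job $j$ arrives the increase of the left-hand side (that is, $\frac1\epsilon$ times the algorithm's value-minus-energy) is at least the increase of the right-hand side (the dual objective). Concretely I would treat $\frac1\epsilon\bigl(\sum_{i,j}a_jx_{ij}-\sum_i\int P((1-\epsilon)s_i(t))dt\bigr)-\bigl(\sum_j\gamma_j-\sum_i\int P(u_i(t))dt+\sum_{i,j}\lambda_{ij}\int_{r_j}^{d_j}u_{ij}(t)dt\bigr)$ as a potential and argue it never decreases, where $u_i(t)=\sum_j u_{ij}(t)$ is the virtual profile the algorithm maintains (aggregating every job ever tried on $i$, not only those assigned to $i$). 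Unlike the previous section, no differential-equation system is needed; the gain factor $\frac1\epsilon$ will come directly from resource augmentation.

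The technical heart is one per-machine inequality read off from the water-filling structure of the while loop. When $j$ is tried on machine $i$, the mass $u_{ij}(t)$ is raised at $\arg\min_t P'((1-\epsilon)u_i(t))$, so at the end the filled region $\{t:u_{ij}(t)>0\}$ sits at a common level, $u_i(t)=L_i$ there, with $\lambda_{ij}=P'((1-\epsilon)L_i)$ and base $B_i(t):=\sum_{k\prec j}u_{ik}(t)\le L_i$. I claim $\lambda_{ij}\int_{r_j}^{d_j}u_{ij}(t)dt\le\int\bigl(P(u_i(t))-P(B_i(t))\bigr)dt$. Pointwise this reads $P'((1-\epsilon)L_i)(L_i-B)\le P(L_i)-P(B)$ for $0\le B\le L_i$; with $\phi(B)=P(L_i)-P(B)-P'((1-\epsilon)L_i)(L_i-B)$ one has $\phi''=-P''\le0$, $\phi(L_i)=0$, and $\phi(0)=P(L_i)-L_iP'((1-\epsilon)L_i)\ge0$ exactly by the defining property $zP'((1-\epsilon)z)\le P(z)$ of $\epsilon\ge\epsilon(P)$ (using $P(0)=0$); concavity then forces $\phi\ge0$ on $[0,L_i]$.

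With this inequality I would split into the two branches of the algorithm. If $j$ is rejected, the real schedule is untouched so the left-hand side does not move, $\gamma_j=0$, and the dual change is $\sum_i\bigl[\lambda_{ij}\int u_{ij}-\int(P(u_i)-P(B_i))\bigr]\le0$ by the per-machine inequality, so the potential rises. If $j$ is accepted on the machine $i_0$ minimizing $p_{ij}\lambda_{ij}$, the per-machine inequality makes the contributions of all $i\ne i_0$ nonpositive, so they may be dropped; on $i_0$ the completion $\int u_{i_0 j}dt=p_{i_0 j}$ together with $\gamma_j=a_j-\lambda_{i_0 j}p_{i_0 j}$ yields a dual increase of at most $a_j-\int(P(u_{i_0})-P(B_{i_0}))dt$. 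Letting $E$ denote the energy increment over the virtual base, the upper convexity bound $P((1-\epsilon)L_{i_0})-P((1-\epsilon)B)\le(1-\epsilon)\lambda_{i_0 j}(L_{i_0}-B)$ gives $E\le(1-\epsilon)\lambda_{i_0 j}p_{i_0 j}\le(1-\epsilon)a_j$, and since $\int(P(u_{i_0})-P(B_{i_0}))dt\ge0$ the target $\frac1\epsilon(a_j-E)\ge a_j-\int(P(u_{i_0})-P(B_{i_0}))dt$ follows from $(1-\epsilon)a_j\ge E$.

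The step I expect to be most delicate is the one glossed over just now in the acceptance case: the primal energy is charged on the real speed $s_{i_0}(t)$, which aggregates only jobs actually assigned to $i_0$, whereas the virtual base $B_{i_0}(t)$ used throughout the dual aggregates every job ever tried on $i_0$, so $s_{i_0}^{\mathrm{old}}\le B_{i_0}$ and the real and virtual energy increments are taken over different bases. I would close this gap by observing that for a fixed added profile $u_{i_0 j}(t)$ the marginal energy $b\mapsto P((1-\epsilon)(b+u_{i_0 j}))-P((1-\epsilon)b)$ is nondecreasing in the base $b$, its derivative being $(1-\epsilon)\bigl[P'((1-\epsilon)(b+u_{i_0 j}))-P'((1-\epsilon)b)\bigr]\ge0$ by convexity; hence the true (real) energy increment is at most $E$ and the comparison above is legitimate. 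Getting this monotonicity bookkeeping right, and verifying that the discarded non-chosen-machine terms really carry the correct sign, is where I would be most careful.
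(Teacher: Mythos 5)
Your proof is correct, and its skeleton (induction on released jobs, the rejected case being free, and the acceptance case getting the factor $1/\epsilon$ from $E \leq (1-\epsilon)\lambda_{i_0j}p_{i_0j} \leq (1-\epsilon)a_j$ via convexity of $P$ and the loop condition) is the same as the paper's. The one genuine divergence is how the two dual energy terms are discharged. The paper removes $\sum_{i,j}\lambda_{ij}\int_{r_j}^{d_j}u_{ij}(t)\,dt - \sum_i\int_0^\infty P(u_i(t))\,dt \leq 0$ up front, in a single global pointwise step: $\lambda_{ij} \leq P'((1-\epsilon)u_i(t))$ on $[r_j,d_j]$, $\sum_j u_{ij}(t)=u_i(t)$, and then $zP'((1-\epsilon)z)\leq P(z)$ applied at $z=u_i(t)$; this reduces the whole lemma to the cleaner inductive statement $\frac1\epsilon\bigl(\sum_{i,j}a_jx_{ij}-\sum_i\int P((1-\epsilon)s_i(t))\,dt\bigr)\geq\sum_j\gamma_j$, and notably needs no assumption on $P(0)$. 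You instead keep these terms inside the induction and retire them per job and per machine with your water-level concavity lemma ($\phi(L_i)=0$, $\phi(0)\geq 0$, $\phi$ concave). That lemma is valid but strictly needs $P(0)=0$, which you flag yourself: with $P(0)>0$ your endpoint $\phi(0)=P(L_i)-P(0)-L_iP'((1-\epsilon)L_i)$ can dip negative even when $\epsilon\geq\epsilon(P)$, whereas the paper's global estimate survives. So the paper's route is slightly more general; yours buys a sharper per-job accounting (the telescoping $\int(P(u_i)-P(B_i))$) at the cost of that extra hypothesis, which is harmless for the intended $P(z)=z^\alpha$. One point where your write-up is actually more careful than the paper: you explicitly justify, via monotonicity of the marginal increment $b\mapsto P((1-\epsilon)(b+u_{i_0j}))-P((1-\epsilon)b)$, why charging the primal energy against the virtual base $B_{i_0}\geq s^{\mathrm{old}}_{i_0}$ is legitimate; the paper silently computes the increment against the virtual profile $u_i$ even though the real speed $s_i$ aggregates only accepted jobs, so it relies on exactly this observation without stating it.
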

\begin{proof}
We have 
\begin{align*}
\sum_{i,j} &\lambda_{ij} \int_{r_{j}}^{d_{j}} u_{ij}(t)dt - \sum_{i} \int_{0}^{\infty} P(u_{i}(t))dt \\
&\leq \sum_{i} \int_{0}^{\infty} \biggl( \max_{j: r_{j} \leq t \leq d_{j}} \lambda_{ij} \biggl) \sum_{j} u_{ij}(t)dt 
	- \sum_{i} \int_{0}^{\infty} P(u_{i}(t))dt \\
&\leq \sum_{i} \int_{0}^{\infty} P'((1-\epsilon)u_{i}(t)) u_{i}(t)dt 
	- \sum_{i} \int_{0}^{\infty} P(u_{i}(t))dt
\leq 0
\end{align*}
In the second inequality, recall that $\sum_{j} u_{ij}(t) = u_{i}(t)$
and by the algorithm, $\lambda_{ij} = \min_{r_{j} \leq \tau \leq d_{j}} P'((1-\epsilon)u_{i}(\tau)) \leq P'((1-\epsilon)u_{i}(t))$
for any $r_{j} \leq t \leq d_{j}$. 
The last inequality follows by the 
definition of $\epsilon(P)$. Therefore, it is sufficient to prove that 
\begin{equation}	\label{eq:throughput-energy-1}
\frac{1}{\epsilon} \biggl( \sum_{i,j} a_{j} x_{ij}  - \sum_{i}\int_{0}^{\infty} P((1-\epsilon)s_{i}(t))dt \biggl)
\geq \sum_{j} \gamma_{j}
\end{equation}
We prove inequality (\ref{eq:throughput-energy-1})
by induction on the number of released jobs in the instance. For the base case where there is 
no job, the inequality holds trivially. Suppose that the inequality holds before the arrival of a job $j$.

If $j$ is rejected then $x_{ij} = 0, s_{ij}(t) = 0$ for every $i,t$ and $\gamma_{j} = 0$. 
Therefore, the increases in both side of inequality (\ref{eq:throughput-energy-1}) are 0.
Hence, the induction step follows. 

In the rest, assume that $j$ is accepted and let $i$ be the machine 
to which $j$ is assigned. 
We have
\begin{align*}
& \int_{0}^{\infty} \biggl[ P\biggl((1-\epsilon)u_{i}(t) \biggl) - P\biggl((1-\epsilon)(u_{i}(t) - u_{ij}(t)) \biggl) \biggl] dt
\leq (1-\epsilon) \int_{0}^{\infty} P'\biggl( (1-\epsilon)u_{i}(t) \biggl) u_{ij}(t) dt \\
&\leq (1-\epsilon) \int_{0}^{\infty} P'(u_{i}(t)) u_{ij}(t) dt 
= (1-\epsilon) \lambda_{ij} \int_{0}^{\infty} u_{ij}(t) dt 
= (1-\epsilon) \lambda_{ij} p_{ij} \leq (1-\epsilon) a_{j}
\end{align*}
The first inequality is due to the convexity of $P$
$$
P\biggl( (1-\epsilon)(u_{i}(t) - u_{ij}(t)) \biggl) 
	\geq P\biggl( (1-\epsilon)u_{i}(t) \biggl) - (1-\epsilon)u_{ij}(t)P'\biggl( (1-\epsilon)u_{i}(t) \biggl).
$$ 
The second inequality holds because $P'$ is increasing. 
The first equality follows since
$u_{ij}(t) \neq 0$ only at $t$ such that $P'(u_{i}(t)) = \lambda_{ij}$ (by the algorithm).     
The last inequality is due to the loop condition in the algorithm.
Thus, at the end of the iteration (related to job $j$) in the for loop,
the increase in the left hand side of inequality (\ref{eq:throughput-energy-1}) is
$$
\frac{1}{\epsilon} 
\biggl( a_{j}x_{ij} - \int_{0}^{\infty} \biggl[ P\biggl((1-\epsilon)u_{i}(t) \biggl) 
					- P\biggl((1-\epsilon)(u_{i}(t) - u_{ij}(t)) \biggl) \biggl] dt \biggl)
\geq \frac{1}{\epsilon} \biggl( a_{j} - (1-\epsilon)a_{j} \biggl) = a_{j}
$$ 
Besides, the increase in the right hand-side of inequality (\ref{eq:throughput-energy-1}) is 
$\gamma_{j} \leq a_{j}$.
Hence, the induction step follows; so does the lemma.
\end{proof}

\begin{theorem}
The algorithm is $(1+\epsilon)$-augmentation, $1/\epsilon$-competitive
for $\epsilon \geq \epsilon(P)$. 
\end{theorem}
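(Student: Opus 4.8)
The plan is to assemble the two ingredients already in hand---the feasibility of the dual variables and Lemma~\ref{lem:throughput-energy}---through the weak duality of the framework, being careful that here the primal is a \emph{maximization} problem, so the duality runs in the opposite direction to Section~\ref{sec:framework}. First I would invoke the maximization analogue of the Weak Duality lemma: for the convex program of this section, any feasible dual solution $(x^{*},\lambda,\gamma)$ yields an \emph{upper} bound $OPT \leq N(x^{*},\lambda,\gamma)$ on the optimal offline profit, where $N(x^{*},\lambda,\gamma) = \sum_{j}\gamma_{j} - \sum_{i}\int_{0}^{\infty} P(\sum_{j} u_{ij}(t))\,dt + \sum_{i,j}\lambda_{ij}\int_{r_{j}}^{d_{j}} u_{ij}(t)\,dt$ is the dual objective. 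The preceding feasibility lemma guarantees that the variables $\lambda,\gamma$ (together with the auxiliary $u_{ij}$) produced by Algorithm~\ref{algo:values-energy} satisfy both dual constraints, so this bound applies to them.

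Next I would read off the algorithm's profit, which is exactly the primal objective evaluated at the algorithm's schedule, $ALG = \sum_{i,j} a_{j} x_{ij} - \sum_{i}\int_{0}^{\infty} P((1-\epsilon)s_{i}(t))\,dt$. Lemma~\ref{lem:throughput-energy} states precisely that $\tfrac{1}{\epsilon}\, ALG \geq N(x^{*},\lambda,\gamma)$ whenever $\epsilon \geq \epsilon(P)$. Chaining the two inequalities gives $OPT \leq N(x^{*},\lambda,\gamma) \leq \tfrac{1}{\epsilon}\, ALG$, hence $ALG \geq \epsilon\, OPT$. For a maximization objective this is exactly the assertion that the algorithm is $\tfrac{1}{\epsilon}$-competitive.

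Finally I would justify the augmentation factor. In this section the algorithm's machine consumes power $P((1-\epsilon)z)$ when running at speed $z$, while the adversary pays $P(z)$; substituting $z' = (1-\epsilon)z$ shows that at a common power budget the algorithm processes work at rate $z'/(1-\epsilon) \geq (1+\epsilon)z'$, so the energy model used throughout the analysis is (at least) a $(1+\epsilon)$-speed augmentation of the adversary's machine. Collecting these observations completes the proof.

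I do not expect a genuine obstacle here: essentially all of the technical effort has already been spent in Lemma~\ref{lem:throughput-energy} (the inductive, job-by-job comparison of the increments to $ALG$ and to $N$) and in the feasibility lemma. The only points that require care are keeping the duality inequality pointed the right way for a maximization primal---so that the dual value upper-bounds $OPT$ rather than lower-bounds it---and the hypothesis $\epsilon \geq \epsilon(P)$, which is exactly what Lemma~\ref{lem:throughput-energy} needs for the estimate $zP'((1-\epsilon)z) \leq P(z)$ to hold, and hence for the competitive bound to be meaningful.
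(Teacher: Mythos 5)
Your proposal is correct and follows essentially the same route as the paper: the paper's proof likewise just notes that resource augmentation is encoded by charging the algorithm $P((1-\epsilon)z)$ while the adversary pays $P(z)$, and then invokes Lemma~\ref{lem:throughput-energy} together with the framework's weak duality (with the inequality direction reversed for the maximization primal) to conclude $1/\epsilon$-competitiveness. You merely spell out the chaining $OPT \leq N(x^{*},\lambda,\gamma) \leq \frac{1}{\epsilon}\,ALG$ and the substitution $z' = (1-\epsilon)z$ (using $1/(1-\epsilon) \geq 1+\epsilon$) explicitly, which the paper leaves implicit.
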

\begin{proof}
By resource augmentation, with the same speed $z$ the energy power 
for the algorithm is $P((1-\epsilon)z)$, 
whereas the one for the adversary is $P(z)$. So by 
Lemma~\ref{lem:throughput-energy}, the theorem follows.
\end{proof}

Note that the result could be generalized for \emph{heterogeneous} machines 
where the energy power functions are different. In this case, one needs to consider 
$\epsilon \geq \max_{i} \epsilon(P_{i})$.

\section{Energy Minimization in Speed Scaling with Power Down Model}		\label{sec:4S-energy}
\paragraph{The problem.} 
We are given a single machine that could be transitioned into 
a sleep state or an active state.  Each transition from the sleep state to 
the active state costs $A > 0$, which is called the \emph{wake-up cost}. 
Jobs arrive online, each job has a released time $r_{j}$, a deadline $d_{j}$, 
a processing volume $p_{j}$ and could be processed preemptively. In the problem,
all jobs have to be completed. In the sleep state, the energy consumption of the machine is 0.
In the active state, the power energy consumption at time $t$ is $P(s(t)) = s(t)^{\alpha} + g$
where $\alpha \geq 1$ and $g \geq 0$ are constant. Thus, the consumed energy of the machine 
in active state is $\int_{0}^{\infty} P(s(t))dt$, that can be decomposed into 
\emph{dynamic energy} $\int_{0}^{\infty} s(t)^{\alpha}dt$ and 
\emph{static energy} $\int_{0}^{\infty} g dt$ (where the integral is taken over $t$ at which 
the machine is in active state). At any time $t$, 
the scheduler has to decide the state of the machine and the speed if the machine is in active state 
in order to execute and complete all jobs. 
The objective is to minimize the total energy --- the consumed energy in active state plus the 
wake-up energy. 

\paragraph{Formulation.} In a mathematical program for the problem, we need to incorporate an information 
about the machine states and the transition cost from the sleep state to the active one.  
Here we make use of 
the properties of the Heaviside step function and the Dirac delta function to encode 
the machine states and the transition cost. Recall that the Heaviside step 
function $H(t) = 0$ if $t < 0$ and $H(t) = 1$ if $t \geq 0$. Then $H(t)$
is the integral of the Dirac delta function $\delta$ (i.e., $H' = \delta$) and it holds that 
$\int_{-\infty}^{+\infty} \delta(t)dt = 1$. Now let $F(t)$ be a function
indicating whether the machine is in active state at time $t$, i.e., $F(t) = 1$ if 
the machine is active at $t$ and $F(t) = 0$ if it is in the sleep state. 
Assume that the machine initially is in the sleep state. Then
$A \int_{0}^{+\infty} |F'(t)|dt$ equals twice the transition cost
of the machine (a transition from the active state to the sleep state costs 0 while 
in $A \int_{0}^{+\infty} |F'(t)|dt$, it costs $A$).

Let $s_{j}(t)$ be variable representing the machine speed on job $j$ at time $t$.   
The problem could be formulated as the following (non-convex) program.
 
  \begin{alignat*}{3}
    \text{min} \quad  \int_{0}^{\infty} P\biggl(\sum_{j} s_{j}(t) \biggl)F(t) &dt + \frac{A}{2} \int_{0}^{+\infty} |F'(t)|dt \\
	\text{subject to} \qquad \qquad     \int_{r_{j}}^{d_{j}} s_{j}(t) F(t)dt &\geq p_{j} \qquad &\forall j \\
					     s_{j}(t) \geq 0, &~F(t) \in \{0,1\} \qquad &\forall j,t
  \end{alignat*}
The first constraint ensures that every job $j$ will be fully processed during $[r_{j},d_{j}]$.
Moreover, each time a job is executed, the machine has to be in the active state. 
Note that we do not relax the variable $F(t)$. The objective function consists of 
the energy cost during the active periods and the wake-up cost. 

\subsection{Speed Scaling without Wake-Up Cost}
The problem without wake-up cost ($A = 0$) has been extensively studied. 
We reconsider the problem throughout our primal-dual approach. In case $A = 0$,
the machine is put in active state whenever there is some pending job (thus the function
$F(t)$ is useless and could be removed from the formulation).  
In this case, the relaxation above becomes a convex program. Applying the framework and by the same
observation as in previous sections, we derive the following algorithm.

At the arrival of job $j$, increase continuously $s_{j}(t)$ at $\arg \min P'(s(t))$ for $r_{j} \leq t \leq d_{j}$
and update simultaneously $s(t) \gets s(t) + s_{j}(t)$ until $\int_{r_{j}}^{d_{j}} s_{j}(t')dt' = p_{j}$.

It turns out that the machine speed $s(t)$ of the algorithm equals $\max_{t' > t} V(t,t')/(t' - t)$
where $V(t,t')$ is the remaining processing volume of jobs arriving at or before $t$
with deadline in $(t,t']$. So the algorithm is indeed algorithm \textsc{Optimal Available} \cite{YaoDemers95:A-Scheduling-Model}
that is $\alpha^{\alpha}$-competitive \cite{BansalKimbrel07:Speed-scaling}.
However, the primal-dual view of the algorithm gives more insight and that is useful 
in the general energy model (see Lemma~\ref{lem:general-energy}). 

\subsection{Speed Scaling with Wake-Up Cost}

\paragraph{The Algorithm.}
Define the \emph{critical} speed $s^{c} = \arg \min_{s > 0} P(s)/s$.
In the algorithm, the machine speed is always at least $s^{c}$ if it executes some job.
 
Initially, set $s(t)$ and $s_{j}(t)$ equal 0 for every time $t$ and jobs $j$.
If a job is released then it is marked as \emph{active}. Intuitively, a job is active 
if its speed $s_{j}(t)$ has not been settled yet. 
Let $\tau$ be the current moment. Consider currently active jobs in the earliest deadline 
first (EDF) order. Increase continuously $s_{j}(t)$ at $\arg \min P'(s(t))$ for $r_{j} \leq t \leq d_{j}$
and update simultaneously $s(t) \gets s(t) + s_{j}(\tau)$ until $\int_{r_{j}}^{d_{j}} s_{j}(t')dt' = p_{j}$.
Now consider different states of the machine at the current time $\tau$. We distinguish 
three different states: (1) in \emph{working state} the machine is active and 
is executing some jobs; (2) in \emph{idle state} the machine is active but 
its speed equals 0; and (3) in \emph{sleep state} the machine is inactive.   
\begin{description}
	\item[In working state.] If $s(\tau) > 0$ then keep 
	process jobs with the earliest deadline by speed $\max\{s(\tau),s^{c}\}$. 
	Mark all currently pending jobs as inactive.
	If $s(\tau) = 0$, switch to the idle state.
	\item[In idle state.] If $s(\tau) \geq s^{c}$ then switch to the working state. \\
	If $s^{c} > s(t) > 0$. Mark all currently pending jobs as active. 
	Intuitively, we delay such jobs until some moment where the machine has to run
	at speed $s^{c}$ in order to complete these jobs (assuming that there is no new job released).\\
	Otherwise, if the total duration of idle state from the last wake-up equals $A/g$ then switch 
	to the sleep state.
	\item[In sleep state.] If $s(t) \geq s^{c}$ then switch to the working state.
\end{description}

In the rest, we denote $s^{*}(t)$ as the machine speed at time $t$
by the algorithm. Moreover, let $s^{*}_{j}(t)$ be the speed of the algorithm
on job $j$ at time $t$.

\paragraph{Analysis.} The Lagrangian dual is
$
\max_{\lambda \geq 0} \min_{s,F} L(s,F,\lambda)
$ 
where the minimum is taken over  $(s,F)$ feasible solutions of the primal and 
$L$ is the following Lagrangian function 
\begin{align*}
L(s,F,\lambda) &=
	\int_{0}^{\infty} P\biggl(\sum_{j} s_{j}(t) \biggl)F(t)dt + \frac{A}{2} \int_{0}^{+\infty} |F'(t)|dt
	+ \sum_{j} \lambda_{j} \biggl(p_{j} -  \int_{r_{j}}^{d_{j}} s_{j}(t) F(t)dt \biggl) \\
& \geq \sum_{j} \lambda_{j}p_{j} 
	- \sum_{j}  \int_{r_{j}}^{d_{j}} s_{j}(t) F(t) \biggl( \lambda_{j} - \frac{{P}(s(t))}{s(t)}\biggl) 
		\one_{\{s(t) > 0\}}  \one_{\{F(t) = 1\}}dt 
	+ \frac{A}{2} \int_{0}^{+\infty} |F'(t)|dt
\end{align*}
where $s(t) = \sum_{j} s_{j}(t)$. 

By weak duality, the optimal value of the primal is always larger than the one of the 
corresponding Lagrangian dual. In the following, we bound the Lagrangian dual value
in function of the algorithm cost and derive the competitive 
ratio via the dual-fitting approach.

\paragraph{Dual variables} Let $0 < \beta \leq 1$ be 
some constant to be chosen later. For jobs $j$ such that 
$s^{*}(t) > 0$ for every $t \in [r_{j},d_{j}]$,
define $\lambda_{j}$ such that $\lambda_{j}p_{j}/\beta$ equals the marginal 
increase of the \emph{dynamic} energy due to the arrival of job $j$. For jobs $j$ such that 
$s^{*}(t) = 0$ for some moment $t \in [r_{j},d_{j}]$,
define $\lambda_{j}$ such that $\lambda_{j}p_{j}$ equals the marginal 
increase of the \emph{dynamic and static} energy due to the arrival of job $j$.

\begin{lemma}		\label{lem:general-energy}
Let $j$ be an arbitrary job. 
\begin{enumerate}
	\item If $s^{*}(t) > 0$ for every $t \in [r_{j},d_{j}]$ then
	$\lambda_{j} \leq \beta P'(s^{*}(t))$ 
	for every $t \in [r_{j},d_{j}]$.
	\item Moreover, if 
	$s^{*}(t) = 0$ for some $t \in [r_{j},d_{j}]$ then $\lambda_{j} = P(s^{c})/s^{c}$.
\end{enumerate}
\end{lemma}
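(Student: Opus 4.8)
The plan is to treat the two cases separately, reading off the value of $\lambda_j$ directly from how the water-filling step allocates the work $p_j$, and then comparing the resulting ``water level'' to the final speed profile $s^{*}$. Throughout I use that the dynamic energy density is $z\mapsto z^{\alpha}$, whose derivative is exactly $P'(z)=\alpha z^{\alpha-1}$ (the static constant $g$ drops out of the derivative), so $P'$ is the right object for bounding marginal dynamic energy and is increasing for $\alpha\ge 1$.

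For part~1 I would first make the marginal dynamic energy precise. When $j$ is incorporated, the algorithm raises the current profile $s_{\mathrm{old}}$ by water-filling: there is a level $\sigma_j$ with new speed $\max\{s_{\mathrm{old}}(t),\sigma_j\}$ and allocation $s_j(t)=\max\{0,\sigma_j-s_{\mathrm{old}}(t)\}$, where $\int_{r_j}^{d_j}s_j(t)\,dt=p_j$. Convexity of $z\mapsto z^{\alpha}$ gives $s_{\mathrm{new}}(t)^{\alpha}-s_{\mathrm{old}}(t)^{\alpha}\le P'(s_{\mathrm{new}}(t))\,s_j(t)$, and on the support of $s_j$ we have $s_{\mathrm{new}}(t)=\sigma_j$. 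Integrating yields that the marginal dynamic energy is at most $P'(\sigma_j)\,p_j$, so $\lambda_j=\beta\cdot(\text{marginal dynamic energy})/p_j\le \beta P'(\sigma_j)$.

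The remaining step in part~1, which I expect to be the main obstacle, is to establish $\sigma_j\le s^{*}(t)$ for every $t\in[r_j,d_j]$; combined with monotonicity of $P'$ this upgrades the bound to $\lambda_j\le \beta P'(\sigma_j)\le \beta P'(s^{*}(t))$. The intuition is that water-filling only raises the profile and the working-state speed $\max\{s(\tau),s^{c}\}$ is never below $s(\tau)$, so on $[r_j,d_j]$ the profile never drops below the level $\sigma_j$ reached just after $j$ is processed. Making this rigorous means tracking the profile through the idle/sleep/delay transitions; under the hypothesis of part~1 that $s^{*}(t)>0$ on all of $[r_j,d_j]$, the machine is in working state throughout that window and no job on this interval is delayed to a later time, so the profile there is genuinely non-decreasing and the inequality follows.

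For part~2, the hypothesis that $s^{*}(t)=0$ at some $t\in[r_j,d_j]$ means the machine has slack in the window of $j$: a continuous run at speed at least $s^{c}$ is not forced, so the water level for the relevant work lies below $s^{c}$, and by the idle-state rule the algorithm delays $j$ and ultimately executes it at exactly the critical speed $s^{c}=\arg\min_{s>0}P(s)/s$. Processing $p_j$ units at speed $s^{c}$ takes time $p_j/s^{c}$ and contributes dynamic-plus-static energy $P(s^{c})\cdot p_j/s^{c}$. Since $\lambda_j p_j$ is defined to equal this marginal dynamic-and-static energy, we obtain $\lambda_j=P(s^{c})/s^{c}$, as claimed.
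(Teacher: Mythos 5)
Your proposal is correct and follows essentially the same route as the paper: your explicit water level $\sigma_j$ is exactly the paper's $\min_{t\in[r_j,d_j]}P'(s^{*}(t))$ (attained on the support of $s^{*}_j$ by the $\arg\min$ allocation rule), your convexity bound on the marginal dynamic energy is the paper's first chain of inequalities, and your reduction via monotonicity of the profile under future arrivals is the paper's opening remark that $s^{*}(t)$ is non-decreasing as new jobs arrive. Your part~2 likewise matches the paper's argument that $j$ is executed in a block at speed $s^{c}$, with marginal energy $P(s^{c})(b-a)$ against $p_j=s^{c}(b-a)$.
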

\begin{proof}
We prove the first claim.
For any time $t$, speed $s^{*}(t)$ is non-decreasing as long as new jobs arrive. Hence,
it is sufficient to prove the claim assuming that no other job is released after $j$. So
$s^{*}(t)$ is the machine speed after the arrival of $j$. The marginal increase in the dynamic energy 
due to the arrival of $j$ could be written as 
\begin{align*}
 \frac{1}{\beta}\lambda_{j}p_{j} &= \int_{r_{j}}^{d_{j}} \biggl( P(s^{*}(t)) - P(s^{*}(t) - s^{*}_{j}(t) \biggl)dt
  	\leq   \int_{r_{j}}^{d_{j}} P'(s^{*}(t)) s^{*}_{j}(t))dt  \\
	&= \min P'(s^{*}(t)) \int_{r_{j}}^{d_{j}} s^{*}_{j}(t)dt 
	  = \min P'(s^{*}(t)) p_{j}
\end{align*}
where $\min P'(s^{*}(t))$ is taken over $t \in [r_{j},d_{j}]$ such that $s^{*}_{j}(t) > 0$.
The inequality is due to the convexity of $P$ and the second equality follows by the algorithm. 
Moreover, $\min_{r_{j} \leq t \leq d_{j}} P'(s^{*}(t)) \leq P'(s^{*}(\tau))$ for any $\tau \in [r_{j},d_{j}]$; so the lemma 
follows. 

We are now showing the second claim. 
By the algorithm, the fact that $s^{*}(t) = 0$ for some $t \in [r_{j},d_{j}]$ means that
job $j$ will be processed at speed $s^{c}$ in some interval $[a,b] \subset [r_{j},d_{j}]$
(assuming that no new job is released after $r_{j}$).  
The marginal increase in the energy is $P(s^{c})(b-a)$ while $p_{j}$ could be expressed 
as $s^{c}(b-a)$. Therefore, $\lambda_{j} = P(s^{c})/s^{c}$.
\end{proof}

\begin{theorem}
The algorithm has competitive ratio at most $\max\{4,\alpha^{\alpha}\}$.
\end{theorem}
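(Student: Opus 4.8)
The plan is to combine weak Lagrangian duality with a decomposition of the algorithm's cost into a \emph{dynamic-energy} part and a \emph{static-plus-wake-up} part, charged at rates $\alpha^\alpha$ and $4$ against two nonnegative pieces $D_1,D_2\ge 0$ of one common lower bound; since $aD_1+bD_2\le \max\{a,b\}(D_1+D_2)$, the two rates combine to $\max\{4,\alpha^\alpha\}$ rather than to a sum. By weak duality and the displayed lower bound on $L(s,F,\lambda)$, it suffices to lower bound, over feasible $(s,F)$, the quantity $\sum_j \lambda_j p_j - \sum_j \int_{r_j}^{d_j} s_j(t)F(t)\bigl(\lambda_j - P(s(t))/s(t)\bigr)\one_{\{s(t)>0\}}\one_{\{F(t)=1\}}dt + \frac{A}{2}\int_0^\infty |F'(t)|dt$. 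First I would carry out the inner minimization over the speeds pointwise: on $\{s(t)>0,\,F(t)=1\}$ the integrand equals $P(s(t)) - \sum_j \lambda_j s_j(t)$, so a minimizer concentrates all speed on the job of largest dual, and the per-time value becomes $\phi(t):=\min\{0,\;g-(\alpha-1)(\lambda_{\max}(t)/\alpha)^{\alpha/(\alpha-1)}\}\le 0$, where $\lambda_{\max}(t)=\max_{j:\,t\in[r_j,d_j]}\lambda_j$ and I have used $\min_{s>0}[s^\alpha-\lambda s]=-(\alpha-1)(\lambda/\alpha)^{\alpha/(\alpha-1)}$.

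For the dynamic part I would argue as in the Optimal Available analysis. By Lemma~\ref{lem:general-energy}(2) a Type-2 job has $\lambda_j=P(s^c)/s^c\le P(s(t))/s(t)$ for all $s(t)>0$, so times whose maximizing dual is a Type-2 job contribute $\phi(t)=0$; negative contributions arise only where $\lambda_{\max}(t)$ is attained by a Type-1 job, for which Lemma~\ref{lem:general-energy}(1) gives $\lambda_{\max}(t)\le\beta P'(s^*(t))=\beta\alpha s^*(t)^{\alpha-1}$, whence $\phi(t)\ge g-(\alpha-1)\beta^{\alpha/(\alpha-1)}s^*(t)^\alpha$ there. Summing the marginal-energy definitions of the $\lambda_j$ telescopes $\sum_{\text{Type 1}}\lambda_j p_j$ to $\beta$ times the dynamic energy spent outside the critical-speed intervals; adding $\int\phi$ with the choice $\beta=\alpha^{-(\alpha-1)}$ makes the per-time balance equal to $\alpha^{-\alpha}s^*(t)^\alpha + g$ on the high-speed part, which pays for the dynamic energy at rate $\alpha^\alpha$ and, pleasantly, for the static energy there at rate $1$.

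The delicate part, which I expect to be the main obstacle, is to pay for the remaining static energy (on idle periods and on low-speed busy intervals, where $\phi\equiv 0$) together with the wake-up energy at rate $4$. Here I would \emph{keep} the non-convex term $\frac{A}{2}\int|F'|$ rather than discard it: after the speed minimization the surviving problem is $\min_F\bigl[\int_{\{F=1\}}\phi(t)\,dt+\frac{A}{2}\int|F'(t)|\,dt\bigr]$, a transition-regularized prize-collecting problem in which each active ``bump'' collects the negative mass of $\phi$ but must pay $A$ to be opened. I would exploit two structural facts about the algorithm: it runs at speed at least $s^c$ whenever it works, so the critical-speed (Type-2) intervals are paid for in full by $\sum_{\text{Type 2}}\lambda_j p_j$ (their entire dynamic-plus-static energy, by Lemma~\ref{lem:general-energy}(2)); and it idles for exactly $A/g$ before sleeping, so the static energy of any idle period ending in sleep equals the ensuing wake-up cost $A$, letting me charge $E_{\mathrm{wake}}$ to the idle static energy up to a constant and match both against the idle-versus-sleep trade-off that $\min_F$ encodes. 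The real work is the bookkeeping: defining the split $D=D_1+D_2$ with $D_1,D_2\ge 0$ and $D_1+D_2\le \mathrm{OPT}$ so that $E_{\mathrm{dyn}}\le\alpha^\alpha D_1$ and $E_{\mathrm{stat}}+E_{\mathrm{wake}}\le 4D_2$, while controlling the interaction between the indicator $F$, the critical-speed regime, and the pointwise bound on $\phi$ exactly where convexity fails and the dual-fitting viewpoint must take over from the primal-dual one. Assembling the two estimates then gives $\mathrm{ALG}=E_{\mathrm{dyn}}+E_{\mathrm{stat}}+E_{\mathrm{wake}}\le\max\{4,\alpha^\alpha\}(D_1+D_2)\le\max\{4,\alpha^\alpha\}\,\mathrm{OPT}$.
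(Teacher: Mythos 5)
Your setup tracks the paper's dual-fitting proof faithfully through the dynamic-energy part: same Lagrangian lower bound, same dual variables, the same use of Lemma~\ref{lem:general-energy} to make critical-speed (Type-2) jobs contribute nothing to the negative term, and the same choice $\beta=\alpha^{-(\alpha-1)}$ giving $\beta-(\alpha-1)\beta^{\alpha/(\alpha-1)}=\alpha^{-\alpha}$ and hence the $\mathcal{E}^{*}_{1}/\alpha^{\alpha}$ credit. But the static/wake-up half has two genuine problems. First, a local one: your bound $\phi(t)\geq g-(\alpha-1)\beta^{\alpha/(\alpha-1)}s^{*}(t)^{\alpha}$ is false wherever the right-hand side is positive, which (using $g=(\alpha-1)(s^{c})^{\alpha}$) is the entire band $s^{c}\leq s^{*}(t)<\alpha s^{c}$; there $\phi(t)=\min\{0,\cdot\}\leq 0$, so the claim that the high-speed part ``pleasantly'' pays for its static energy at rate $1$ evaporates exactly where the algorithm hugs the critical speed.

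Second, and more structurally: by carrying out the inner minimization over the adversary's speeds pointwise \emph{first}, you collapse the negative term to $\int_{\{F=1\}}\phi(t)\,dt$ with $\phi\leq 0$, and the surviving problem $\min_{F}\bigl[\int_{\{F=1\}}\phi\,dt+\frac{A}{2}\int|F'|\,dt\bigr]$ over an unconstrained $F$ is essentially solved by $F=\one_{\{\phi<0\}}$; it produces no positive mass out of which your $D_{2}$ (the piece that must dominate $\frac{1}{4}(E_{\mathrm{stat}}+E_{\mathrm{wake}})$) could be built. The paper avoids this by \emph{not} optimizing the speeds out: before any maximization it splits the triple indicator as $\one_{\{s(t)>0\}}\one_{\{F(t)=1\}}\one_{\{s^{*}(t)>0\}}\leq\frac{1}{2}\bigl(\one_{\{s^{*}(t)>0\}}+\one_{\{F(t)=1\}}\bigr)$, which keeps the adversary-coupled credits $\frac{1}{2}\int g\,\one_{\{F(t)=1\}}dt$ and $\frac{1}{2}\int g\,\one_{\{s^{*}(t)>0\}}dt$ alive in the lower bound on $L$; these, together with $\frac{A}{2}\int|F'|dt$, are then compared with $\mathcal{E}^{*}_{2}$ phase by phase (phases delimited by the algorithm's idle-to-sleep end-times, idle static energy per phase equal to $A$ by the $A/g$ rule, and a two-case analysis according to whether the adversary's $F$ makes a transition inside the phase; in the no-transition case the feasibility of $(s,F)$ --- deadlines falling inside the phase --- forces $F\equiv 1$ there). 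You correctly name the two structural facts this argument uses (the speed floor $s^{c}$, and idle duration exactly $A/g$ before sleeping), but the per-phase case analysis that actually delivers $L_{2}(F)\geq\mathcal{E}^{*}_{2}/4$ for \emph{every} feasible $F$ is precisely what you defer as ``bookkeeping,'' and with your clipped $\phi$ the quantities that the bookkeeping manipulates are no longer present. That charging argument is the crux of the factor $4$ in $\max\{4,\alpha^{\alpha}\}$, not an assembly step, so the proposal as it stands has a real gap.
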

\begin{proof}
Let $E^{*}_{1}$ be the dynamic energy of the algorithm 
schedule, i.e., $E^{*}_{1} = \int_{0}^{\infty} [P(s^{*}(t)) - P(0)] dt \leq \sum_{j} \lambda_{j}p_{j}/\beta$
due to the definition of $\lambda_{j}$'s and $0 < \beta \leq 1$. 
Moreover, let $E^{*}_{2}$ be the static energy plus the wake-up energy of the algorithm, i.e.,
$E^{*}_{2} = \int_{0}^{\infty} P(0)F^{*}(t)dt +  \frac{A}{2} \int_{0}^{\infty} |(F^{*})'(t)|dt$.
We will bound the Lagrangian dual objective. 

By Lemma~\ref{lem:general-energy} (second statement), 
for every job $j$ such that $s^{*}(t) = 0$ for some $t \in [r_{j},d_{j}]$, 
$\lambda_{j} = \frac{P(s^{c})}{s^{c}}$. By the definition of the critical speed,
$\lambda_{j} \leq \frac{{P}(z)}{z}$ for any $z > 0$. Therefore,
\begin{equation} 	\label{eq:general-energy}
\sum_{j}  \int_{r_{j}}^{d_{j}} s_{j}(t) F(t) \biggl( \lambda_{j} - \frac{{P}(s(t))}{s(t)}\biggl)dt \leq 0
\end{equation}
where in the sum is taken over jobs $j$ such that 
$s^{*}(t) = 0$ for some $t \in [r_{j},d_{j}]$.  Therefore,
\begin{align*}
L_{1}(s,\lambda) &:= \sum_{j} \lambda_{j}p_{j} 
	- \sum_{j}  \int_{r_{j}}^{d_{j}} s_{j}(t) F(t) \biggl( \lambda_{j} - \frac{{P}(s(t))}{s(t)}\biggl) \one_{\{s(t) > 0\}}\one_{\{F(t) = 1\}}dt \\
	&\geq \beta E^{*}_{1} - \max_{s,F} \sum_{j} \int_{r_{j}}^{d_{j}} s_{j}(t) F(t) 
		\biggl[ \beta P'(s^{*}(t)) - \frac{{P}(s(t))}{s(t)}\biggl] \one_{\{s(t) > 0\}} \one_{\{F(t) = 1\}} \one_{\{s^{*}(t) > 0\}} dt\\
	&\geq \beta E^{*}_{1} - \max_{s} \int_{0}^{\infty} s(t)  
		\biggl[ \beta P'(s^{*}(t)) - \frac{{P}(s(t))}{s(t)}\biggl]  \one_{\{s(t) > 0\}}\one_{\{F(t) = 1\}} \one_{\{s^{*}(t) > 0\}} dt \\
	&\geq \beta E^{*}_{1} - \int_{0}^{\infty}  
		\biggl[ \beta P'(s^{*}(t))  \bar{s}(t) - {P}(\bar{s}(t)) \biggl] \one_{\{s(t) > 0\}}\one_{\{F(t) = 1\}}\one_{\{s^{*}(t) > 0\}} dt \\
	&\geq \beta E^{*}_{1} - \frac{1}{2} \int_{0}^{\infty}  
		\biggl[ \beta P'(s^{*}(t))  \bar{s}(t) - {P}(\bar{s}(t)) \biggl] \one_{\{s^{*}(t) > 0\}} dt \\
		& \qquad - \frac{1}{2} \int_{0}^{\infty}  
		\biggl[ \beta P'(s^{*}(t))  \bar{s}(t) - {P}(\bar{s}(t)) \biggl] \one_{\{F(t) = 1\}} dt 
\end{align*}
where in the second line, the sum is taken over jobs $j$ such that 
$s^{*}(t) > 0$ for all $t \in [r_{j},d_{j}]$.
The first inequality follows (\ref{eq:general-energy}) and Lemma~\ref{lem:general-energy} (first statement).
The second inequality holds since $F(t) \leq 1$ and $s(t) = \sum_{j} s_{j}(t)$. 
The third inequality is due to the first order derivative 
and $\bar{s}(t)$ is the solution of equation $P'(z(t)) = \beta P'(s^{*}(t))$.
In fact $\bar{s}(t)$ maximizes function $s(t)\beta P'(s^{*}(t)) - {P}(s(t))$.

As the energy power function $P(z) = z^{\alpha} + g$ where $\alpha \geq 1$ and $g \geq 0$, 
$\bar{s}(t)^{\alpha-1} = \beta (s^{*}(t))^{\alpha-1}$. Therefore,
\begin{align*}
L_{1}(s,\lambda)  &\geq \beta E^{*}_{1} - \frac{1}{2}\int_{0}^{\infty}  
		\biggl( \beta \alpha (s^{*}(t))^{\alpha-1}  \bar{s}(t) - (\bar{s}(t))^{\alpha} - g \biggl) \one_{\{s^{*}(t) > 0\}} dt \\
		& \qquad - \frac{1}{2}\int_{0}^{\infty}  
		\biggl( \beta \alpha (s^{*}(t))^{\alpha-1}  \bar{s}(t) - (\bar{s}(t))^{\alpha} - g \biggl) \one_{\{F(t) =1\}} dt \\
& = \beta E^{*}_{1} - \int_{0}^{\infty} (\alpha-1) \beta^{\alpha/(\alpha-1)}(s^{*}(t))^{\alpha}dt 
		+ \frac{1}{2} \int_{0}^{\infty} g \one_{\{s^{*}(t) > 0\}} dt 
		+ \frac{1}{2} \int_{0}^{\infty} g \one_{\{F(t) =1\}} dt \\
&= \biggl[ \beta - (\alpha-1) \beta^{\alpha/(\alpha-1)} \biggl] E^{*}_{1} 
	+ \frac{1}{2} \int_{0}^{\infty} g \one_{\{s^{*}(t) > 0\}} dt 
		+ \frac{1}{2} \int_{0}^{\infty} g \one_{\{F(t) =1\}} dt
\end{align*}
Choose $\beta = 1/\alpha^{\alpha-1}$, we have that
\begin{align*}
L(s,F,\lambda) \geq \frac{1}{\alpha^{\alpha}} E^{*}_{1} 
	+ \frac{1}{2} \int_{0}^{\infty} g \one_{\{s^{*}(t) > 0\}} dt 
		+ \frac{1}{2} \int_{0}^{\infty} g \one_{\{F(t) =1\}} dt
	+ \frac{A}{2} \int_{0}^{\infty} |F'(t)|dt
\end{align*}

In the following, we claim that 
$$L_{2}(F) := \frac{1}{2} \int_{0}^{\infty} g \one_{\{s^{*}(t) > 0\}} dt 
		+ \frac{1}{2} \int_{0}^{\infty} g \one_{\{F(t) =1\}} dt + \frac{A}{2} \int_{0}^{\infty} |F'(t)|dt
\geq  E^{*}_{2}/4
$$
for any feasible solution $(s,F)$ of the relaxation.

Consider the algorithm schedule. An \emph{end-time} $u$ is a moment in the schedule such
that the machine switches from the idle state to the sleep state. 
Conventionally, the first end-time in the schedule is 0. 
Partition the time line into phases. A \emph{phase} $[u,v)$ is a time interval such that $u,v$
are consecutive end-times. Observe that in a phase, the schedule
has transition cost $A$ and there is always a new job released in a phase
(otherwise the machines would not switch to non-sleep state).
We will prove the claim on every phase. In the following, 
we are interested in phase $[u,v)$ and whenever we mention $L_{2}(F)$, 
it refers to $\frac{1}{2} \int_{u}^{v} g \one_{\{s^{*}(t) > 0\}} dt 
		+ \frac{1}{2} \int_{u}^{v} g \one_{\{F(t) =1\}} dt + \frac{A}{2} \int_{u}^{v} |F'(t)|dt$. 

By the algorithm, the static energy of the schedule during 
the idle time is $A$, \linebreak i.e., $\int_{u}^{v}  g\one_{\{s^{*}(t) = 0\}} dt = A$. 
Let $(s,F)$ be an arbitrary feasible of solution of the relaxation. 

If during $[u,v)$, the machine following solution $(s,F)$ makes a transition from
non-sleep state to sleep state or inversely then 
$L_{2}(F) \geq \frac{1}{2} \int_{u}^{v} g \one_{\{s^{*}(t) > 0\}} dt + \frac{A}{2}$. Hence
$$
L_{2}(F) \geq \frac{1}{4} \biggl( \int_{u}^{v} g \one_{\{s^{*}(t) > 0\}} dt 
	      	+  \int_{u}^{v}  g \one_{\{s^{*}(t) = 0\}}dt + A \biggl) = \frac{1}{4}\restr{E^{*}_{2}}{[u,v)}.
$$

If during $[u,v)$, the machine following solution $(s,F)$ makes no transition (from
non-sleep static to sleep state or inversely) then $F(t) = 1$ during $[u,v)$ in order to 
process jobs released in the phase. Therefore, 
\begin{align*}
L_{2}(F) &\geq \frac{1}{2} \int_{u}^{v} g \one_{\{s^{*}(t) > 0\}} dt 
		+ \frac{1}{2} \int_{u}^{v} g \one_{\{F(t) =1\}} dt 
		=  \frac{1}{2} \int_{u}^{v} g \one_{\{s^{*}(t) > 0\}} dt 
		+ \frac{1}{2} \int_{u}^{v} g dt \\
	      & \geq \frac{1}{2} \int_{u}^{v} g \one_{\{s^{*}(t) > 0\}} dt 
	      	+ \frac{1}{4} \int_{u}^{v}  g \one_{\{s^{*}(t) = 0\}}dt + \frac{A}{4} \\
	      & \geq \frac{1}{4} \biggl( \int_{u}^{v} g \one_{\{s^{*}(t) > 0\}} dt 
	      	+  \int_{u}^{v}  g \one_{\{s^{*}(t) = 0\}}dt + A \biggl) = \frac{1}{4}\restr{E^{*}_{2}}{[u,v)}
\end{align*}
where the second inequality follows the algorithm: as the machine switches to sleep state 
at time $v$, it means that the total idle duration in $[u,v)$ incurs a cost $A$.

In conclusion, the dual $L(s,F,\lambda) \geq E^{*}_{1}/\alpha^{\alpha} + E^{*}_{2}/4$
whereas the primal is $E^{*}_{1}+ E^{*}_{2}$. Thus, 
the competitive ratio is at most $\max\{4,\alpha^{\alpha}\}$.
\end{proof}

\section{Minimizing Energy plus Weighted Flow-Time in Speed Scaling with 
Power Down Model}		\label{sec:4S-energy+flow}

\paragraph{The problem.} 

We consider the problem of minimizing the total weighted flow-time
plus energy on a single in the general energy model. 
Again, the machine has a transition cost $A$ from sleep state to active state.
The power energy consumption of the machine at time $t$ in its active state is 
$P(s(t)) = s(t)^{\alpha} + g$
where $\alpha \geq 1$ and $g \geq 0$ and $s(t)$ is the machine speed at time $t$. 
Recall that the \emph{dynamic energy} is 
$\int_{0}^{\infty} s^{\alpha}(t)dt$ and the \emph{static energy} is
$\int_{0}^{\infty}g dt$ (where the integrals are taken during
active periods).
Jobs arrive over time, a job $j$ is released at time $r_{j}$, has weight $w_{j}$ 
and requires $p_{j}$ units of processing volume if it is processed 
on machine $i$. A job could be processed preemptively, i.e., a job could be interrupted and resumed later. 
The \emph{flow-time} of a job $j$ is $C_{j} - r_{j}$ where
$C_{j}$ is the completion time of the job.
At any time, the scheduler has to determine the state and the speed of every machine (it it is active) 
and also a policy how to execute jobs.
The objective is to minimize the total weighted flow-time of all jobs plus the total 
energy (including the wake-up cost). 

\paragraph{Formulation.} Similar as the previous section, we 
make use of the properties of Heaviside step function and Dirac delta function to encode 
the machine states and the transition cost. Let $F(t)$ be a function
indicating whether the machine $i$ is in active state at time $t$, i.e., $F(t) = 1$ if 
the machine is active at $t$ and $F(t) = 0$ if it is in the sleep state. 
Assume that the machine initially is in the sleep state. Then
$A \int_{0}^{+\infty} |F'(t)|dt$ equals twice the transition cost
of the machine.
Let $s_{j}(t)$ be the variable that represents 
the speed of job $j$ at time $t$. Let $C_{j}$ be 
a variable representing the completion time of $j$. 
The problem could be relaxed as the following (non-convex) program.

  \begin{alignat}{3}		\label{relaxation}
    \text{minimize } \int_{0}^{\infty} 2 P\biggl( \sum_{j} s_{j}(t)\biggl) & F(t) dt 
    			+ 2\sum_{j} \biggl(\int_{r_{j}}^{C_{j}} s_{j}(t)F(t)dt\biggl) \frac{w_{j}}{p_{j}}(C_{j} - r_{j}) \notag \\
			&\qquad \quad + A \int_{0}^{\infty} |F'(t)|dt \\
	\text{subject to} \qquad    \int_{r_{j}}^{C_{j}} s_{j}(t)F(t)dt &= p_{j}  \qquad \forall j \notag \\
	      s_{j}(t) &\geq 0 \qquad \forall j, t \geq r_{j} \notag \\
		F(t) &\in \{0,1\}  \qquad \forall  t.		\notag
  \end{alignat}
The first constraints ensure that every job $j$ must 
be completed by time $C_{j}$. 
In the objective function, the first and second terms represent
twice the consumed energy and the total weighted flow-time, respectively. 
Note that in the second term, $\int_{r_{j}}^{C_{j}} s_{j}(t)F(t)dt = p_{j}$ by the constraints. 
The last term stands for twice the transition cost.


\paragraph{Preliminaries.}
We say that a job $j$ is \emph{pending} at time $t$ if it is not completed, i.e., $r_{j} \leq t < C_{j}$. 
At time $t$, denote $q_{j}(t)$ the \emph{remaining} processing volume of job $j$.
The \emph{total weight} of pending jobs at time $t$ 
is denoted as $W(t)$. 
The \emph{density} of a job $j$ is $\delta_{j} = w_{j}/p_{j}$. 
Define the \emph{critical speed} $s^{c}$ of the machine 
as $\arg \min_{z \geq 0} P(z)/z$. As $P(z) = z^{\alpha} + g$, by 
the first order condition, $s^{c}$ satisfies $(\alpha-1)(s^{c})^{\alpha} = g$.

\subsection{The Algorithm.} 

We first describe the algorithm informally. In the speed scaling model, all previous algorithms 
explicitly or implicitly balance the weighted flow-time of jobs and the consumed energy to process such jobs. 
That could be done by setting the machine speed at any time $t$ proportional to 
some function of the total weight of pending jobs (precisely, proportional to 
$W(t)^{1/\alpha}$ where $W(t)$ is the total weight of pending jobs).
Our algorithm follows the same idea of balancing.
However, in the general energy model, the algorithm would not be competitive if 
the speed is always set proportionally to $W(t)^{1/\alpha}$ since the static energy might be large
due to the long active periods of the machine. Hence, even if the total weight of 
pending jobs on the machine is small, in some situation the speed is maintained larger than 
$W(t)^{1/\alpha}$. In fact, it will be set to be the critical 
speed $s^{c}$, defined as $\arg \min P(z)/z$. 

An issue while dealing with the general model is to determine when a machine is waken up.
Again, if the total weight of pending jobs is small and the machine is active,
then the static energy is large. Otherwise if pending jobs remain for long time then the 
weight flow-time is large. The algorithm also balances the costs by making a plan and 
switching the machine into active state at appropriate moments. 
If new job is released then the plan, together with its starting time, will be changed.  

\paragraph{Description of algorithm.}
At any time $t$, the machine maintains the following policy
in different states: the \emph{working state} (the machine is active and currently processes
some job), the \emph{idle state} (the machine is active but currently processes no job) and 
the \emph{sleep state}.

\begin{description}
	\item[In working state.] If $\frac{\alpha}{\alpha-1}W(t)^{(\alpha-1)/\alpha} > P(s^{c})/s^{c}$
		then the machine speed is set as $ W(t)^{1/\alpha}$.
		Otherwise, the speed is set as $ s^{c}$.
		At any time, the machine processes the highest  density job
		among the pending ones. 
	\item[In idle state.]
		\begin{enumerate}
			\item If $\frac{\alpha}{\alpha-1}W(t)^{(\alpha-1)/\alpha} > P(s^{c})/s^{c}$
			then switch to the working state.
			\item If $0 < \frac{\alpha}{\alpha-1}W(t)^{\frac{\alpha-1}{\alpha}} \leq P(s^{c})/s^{c}$
			then make a plan to process the pending jobs with speed (exactly) $s^{c}$
			in non-increasing order of their  density. So the plan consists of a single 
			block (with no idle time) and the block length could be explicitly computed (given 
			the processing volumes of all jobs and speed $ s^{c}$). Hence, the total energy consumed in the plan 
			could also be computed and it is independent of the starting time of the plan. 
		
			Choose the starting time of the plan in such a way that the total energy consumed 
			in the plan equals the total weighted flow-time of all jobs in the plan. 
			There always exists
			such starting time since if the plan begins immediately at $t$, the energy is larger than  
			the weighted flow-time; 
			and inversely if the starting time is large enough, the latter dominates the former.  
		
			At the starting time of a plan, switch to the working state. (Note that the plan together with 
			its starting time could be changed due to the arrival of new jobs.)
			\item Otherwise, if the total duration of idle state from the last wake-up equals
				$A/g$ then switch to sleep state.
		\end{enumerate}
	\item[In sleep state.] Use the same policy as
		the first two steps of the idle state.   
\end{description}
  
\subsection{Analysis} 
  
The Lagrangian dual of program (\ref{relaxation}) is $\max \min_{x,s,C,F} L$ 
where $L$ is the corresponding Lagrangian function where the maximum is taken over 
dual variables. The purpose of the section is to choose appropriate dual variables
and prove that for any feasible solution $(x,s,C,F)$ of the primal, the Lagragian dual
is bounded by a desired factor from the primal.

\paragraph{Dual variables.} 
Denote the dual variables corresponding to the first constraints of (\ref{relaxation}) 
as $\lambda_{j}$'s. Set all dual variables (corresponding to the primal (\ref{relaxation})) 
except $\lambda_{j}$'s equal to 0.
The values of dual variables $\lambda_{j}$'s is defined as the follows. 

Fix a job $j$. At the arrival of a job $j$, 
rename pending jobs as $\{1,\ldots,k\}$
in non-increasing order of their  densities, i.e., 
$p_{1}/w_{1} \leq \ldots \leq p_{k}/w_{k}$ 
(note that $p_{a}/w_{a}$ is the inverse of job $a$'s density).
Denote $W_{a} = w_{a} + \ldots + w_{k}$ for $1 \leq a \leq k$.

Define $\lambda_{j}$ such that 
\begin{equation}		\label{eq:lambda}
\lambda_{j}p_{j} = w_{j} \sum_{a = 1}^{j} \frac{q_{a}(r_{j})}{W_{a}^{1/\alpha}}	
		+ W_{j+1} \frac{q_{j}(r_{j})}{W_{j}^{1/\alpha}} + P(s^{c}) \frac{q_{j}(r_{j})}{s^{c}}
\end{equation}
Note that $q_{j}(r_{j}) = p_{j}$.
If job $j$ is processed with speed larger than $s^{c}$ then
the first term stands for the weighted flow-time of $j$ and
the second term represents an upper bound of the increase 
in the weighted flow-time of jobs with density smaller than $\delta_{j}$. 
Observe that due to arrival of $j$, the jobs with higher density than $\delta_{j}$
are completed earlier and the ones with smaller density than $\delta_{j}$
may have higher flow-time. 
Informally, the second sum in (\ref{eq:lambda}) captures 
the marginal change in the total weighted flow-time. 
The third term in (\ref{eq:lambda}) is introduced in order to cover energy consumed during the 
execution periods of job $j$ if it is processed by speed $s^{c}$. That term is necessary 
since during such periods the 
energy consumption and the weighted flow-time is not balanced.

\paragraph{}
The Lagrangian function $L(x,s,C,F,\lambda)$ with the chosen dual variables becomes

\begin{align*}
A \int_{0}^{\infty}& |F'(t)|dt  + 2\int_{0}^{\infty} P\biggl( \sum_{j} s_{j}(t)\biggl)F(t)dt
    		+ 2 \sum_{j} \delta_{j}(C_{j} - r_{j}) \int_{r_{j}}^{C_{j}} s_{j}(t)F(t)dt \\
		&+ \sum_{j} \lambda_{j} \biggl( p_{j} - \int_{r_{j}}^{C_{j}} s_{j}(t)F(t)dt \biggl) \\
= \sum_{j} & \lambda_{j}p_{j}  +  A \int_{0}^{\infty} |F'(t)|dt 
			+ \sum_{j} \int_{r_{j}}^{C_{j}}  \delta_{j}(C_{j} - r_{j}) s_{j}(t)F(t)dt\\
		&- \sum_{j} \int_{r_{j}}^{C_{j}} s_{j}(t)F(t)
			\biggl( \lambda_{j} - 2\frac{P(s(t))}{s(t)} 
				- \delta_{j}(C_{j} - r_{j}) \biggl) dt	
\end{align*}

\paragraph{Notations.}
We denote $s^{*}(t)$ the machine speed at time $t$ by the algorithm. 
So by the algorithm, if $s^{*}(t) > 0$ then $s^{*}(t) \geq s^{c}$.
Let $\mathcal{E}^{*}_{1}$ and $\mathcal{E}^{*}_{2}$ be the total dynamic and static 
energy consumed by the algorithm schedule, respectively. 
In other words, $\mathcal{E}^{*}_{1} = \int_{0}^{\infty} (s^{*}(t))^{\alpha} dt$
and $\mathcal{E}^{*}_{2} = \int_{0}^{\infty} g$ where the integral is taken over all moments 
$t$ where the machine is active (either in working or in idle states).
Additionally, let $\mathcal{E}^{*}_{3}$ be the total transition cost of the machine.
Moreover, let $\mathcal{F}^{*}$ be the total weighted flow-time of all jobs 
in the schedule. 

We relate the cost of the schedule (due to the algorithm) and the chosen values of dual variables by the 
following lemma. Note that by definition of $\lambda_{j}$'s, we have that 
$\sum_{j} \lambda_{j}p_{j} \geq \mathcal{F}^{*}$.

\begin{lemma}		\label{lem:general-energy-lambda}
It holds that
$
2\mathcal{E}^{*}_{1} + 3\mathcal{E}^{*}_{2}
	\geq  \mathcal{F}^{*} 
$
and
$
\sum_{j} \lambda_{j}p_{j} \geq \mathcal{E}^{*}_{1} 
$.
\end{lemma}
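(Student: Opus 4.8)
The plan is to prove the two inequalities by charging every unit of flow-time and of dynamic energy incurred by the algorithm against the appropriate terms on the other side, after splitting the time axis according to the three machine states. Throughout I would use three facts: the standard identity $\mathcal{F}^{*} = \int_{0}^{\infty} W(t)\,dt$; that on a high-speed working interval the algorithm sets $s^{*}(t) = W(t)^{1/\alpha}$, so $(s^{*}(t))^{\alpha} = W(t)$ there; and the relations forced by the critical speed, namely that $(\alpha-1)(s^{c})^{\alpha} = g$ together with $P(z) = z^{\alpha} + g$ give $P(s^{c}) = \alpha (s^{c})^{\alpha}$ and hence $P(s^{c})/s^{c} = \alpha (s^{c})^{\alpha-1}$.

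For the second inequality $\sum_{j}\lambda_{j}p_{j} \geq \mathcal{E}^{*}_{1}$, I would split the dynamic energy as $\mathcal{E}^{*}_{1} = \mathcal{E}^{*}_{1,\mathrm{high}} + \mathcal{E}^{*}_{1,\mathrm{crit}}$ according to whether the speed is $W(t)^{1/\alpha}$ or $s^{c}$, and match the two pieces to disjoint parts of (\ref{eq:lambda}). Since $s^{*}(t) = W(t)^{1/\alpha}$ on the high-speed set, $\mathcal{E}^{*}_{1,\mathrm{high}} = \int_{\mathrm{high}} (s^{*}(t))^{\alpha}\,dt = \int_{\mathrm{high}} W(t)\,dt \leq \int_{0}^{\infty} W(t)\,dt = \mathcal{F}^{*}$; the first two terms of $\lambda_{j}p_{j}$ are exactly the weighted flow-time of $j$ and the marginal weighted flow-time charged to the lower-density jobs, so they already yield $\sum_{j}(\text{first two terms}) \geq \mathcal{F}^{*}$, which is the remark preceding the lemma. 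For the critical part, summing the third term gives $\sum_{j} P(s^{c})\,q_{j}(r_{j})/s^{c} = \alpha (s^{c})^{\alpha-1}\sum_{j} p_{j}$, whereas $\mathcal{E}^{*}_{1,\mathrm{crit}} = (s^{c})^{\alpha-1} V_{\mathrm{crit}}$ with $V_{\mathrm{crit}} \leq \sum_{j} p_{j}$ the volume run at the critical speed, so the factor $\alpha \geq 1$ closes the bound with room to spare. Adding the two disjoint charges gives $\sum_{j}\lambda_{j}p_{j} \geq \mathcal{E}^{*}_{1,\mathrm{high}} + \mathcal{E}^{*}_{1,\mathrm{crit}} = \mathcal{E}^{*}_{1}$.

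For the first inequality $2\mathcal{E}^{*}_{1} + 3\mathcal{E}^{*}_{2} \geq \mathcal{F}^{*}$, I would again write $\mathcal{F}^{*} = \int_{0}^{\infty} W(t)\,dt$ and decompose the time axis into (i) high-speed working periods, (ii) critical-speed working periods (plan blocks), and (iii) idle periods with a pending job (the delayed-plan waits); on the remaining time $W(t) = 0$. On (i), $\int_{\mathrm{high}} W(t)\,dt = \int_{\mathrm{high}} (s^{*}(t))^{\alpha}\,dt \leq \mathcal{E}^{*}_{1}$. On (ii)+(iii) I would invoke the defining property of a plan: its start time is chosen so that the energy of the block equals the total weighted flow-time of the jobs it serves, i.e. for a completed plan the flow accrued during its wait and its block equals $P(s^{c})L = (s^{c})^{\alpha}L + gL$ for a block of length $L$, the dynamic plus static energy of that block. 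Summing over plans charges the flow on (ii)+(iii) to $\mathcal{E}^{*}_{1,\mathrm{crit}} + (\text{static energy of the blocks}) \leq \mathcal{E}^{*}_{1} + \mathcal{E}^{*}_{2}$, so in the idealized setting with no further arrivals one already gets $\mathcal{F}^{*} \leq \mathcal{E}^{*}_{1} + \mathcal{E}^{*}_{2}$; the remaining slack in the constants $2$ and $3$ is what the genuine execution consumes.

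The hard part will be exactly this last point: making the plan-balancing argument rigorous when a newly released job invalidates the current plan and forces a recomputation with a new start time. One must show that the flow-time and block energy of an \emph{aborted} plan can still be charged, up to the stated factors, against the energy the algorithm actually pays — the dynamic energy of partial blocks together with the static energy spent during idle waits and during the $A/g$ pre-sleep idle — without double counting across successive recomputations. I expect to handle this by charging each maximal idle-plus-block stretch to the energy of the block that eventually runs, using the monotonicity of $W(t)$ between arrivals (each arrival only increases the pending weight), so that the balance inequality is preserved in the required direction and the loss is absorbed into the factors $2$ and $3$.
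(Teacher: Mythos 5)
Your overall plan — split the dynamic energy into the high-speed part and the critical-speed part, charge the third term of (\ref{eq:lambda}) against the latter, and decompose $\mathcal{F}^{*}=\int_{0}^{\infty}W(t)\,dt$ by machine state — is the same skeleton as the paper's proof, and your treatment of the critical part of the second inequality (via $P(s^{c})/s^{c}=\alpha(s^{c})^{\alpha-1}$) is sound. But the step you use to cover $\mathcal{E}^{*}_{1,\mathrm{high}}$ is wrong: the claim that the \emph{first two terms} of (\ref{eq:lambda}) sum to at least $\mathcal{F}^{*}$ is false, and it mis-cites the remark preceding the lemma, which asserts $\sum_{j}\lambda_{j}p_{j}\geq\mathcal{F}^{*}$ for the \emph{full} sum including the third term. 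Counterexample: a single job $j$ with $\frac{\alpha}{\alpha-1}w_{j}^{(\alpha-1)/\alpha}\leq P(s^{c})/s^{c}$ is served by a delayed plan whose start time is chosen precisely so that $\mathcal{F}^{*}=P(s^{c})p_{j}/s^{c}$, while the first two terms contribute only $w_{j}^{(\alpha-1)/\alpha}p_{j}<\frac{\alpha}{\alpha-1}w_{j}^{(\alpha-1)/\alpha}p_{j}\leq P(s^{c})p_{j}/s^{c}$; the waiting time inflates $\mathcal{F}^{*}$ beyond what the first two terms pay for, so the detour through $\mathcal{F}^{*}$ is unavailable. The paper instead argues locally: at any instant where a job $k$ runs at speed $W(t)^{1/\alpha}>s^{c}$, summing the per-unit-volume charge $w_{j}/W(t)^{1/\alpha}$ over pending jobs $j$ of density at most $\delta_{k}$ yields exactly $W(t)^{(\alpha-1)/\alpha}$, which matches the dynamic-energy cost per unit of processed volume at that speed; this rate-matching is what makes the first two terms cover $\mathcal{E}^{*}_{1,\mathrm{high}}$, and your proof needs it (or an equivalent) in place of the false inequality.

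Your proof of the first inequality also diverges from the paper at the point where it breaks. The paper never invokes plan balancing for the critical-speed \emph{working} periods: it uses the pointwise state condition $\frac{\alpha}{\alpha-1}W(t)^{(\alpha-1)/\alpha}\leq P(s^{c})/s^{c}$ together with $g=(\alpha-1)(s^{c})^{\alpha}$ to get $W(t)\leq(\alpha-1)^{1/(\alpha-1)}g\leq 2g$, hence $\int W(t)\one_{\{s^{*}(t)=s^{c}\}}dt\leq 2\mathcal{E}^{*}_{2}$ — a bound that holds instant by instant and is immune to arrivals; balancing is used only for the zero-speed pending periods, charged to $\mathcal{E}^{*}_{1}+\mathcal{E}^{*}_{2}$. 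Your case (ii) charge via plan energy fails to cover all critical-speed working time: the machine also runs at $s^{c}$ when it is \emph{already} in the working state and $W(t)$ drops below the threshold through completions, and for such stretches no plan and no balancing equality exist. Moreover the aborted-plan accounting, which you correctly identify as the hard part, is left as an expectation rather than an argument — and in your route it is load-bearing, since both (ii) and (iii) rest on it, whereas the paper confines the balancing step to the waiting periods alone. As written, both halves of the proposal therefore have genuine holes, and the natural repairs essentially reproduce the paper's pointwise bound and its rate-matching argument.
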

\begin{proof}
We prove the first inequality.
Consider times $t$ where the machine speed is $ s^{c}$.   
By the algorithm $P(s^{c})/s^{c} \geq \frac{\alpha}{\alpha-1}W(t)^{(\alpha-1)/\alpha}$.
Recall that by the definition of critical speed $g = (\alpha-1)(s^{c})^{\alpha}$,
so $\alpha (s^{c})^{\alpha} = P(s^{c})$. Therefore,
$s^{c} \geq (\alpha-1)^{-1/(\alpha-1)} W(t)^{1/\alpha}$. Hence,
\begin{align*}
2 \mathcal{E}^{*}_{2} \geq (\alpha - 1)^{\frac{1}{\alpha-1}} \int_{0}^{\infty} g \one_{\{s^{*}(t) = s^{c}\}} dt 
	 &=(\alpha - 1)^{\frac{1}{\alpha-1}} 
		\int_{0}^{\infty} (\alpha - 1) (s^{c})^{\alpha} \one_{\{s^{*}(t) = s^{c}\}}dt \\
	&\geq \int_{0}^{\infty} W(t) \one_{\{s^{*}(t) = s^{c}\}}dt
\end{align*}

Now consider times $t$ where the machine speed is $ W(t)^{1/\alpha}$
strictly larger than $ s^{c}$. Thus
the dynamic energy consumed on such periods is
$$
\mathcal{E}^{*}_{1} \geq \int_{0}^{\infty}(s^{*}(t))^{\alpha} \one_{\{s^{*}(t) > s^{c}\}}dt 
\geq \int_{0}^{\infty}  W(t) \one_{\{s^{*}(t) > s^{c}\}}dt 
$$

For periods where $s^{*}(t) = 0$ while some jobs are still pending
on the machine, by the algorithm plan, the total weighted flow time of jobs in such periods 
is bounded by $(\mathcal{E}^{*}_{1} + \mathcal{E}^{*}_{2})$
Therefore,
$$
2\mathcal{E}^{*}_{1} + 3\mathcal{E}^{*}_{2}
	\geq   \int_{0}^{\infty} W(t) dt
	=  \mathcal{F}^{*}. 
$$

In the rest, we prove the second inequality $\sum_{j} \lambda_{j}p_{j} \geq \mathcal{E}^{*}_{1}$.
By the definition of $\lambda_{j}$'s (particularly the third term in (\ref{eq:lambda})),
$\sum_{j} \lambda_{j}p_{j}$ covers the total energy of machine during all intervals where 
the machine processes jobs by speed $s^{c}$. Denote $\Gamma$ as 
$\sum_{j} \lambda_{j}p_{j}$ subtracting the energy incurred during periods where the machine 
speed is $s^{c}$. We need to prove that $\Gamma$  is enough to cover the total energy 
incurred over all moments where the machine speed is strictly larger than $s^{c}$.  
In the following, we are interested only in such moments.

Consider a job $k$ processed at time $t$ with speed larger than $s^{c}$. 
By the definition of $\lambda_{j}$'s, 
$\Gamma$ contributes to time $t$ an amount at least $\sum_{j} w_{j}/ W(t)^{1/\alpha}$ where
the sum is taken over pending jobs $j$ with smaller density than that of $k$.
The latter is exactly $W(t)$. Thus, $\Gamma$ contributes to time $t$
an amount $W(t)^{(\alpha-1)/\alpha}$.

Now consider an arbitrarily small interval $[a,b]$ where the machine processes only job $k$
and the speed is strictly larger than $s^{c}$.
Let $W$ be the total weight of pending jobs over $[a,b]$.  The processing 
amount of $k$ done over $[a,b]$ is $ W^{1/\alpha}(b-a)$ while the 
energy amount consumed in that interval is $ W(b-a)$. Hence, 
in average the machine spends $W^{(\alpha-1)/\alpha}$ 
(dynamic) energy unit at time $t$.     

Therefore, during periods where the machine speed is larger than $s^{c}$,
$\Gamma$ is increase at rate proportionally to the one of the dynamic energy.
The second inequality of the lemma follows.
\end{proof}

\begin{corollary}		\label{cor:general-energy-lambda}
It holds that
$
 	\sum_{j} \lambda_{j}p_{j}
		\geq\frac{7}{8}\mathcal{E}^{*}_{1} + 
			 \frac{1}{16}\mathcal{F}^{*} - \frac{3}{16} \mathcal{E}^{*}_{2}
$.
\end{corollary}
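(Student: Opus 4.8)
The plan is to obtain the corollary as an immediate algebraic consequence of Lemma~\ref{lem:general-energy-lambda} together with the observation recorded just before it, namely that $\sum_{j} \lambda_{j}p_{j} \geq \mathcal{F}^{*}$. Thus I have three facts at my disposal: (i) $\sum_{j}\lambda_{j}p_{j} \geq \mathcal{E}^{*}_{1}$, (ii) $\sum_{j}\lambda_{j}p_{j} \geq \mathcal{F}^{*}$, and (iii) $2\mathcal{E}^{*}_{1} + 3\mathcal{E}^{*}_{2} \geq \mathcal{F}^{*}$. The claimed lower bound carries a \emph{negative} coefficient on $\mathcal{E}^{*}_{2}$, which is the tell-tale sign that inequality (iii) is meant to be used to trade the flow-time term against the energy terms.

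Concretely, I would first rewrite (iii) as $\tfrac{1}{16}\mathcal{F}^{*} \leq \tfrac{1}{8}\mathcal{E}^{*}_{1} + \tfrac{3}{16}\mathcal{E}^{*}_{2}$, equivalently $\tfrac{1}{16}\mathcal{F}^{*} - \tfrac{3}{16}\mathcal{E}^{*}_{2} \leq \tfrac{1}{8}\mathcal{E}^{*}_{1}$. Substituting this into the target right-hand side yields $\tfrac{7}{8}\mathcal{E}^{*}_{1} + \tfrac{1}{16}\mathcal{F}^{*} - \tfrac{3}{16}\mathcal{E}^{*}_{2} \leq \tfrac{7}{8}\mathcal{E}^{*}_{1} + \tfrac{1}{8}\mathcal{E}^{*}_{1} = \mathcal{E}^{*}_{1}$, and then (i) closes the argument since $\sum_{j}\lambda_{j}p_{j} \geq \mathcal{E}^{*}_{1}$. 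There is also a symmetric route via a convex combination of (i) and (ii): weighting (i) by $\tfrac{7}{8}$ and (ii) by $\tfrac{1}{8}$ gives $\sum_{j}\lambda_{j}p_{j} \geq \tfrac{7}{8}\mathcal{E}^{*}_{1} + \tfrac{1}{8}\mathcal{F}^{*}$, after which one simply notes $\tfrac{1}{8}\mathcal{F}^{*} = \tfrac{1}{16}\mathcal{F}^{*} + \tfrac{1}{16}\mathcal{F}^{*} \geq \tfrac{1}{16}\mathcal{F}^{*} - \tfrac{3}{16}\mathcal{E}^{*}_{2}$, since $\mathcal{F}^{*}, \mathcal{E}^{*}_{2} \geq 0$. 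Either way, the whole proof is a single linear combination of known inequalities.

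Since the statement is a purely algebraic corollary, there is no genuine obstacle here beyond bookkeeping; the only thing to get right is the choice of coefficients reproducing exactly the constants $\tfrac{7}{8}$, $\tfrac{1}{16}$, $\tfrac{3}{16}$. I would verify the arithmetic ($\tfrac{7}{8}+\tfrac{1}{8}=1$, and $2\cdot\tfrac{1}{16}=\tfrac{1}{8}$, $3\cdot\tfrac{1}{16}=\tfrac{3}{16}$) to confirm the substitution matches term by term. I expect these particular fractions were picked not because the bound is delicate but because they are precisely what the next step — bounding the Lagrangian dual against the primal cost $\mathcal{E}^{*}_{1}+\mathcal{E}^{*}_{2}+\mathcal{E}^{*}_{3}+\mathcal{F}^{*}$ — will require.
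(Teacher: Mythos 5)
Your proposal is correct and its main route is exactly the paper's proof: the paper bounds $\sum_j \lambda_j p_j \geq \mathcal{E}^{*}_{1} = \frac{7}{8}\mathcal{E}^{*}_{1} + \frac{1}{8}\mathcal{E}^{*}_{1} \geq \frac{7}{8}\mathcal{E}^{*}_{1} + \frac{1}{8}\bigl(\frac{1}{2}\mathcal{F}^{*} - \frac{3}{2}\mathcal{E}^{*}_{2}\bigr)$, which is precisely your substitution of inequality (iii) into the target with the same coefficients. Your alternative convex-combination route via $\sum_j \lambda_j p_j \geq \mathcal{F}^{*}$ is also valid (and even avoids inequality (iii)), but it is only a cosmetic variant of the same one-line linear-combination argument.
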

\begin{proof}
By the previous lemma, we deduce that
\begin{align*}
	\sum_{j} \lambda_{j}p_{j}
		\geq \mathcal{E}^{*}_{1} 
		\geq \frac{7}{8}\mathcal{E}^{*}_{1} + 
			\frac{1}{8} \biggl(
					\frac{1}{2}\mathcal{F}^{*} - \frac{3}{2} \mathcal{E}^{*}_{2} \biggl)
		=  \frac{7}{8}\mathcal{E}^{*}_{1} + 
			 \frac{1}{16}\mathcal{F}^{*} - \frac{3}{16} \mathcal{E}^{*}_{2}.
\end{align*}
\end{proof}

In the following, we show the main technical lemma.

\begin{lemma}		\label{lem:energy-general-main}
Let $j$ be an arbitrary job.  
Then, for every $t \geq r_{j}$
\begin{align}		\label{eq:energy-general-main}
\lambda_{j} - \delta_{j} (t-r_{j}) \leq \max \biggl\{ \frac{\alpha}{\alpha-1} W(t)^{\frac{\alpha-1}{\alpha}} 
		+ \frac{P(s^{c})}{s^{c}}, 2\frac{P(s^{c})}{s^{c}} \biggl \} 
\end{align}
\end{lemma}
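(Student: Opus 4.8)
The plan is to fix the job $j$, freeze $\lambda_j$, and reduce the statement to a telescoping estimate that need only be checked at finitely many times. The first observation is that the left-hand side $\lambda_j-\delta_j(t-r_j)$ depends only on data available at $r_j$: the value $\lambda_j$ is frozen when $j$ arrives, and $\delta_j(t-r_j)$ is a fixed affine function of $t$. The right-hand side, by contrast, is nondecreasing in $W(t)$, and releasing further jobs after $r_j$ can only increase $W(t)$; hence the inequality is tightest when no job is released after $r_j$, and I would assume this. Under that assumption the machine simply processes the jobs $1,\dots,k$ pending at $r_j$ in non-increasing order of density, so both $W(t)$ and the elapsed time are explicit. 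Second, on any maximal interval on which the pending set is constant, $W(t)$ and thus the right-hand side are constant while the left-hand side decreases at rate $\delta_j$; it therefore suffices to verify the bound at the left endpoints of these intervals, namely at $t=r_j$ and immediately after each completion.

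For the main estimate I would use the elementary inequality $\frac{W_a-W_{a+1}}{W_a^{1/\alpha}}\le\int_{W_{a+1}}^{W_a}x^{-1/\alpha}\,dx$, valid because $x^{-1/\alpha}$ is decreasing, together with $\delta_j q_a(r_j)\le\delta_a q_a(r_j)\le w_a=W_a-W_{a+1}$ for the higher-density indices $a\le j$. Suppose job $m$ is in service at time $t$ and $j$ is still pending, so $m\le j$, $W(t)=W_m$, and in the high-speed regime the elapsed time equals $\sum_{a=1}^{m-1}q_a(r_j)W_a^{-1/\alpha}$. Subtracting this from $\lambda_j$ cancels the first $m-1$ flow terms, and telescoping the rest gives $\delta_j\sum_{a=m}^{j}q_a(r_j)W_a^{-1/\alpha}\le\frac{\alpha}{\alpha-1}\bigl(W_m^{(\alpha-1)/\alpha}-W_{j+1}^{(\alpha-1)/\alpha}\bigr)$. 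Adding the externality term, bounded by $W_{j+1}W_j^{-1/\alpha}\le W_{j+1}^{(\alpha-1)/\alpha}$, the $W_{j+1}$ contributions combine to a nonpositive remainder and leave $\frac{\alpha}{\alpha-1}W_m^{(\alpha-1)/\alpha}$; with the third term $P(s^c)/s^c$ of $\lambda_j$ this is exactly the first branch of the maximum.

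When the machine runs at the critical speed $s^c$ rather than at $W(t)^{1/\alpha}$, the algorithm's threshold gives $\frac{\alpha}{\alpha-1}W(t)^{(\alpha-1)/\alpha}\le P(s^c)/s^c$, so the first branch is dominated by the second, $2P(s^c)/s^c$. Here I would re-run the telescoping with $s^c$ in place of $W_a^{1/\alpha}$ in the elapsed-time term and use the defining property $P(s^c)/s^c\le P(z)/z$ of the critical speed to absorb the flow contribution into a single $P(s^c)/s^c$, the remaining $P(s^c)/s^c$ coming from the third term of $\lambda_j$.

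The delicate point, and the step I expect to be the main obstacle, is the regime where $t$ exceeds the completion time of $j$, so that only the externality term $W_{j+1}W_j^{-1/\alpha}$ survives on the left while $W(t)$ keeps shrinking on the right. If job $m>j$ is in service I would attempt the reverse telescoping $W_{j+1}W_j^{-1/\alpha}-\delta_j\sum_{a=j+1}^{m-1}q_a(r_j)W_a^{-1/\alpha}$, using $\frac{W_a-W_{a+1}}{W_a^{1/\alpha}}\ge\frac{W_a-W_{a+1}}{W_{j+1}^{1/\alpha}}$ to collapse the sum to $(W_{j+1}-W_m)/W_{j+1}^{1/\alpha}$ and obtain the clean bound $W_m/W_{j+1}^{1/\alpha}\le W_m^{(\alpha-1)/\alpha}$. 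This works only if $\delta_j q_a(r_j)\ge w_a$ for the lower-density pending jobs $a>j$, i.e.\ $q_a(r_j)\ge(\delta_a/\delta_j)p_a$; the difficulty is that a low-density job processed almost to completion before $r_j$ still carries its full weight into $W(t)$ while contributing essentially nothing to the elapsed time, so its completion drops the right-hand side sharply with almost no matching decrease on the left. Establishing the required lower bound on the residual volumes $q_a(r_j)$ of lower-density jobs — presumably from an invariant of the highest-density-first rule together with the $s^c$ floor, or by charging such nearly finished jobs separately — is where I expect the real work to concentrate.
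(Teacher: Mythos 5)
Your base-case computation is sound and coincides with the paper's: the elapsed-time cancellation, the bounds $\delta_j q_a(r_j)\le w_a$ and $w_a/W_a^{1/\alpha}\le\int_{W_{a+1}}^{W_a}z^{-1/\alpha}dz$, and the clamping to $2P(s^c)/s^c$ via $s^c=\arg\min_z P(z)/z$ are exactly Subcases 1.1/1.2/2.1 of Lemma~\ref{lem:base}. But your opening reduction --- ``releasing further jobs after $r_j$ can only increase $W(t)$, hence WLOG no job is released after $r_j$'' --- is a genuine error, and it removes the hardest part of the lemma. $W(t)$ is \emph{not} pointwise monotone under arrivals: an extra pending job raises the speed $W(t)^{1/\alpha}$ at all times while it is pending, so an already-pending heavy job can complete \emph{earlier} in the augmented instance. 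Concretely, add after $r_j$ a job of tiny weight $w_C$ and tiny volume with density below $\delta_j$; a high-density job of large weight $w_A$ then finishes at some $T_A'<T_A$, and for every $t\in(T_A',T_A)$ the augmented instance has $W(t)$ smaller by $w_A-w_C$ than the arrival-free one. At such $t$ the inequality is strictly \emph{harder} in the augmented instance, so your reduction points the wrong way. This is precisely why the paper has Lemma~\ref{lem:induction}: it inducts on the number of jobs released after $r_j$, fixes the volume of the last arrival $n$ and varies its weight $w_n$ continuously, and shows via the drain-rate bound $dV(t)\le\delta_j V(t)^{1/\alpha}dt$ (jobs run after $C_j$ have density at most $\delta_j$) that $\frac{\alpha}{\alpha-1}dW(t)^{(\alpha-1)/\alpha}\le\delta_j\,dt$, i.e.\ the right-hand side can never fall faster than the drift $-\delta_j(t-r_j)$ on the left; a separate telescoping then covers $t<C_j(\mathcal{I}_n)$.

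On the second difficulty you flag, your diagnosis is right that the volume-based reverse telescoping needs $\delta_j q_a(r_j)\ge w_a$ for low-density jobs and that this fails for jobs nearly finished before $r_j$ --- but the fix you anticipate (a lower bound on residuals, or separate charging of nearly finished jobs) is not what the paper does. Instead it switches from volume-based to \emph{time-based} accounting: in Subcase 2.2 it proves the invariant only at the completion epochs $C_j,C_{j+1},\ldots$ by an inner induction, charging the drop $w_a/W_a^{1/\alpha}$ of the telescoped bound against the elapsed time through inequality (\ref{eq:weight-a}), $w_a\le\delta_j W_a^{1/\alpha}(C_{a+1}-C_a)$, which is again the drain-rate estimate and requires no lower bound on $q_a(r_j)$; the constant branch $2P(s^c)/s^c$ of the maximum absorbs the stretches where $W(t)$ is below the critical threshold and the speed is $s^c$. (A fair caveat: as written, the paper's identity $w_a=W_a-W_{a+1}=\int_{C_a}^{C_{a+1}}V'(\tau)d\tau$ itself implicitly takes $q_a(C_a)=p_a$, so the jobs partially processed before $r_j$ that worry you are handled somewhat loosely there too; still, time-based charging against the $\delta_j(t-r_j)$ term is the idea your plan is missing.) As submitted, your proposal proves only the arrival-free, $t\le C_j$ portion of the lemma.
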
 
\begin{proof}
Fix a job $j$. 
%
We prove by induction on the number of released jobs
after $r_{j}$. The base case follows Lemma \ref{lem:base}
and the induction step is done by Lemma \ref{lem:induction}. 
\end{proof}

\begin{lemma}		\label{lem:base}
If no new job is released after $r_{j}$ then inequality (\ref{eq:energy-general-main}) holds.
\end{lemma}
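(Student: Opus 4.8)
The plan is to exploit that, once no job arrives after $r_j$, the schedule is frozen: the machine runs the jobs pending at $r_j$ — indexed $1,\dots,k$ as in the definition (\ref{eq:lambda}), i.e. in the order they are processed — one at a time, at speed $\max\{W(t)^{1/\alpha},s^c\}$, possibly separated by idle stretches. Writing the left side of (\ref{eq:energy-general-main}) as $\phi(t)=\lambda_j-\delta_j(t-r_j)$, the key structural fact is that $\phi$ is continuous and \emph{strictly decreasing} (its only $t$-dependence is $-\delta_j(t-r_j)$), while $W(t)$ is non-increasing and changes only at completions; since the two branches of the right-hand maximum cross exactly at the weight separating the regime $W(t)^{1/\alpha}$ from the regime $s^c$, the right-hand side is piecewise constant and non-increasing too. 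Hence it suffices to check (\ref{eq:energy-general-main}) at the left endpoint $\sigma_b$ of each maximal interval on which $W(t)$ is constant, namely at $r_j$ and just after each completion. I set $\Psi(W):=\frac{\alpha}{\alpha-1}W^{(\alpha-1)/\alpha}$, which is concave with $\Psi'(W)=W^{-1/\alpha}$; after cancelling the common term $P(s^c)/s^c$, the target at a checkpoint becomes $\phi(\sigma_b)-\frac{P(s^c)}{s^c}\le\Psi(W_b)$ in the high-speed regime and $\le\frac{P(s^c)}{s^c}$ in the critical regime, where $W_b$ is the pending weight while job $b$ is processed.

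For a high-speed checkpoint I would use three ingredients: (i) the processing order gives $\delta_j\,q_a(r_j)\le w_a$ for $a\le j$ and $\delta_j\,q_a(r_j)\ge w_a$ for $a>j$; (ii) concavity of $\Psi$ yields $\frac{w_a}{W_a^{1/\alpha}}\le\Psi(W_a)-\Psi(W_{a+1})$, together with the reverse estimate $\Psi(W_a)-\Psi(W_{a+1})\ge\frac{w_a}{W_a^{1/\alpha}}$ obtained since $W^{-1/\alpha}\ge W_a^{-1/\alpha}$ on $[W_{a+1},W_a]$; and (iii) the identity $w_j+W_{j+1}=W_j$, which fuses the flow-time term of $\lambda_j$ with the delay term $W_{j+1}/W_j^{1/\alpha}$ into $W_j^{(\alpha-1)/\alpha}$. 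For $b\le j$ (a job at least as dense as $j$), substituting the high-speed processing times for $\sigma_b-r_j$ and applying (i)–(iii) collapses $\phi(\sigma_b)-\frac{P(s^c)}{s^c}$ to $\sum_{a=b}^{j-1}\frac{w_a}{W_a^{1/\alpha}}+W_j^{(\alpha-1)/\alpha}$, which telescopes to at most $\Psi(W_b)-\Psi(W_j)+W_j^{(\alpha-1)/\alpha}\le\Psi(W_b)$. For $b>j$ the same bookkeeping gives $\phi(\sigma_b)-\frac{P(s^c)}{s^c}\le\frac{W_{j+1}}{W_j^{1/\alpha}}-\sum_{a=j+1}^{b-1}\frac{w_a}{W_a^{1/\alpha}}$; defining $G(b)$ to be this expression minus $\Psi(W_b)$, I would show $G$ is non-decreasing in $b$ (each step is $\Psi(W_b)-\Psi(W_{b+1})-\frac{w_b}{W_b^{1/\alpha}}\ge0$) and that $G(k+1)\le0$ because $\frac{W_{j+1}}{W_j^{1/\alpha}}\le W_{j+1}^{(\alpha-1)/\alpha}\le\sum_{a>j}\frac{w_a}{W_a^{1/\alpha}}$, whence $G(b)\le0$ for all $b>j$.

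Finally, the critical-speed regime is where I expect the real friction, since there the processing times are $q_a(r_j)/s^c$ rather than $q_a(r_j)/W_a^{1/\alpha}$ and the tidy expression for $\sigma_b-r_j$ breaks; the natural but wasteful reaction is to redo the checkpoint estimate with $s^c$. The clean route is to avoid recomputation and lean on the \emph{unconditional} monotonicity of $\phi$. Let $\sigma^\ast$ be the first completion at which $W(t)$ falls to the critical threshold. Every job finished before $\sigma^\ast$ ran at high speed, so the previous paragraph applies verbatim at $\sigma^\ast$ and gives $\phi(\sigma^\ast)-\frac{P(s^c)}{s^c}\le\Psi(W(\sigma^\ast))\le\frac{P(s^c)}{s^c}$, the last step being exactly the defining condition of the critical regime (and the term $P(s^c)q_j(r_j)/s^c$ in $\lambda_j$ is precisely what makes this entry value land at $P(s^c)/s^c$). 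For every $t\ge\sigma^\ast$ we have $W(t)\le W(\sigma^\ast)$, so the right-hand side equals $2P(s^c)/s^c$, while $\phi(t)\le\phi(\sigma^\ast)\le2P(s^c)/s^c$ by monotonicity; the inequality therefore survives all idle periods and remaining completions automatically. Combining the high-speed checkpoints with this single critical entry bound establishes (\ref{eq:energy-general-main}) for all $t\ge r_j$.
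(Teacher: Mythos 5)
Your proof is correct and, at its core, performs the same computation as the paper's: you cancel the high-speed processing times against the first sum in $\lambda_j$, fuse $w_j + W_{j+1} = W_j$ into $W_j^{(\alpha-1)/\alpha}$, and control the tail sums by $w_a/W_a^{1/\alpha} \le \Psi(W_a)-\Psi(W_{a+1})$, which is exactly the paper's bound $\sum_{a\ge k} w_a/W_a^{1/\alpha} \le \frac{\alpha}{\alpha-1}W(t)^{(\alpha-1)/\alpha}$ in telescoped form. What differs is the scaffolding, and yours is tidier in two places: first, the reduction to left endpoints of the constancy intervals of $W(t)$ via strict monotonicity of $\phi$, together with the endpoint-anchored potential $G(b)$ (shown monotone and anchored at $G(k+1)\le 0$, then propagated backward), replaces the paper's forward induction over completion times in its Subcase 2.2; second, your single critical-entry bound $\phi(\sigma^\ast)\le 2P(s^c)/s^c$ followed by unconditional monotonicity of $\phi$ disposes of the entire critical-speed regime at once, including the plan-induced idle delays, which the paper treats piecemeal (its Subcase 1.2 and the tail of Subcase 2.2) and only implicitly for idle periods. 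One caveat, which you share with the paper rather than introduce: your ingredient (i) for $a>j$, namely $\delta_j q_a(r_j)\ge w_a$, holds only when such lower-density jobs are unprocessed at $r_j$ (i.e. $q_a(r_j)=p_a$, since in general one only gets $\delta_j p_a \ge w_a$); but this is precisely the tacit assumption behind the paper's inequality (\ref{eq:weight-a}), whose fractional-weight bookkeeping $w_a = \int_{C_a}^{C_{a+1}} V'(\tau)\,d\tau$ likewise treats job $a$ as carrying its full volume at the start of its service interval. So your attempt establishes inequality (\ref{eq:energy-general-main}) by essentially the paper's route, at the paper's level of rigor, with a cleaner organization of the case analysis.
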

\begin{proof}
Denote the instance as $\mathcal{I}_{0}$.
At $r_{j}$, rename jobs in non-increasing order of their densities, i.e., 
$p_{1}/w_{1} \leq \ldots \leq p_{n}/w_{n}$ (note that $p_{a}/w_{a}$ is the inverse of job $a$'s density).
Denote $W_{a} = w_{a} + \ldots + w_{n}$ for $1 \leq a \leq n$.

By definition of $\lambda_{j}$,
\begin{align*}		
\lambda_{j} - \frac{P(s^{c})}{s^{c}} 
	= \delta_{j}\biggl[ \frac{q_{1}(r_{j})}{ W_{1}^{1/\alpha}} + \ldots + \frac{q_{j}(r_{j})}{ W_{j}^{1/\alpha}} \biggl]
		+ \frac{W_{j+1}}{W_{j}^{1/\alpha}}
\end{align*}

Let $C_{a}(\mathcal{I}_{0})$ be the completion time of job $a$ for every $a$. Moreover,
let $\ell$ be the largest job index such that 
$\frac{\alpha}{\alpha-1}W_{\ell}^{(\alpha-1)/\alpha} \geq P(s^{c})/s^{c}$.
In other words, job $\ell$ is processed by speed strictly larger than $s^{c}$ and the other jobs
with larger index (if exist) will be processed by speed $s^{c}$.
Fix a time $t$, let $k$ be the pending job at $t$ with the smallest index.
We prove first the following claim.

\begin{claim}
It holds that
$$
\lambda_{j} - \delta_{j} (t-r_{j}) - \frac{P(s^{c})}{s^{c}}
\leq \max \biggl\{ \frac{w_{k}}{W_{k}^{1/\alpha}} + 
		\frac{w_{k+1}}{W_{k+1}^{1/\alpha}} + \ldots + \frac{w_{n}}{W_{n}^{1/\alpha}}, \frac{P(s^{c})}{s^{c}} \biggl\}
$$
\end{claim}
\begin{claimproof}
We consider different cases.
\paragraph{Case 1: $\ell \leq j$.} In this case, job $j$ will be processed by speed $s^{c}$.
	\begin{description}
		\item[Subcase 1.1: $t \leq C_{\ell}(\mathcal{I}_{0})$.]
		During interval $[r_{j},t]$, the machine has completed jobs $1, \ldots, k-1$ and has processed a part of job $k$.
		Precisely, during $[r_{j},t]$ the machine has processed $q_{a}(r_{j})$ units of job $a$
		for every job $1 \leq a < k$ and has executed $(q_{k}(r_{j}) - q_{k}(t))$ units of job $k$.   
		Moreover, every job $1 \leq a \leq k$ is processed with speed $W_{a}^{1/\alpha}$.  
		Therefore,  
\begin{align*}
\lambda_{j} - \delta_{j} (t-r_{j}) - \frac{P(s^{c})}{s^{c}}
&= \delta_{j} \biggl[ \frac{q_{k}(t)}{W_{k}^{1/\alpha}} + 
		\frac{q_{k+1}(r_{j})}{W_{k+1}^{1/\alpha}} + \ldots + \frac{q_{j}(r_{j})}{W_{j}^{1/\alpha}} \biggl]
		+ \frac{W_{j+1}}{W_{j}^{1/\alpha}} \\
&\leq \delta_{j} \biggl[ \frac{p_{k}}{W_{k}^{1/\alpha}} + 
		\frac{p_{k+1}}{W_{k+1}^{1/\alpha}} + \ldots + \frac{p_{j}}{W_{j}^{1/\alpha}} \biggl]
		+ \frac{W_{j+1}}{W_{j}^{1/\alpha}} \\
&= \delta_{j} \biggl(  \frac{w_{k}}{\delta_{k} W_{k}^{1/\alpha}} + 
		\frac{w_{k+1}}{\delta_{k+1} W_{k+1}^{1/\alpha}} + \ldots + \frac{w_{j}}{\delta_{j} W_{j}^{1/\alpha}} \biggl)
		+ \frac{W_{j+1}}{ W_{j}^{1/\alpha}} \\
&\leq \frac{w_{k}}{W_{k}^{1/\alpha}} + 
		\frac{w_{k+1}}{W_{k+1}^{1/\alpha}} + \ldots + \frac{w_{j}}{W_{j}^{1/\alpha}}
		+ \frac{w_{j+1}}{W_{j+1}^{1/\alpha}} + \ldots + \frac{w_{n}}{W_{n}^{1/\alpha}} \\
&\leq \int_{0}^{W(t)} \frac{dz}{z^{1/\alpha}}  
= \frac{\alpha}{\alpha - 1} W(t)^{\frac{\alpha-1}{\alpha}} 	
\end{align*}
		The first inequality is because $q_{a}(r_{j}) \leq p_{a}$ for every job $a$.
		The first equality is due to the definition of the density.
		The second inequality follows since $\delta_{j} \leq \delta_{a}$ for every job $a \leq j$
		and $W_{j+1} \geq \ldots \geq W_{n}$.
		The third inequality holds since function $z^{-1/\alpha}$ is decreasing. 

		\item[Subcase 1.2: $t > C_{\ell}(\mathcal{I}_{0})$.]
		In this case $k > \ell$, i.e., during $[r_{j},t]$ the machine $i$ has completed 
		jobs $1, \ldots, \ell$. Similarly as the previous subcase, we have 
\begin{align*}
\lambda_{j} &- \delta_{j} (t-r_{j}) -  \frac{P(s^{c})}{s^{c}}
	\leq   \sum_{a=\ell+1}^{n} \frac{w_{a}}{W_{a}^{1/\alpha}} \\
	&\leq \int_{0}^{W_{\ell+1}} \frac{dz}{z^{1/\alpha}} 
		= \frac{\alpha}{\alpha - 1} W_{\ell+1}^{\frac{\alpha-1}{\alpha}} 	\leq \frac{P(s^{c})}{s^{c}} 
\end{align*}
where the last inequality follows the definition of $\ell$.
	\end{description}

\paragraph{Case 2: $\ell > j$.} In this case, job $j$ will be processed with speed strictly larger than $s^{c}$.
	\begin{description}
		\item[Subcase 2.1: $t \leq C_{j}(\mathcal{I}_{0})$.] The proof is done in the same manner as in Subcase 1.1.
		\item[Subcase 2.2: $t > C_{j}(\mathcal{I}_{0})$.] 
		For simplicity, denote $C_{a} = C_{j}(\mathcal{I}_{0})$.
		Partition time after $C_{j}(\mathcal{I}_{0})$ as $\cup_{a = j}^{n} [C_{a}, C_{a+1})$.  
		During an interval $ [C_{a}, C_{a+1})$, the weight $W_{a}$ is unchanged so 
		to show inequality (\ref{eq:energy-general-main}), it is sufficient to prove 
		at $t = C_{j}, C_{j+1}, \ldots, C_{n-1}$. 
		
		We prove again by induction. For the base case $t = C_{j}$, the claim inequality holds
		by the previous case. Assume that the inequality holds at $t = C_{a}$, we will prove that 
		it holds at $t = C_{a+1}$ for $a \geq j$. We are interested only in $\tau \in [C_{a},C_{a+1})$.
		Let $V(\tau) = w_{a} q_{a}(\tau)/p_{a} + w_{a+1} + \ldots + w_{n}$. Informally, $V(\tau)$ is the fractional
		weight of pending jobs at time $\tau$. 
		
		During period $[\tau,\tau+d\tau]$ assume that the total fractional weight of pending jobs 
		varies by $dV(\tau)$. During the same period $[\tau,\tau+d\tau]$,
		the total processing volume done by algorithm is at least
		$W_{a}^{1/\alpha}d\tau$ since the speed is either $W(\tau)^{1/\alpha} (= W_{a}^{1/\alpha})$ or 
		$s^{c}$ but in the latter, by the algorithm, $s^{c} \geq W(\tau)^{1/\alpha}$.
		Moreover, jobs $a$ processed during $[\tau,\tau+d\tau]$ have density at most $\delta_{j}$.
		Therefore, $dV(\tau) \leq  \delta_{j} W_{a}^{1/\alpha}d\tau$. In other words,
		$V'(\tau)d\tau \leq \delta_{j}W_{a}^{1/\alpha}d\tau$. Taking integral, we get   
		\begin{align}	\label{eq:weight-a}
		w_{a} = W_{a} - W_{a+1} = \int_{C_{a}}^{C_{a+1}} V'(\tau)d\tau 
			\leq \int_{C_{a}}^{C_{a+1}} \delta_{j} W_{a}^{1/\alpha}d\tau
			= \delta_{j} W_{a}^{1/\alpha} (C_{a+1} - C_{a})
		\end{align}
		Therefore, 
		\begin{align*}
			\lambda_{j} - \delta_{j} (C_{a+1}-r_{j}) - \frac{P(s^{c})}{s^{c}}
				&= \lambda_{j} - \delta_{j} (C_{a}-r_{j}) - \frac{P(s^{c})}{s^{c}} -  \delta_{j} (C_{a+1}-C_{a}) \\
				&\leq \max \biggl \{ \frac{w_{a}}{W_{a}^{1/\alpha}} + \ldots + \frac{w_{n}}{W_{n}^{1/\alpha}} 
								,\frac{P(s^{c})}{s^{c}} \biggl \} 
						- \frac{w_{a}}{W_{a}^{1/\alpha}} \\
				&\leq \max \biggl \{ \frac{w_{a+1}}{W_{a+1}^{1/\alpha}} + \ldots + \frac{w_{n}}{W_{n}^{1/\alpha}}
								,\frac{P(s^{c})}{s^{c}} \biggl \} 
		\end{align*}
		where the first inequality is due to the induction hypothesis and inequality (\ref{eq:weight-a}).	
	\end{description}
Combining all the cases, the claim holds.
\end{claimproof}

Using the claim, the lemma follows immediately as shown below.
\begin{align*}
\lambda_{j} &- \delta_{j} (t-r_{j}) - \frac{P(s^{c})}{s^{c}}
\leq \max \biggl \{ \frac{w_{k}}{W_{k}^{1/\alpha}} + \ldots + \frac{w_{n}}{W_{n}^{1/\alpha}}, \frac{P(s^{c})}{s^{c}} \biggl \} \\
&\leq \max \biggl \{ \int_{0}^{W(t)} \frac{dz}{z^{1/\alpha}}, \frac{P(s^{c})}{s^{c}} \biggl \}   
= \max \biggl \{ \frac{\alpha}{\alpha - 1} W(t)^{\frac{\alpha-1}{\alpha}}, \frac{P(s^{c})}{s^{c}} \biggl \}  	
\end{align*}
where the inequality holds since function $z^{-1/\alpha}$ is decreasing. (Recall that $k$ is the pending job 
at time $t$ with the smallest index.)
\end{proof}

\begin{lemma}		\label{lem:induction}
Assume that inequality (\ref{eq:energy-general-main}) holds if 
there are $(n-1)$ jobs released after $r_{j}$. Then the inequality also holds
if $n$ jobs are released after $r_{j}$.
\end{lemma}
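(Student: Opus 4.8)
The plan is to induct by deleting the most recently released job. Let $m$, released at time $\rho$, be the $n$-th (last) of the jobs released after $r_{j}$, and let $\mathcal{I}'$ be the instance obtained by deleting $m$; then $\mathcal{I}'$ has only $n-1$ jobs released after $r_{j}$, so the induction hypothesis supplies inequality (\ref{eq:energy-general-main}) for $\mathcal{I}'$ at every $t \geq r_{j}$. The first thing I would record is that $\lambda_{j}$ is identical in the two instances: by (\ref{eq:lambda}) it is a function only of the jobs pending at $r_{j}$ and of their remaining volumes $q_{a}(r_{j})$, and deleting a job released strictly after $r_{j}$ changes none of these. Hence the left-hand side $\lambda_{j} - \delta_{j}(t-r_{j})$ of (\ref{eq:energy-general-main}) is the same for $\mathcal{I}$ and $\mathcal{I}'$, and only the pending weight $W(t)$ on the right can differ between them.

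First I would dispose of the times $t \in [r_{j},\rho)$. Since the algorithm is online and $m$ has not yet been released, every decision it makes before $\rho$ --- the machine state, the speed, the choice of job processed, and any plan formed in an idle or sleep state --- coincides on $\mathcal{I}$ and $\mathcal{I}'$; in particular the two pending sets agree and $W(t)=W'(t)$ there, so (\ref{eq:energy-general-main}) is literally the induction hypothesis for $\mathcal{I}'$. At $t=\rho$ the pending weight can only jump upward, since $m$ enters the pending set, so the right-hand side does not decrease while the left-hand side is continuous; the inequality therefore passes across the instant $\rho$.

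The real work lies in the range $t \geq \rho$. Because no job is released after $\rho$, the evolution of $W(t)$ for $t \geq \rho$ is governed solely by the configuration of jobs pending at $\rho$, processed highest-density-first under the speed rule --- exactly the situation of Lemma~\ref{lem:base} --- so I would re-run that forward argument with $\rho$ as the new anchor. Here $W(t)$ is a step function that drops by $w_{a}$ whenever a job $a$ completes, while the left-hand side decreases linearly at rate $\delta_{j}$, so it suffices to re-verify (\ref{eq:energy-general-main}) immediately after each completion. These per-completion checks are precisely those of Subcases~1.1/2.1 (for jobs of density larger than $\delta_{j}$, cleared early, where the integral bound $\int_{0}^{W(t)} z^{-1/\alpha}\,dz = \frac{\alpha}{\alpha-1}W(t)^{(\alpha-1)/\alpha}$ dominates) and of Subcase~2.2 (for jobs of density at most $\delta_{j}$, where $w_{a} \leq \delta_{j} W_{a}^{1/\alpha}(C_{a+1}-C_{a})$ shows the left-hand side falls at least as fast as the right). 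The newcomer $m$ merely inserts one more job into the density order and shifts the completion times.

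The main obstacle is the hand-off at $\rho$, and the reason a softer argument will not suffice is that the naive comparison $W(t) \geq W'(t)$ can fail: inserting a low-density job of small weight raises the total weight, hence the speed, which may clear a high-density pending job early and momentarily push $W(t)$ below $W'(t)$. Consequently one cannot simply combine monotonicity of the right-hand side with the induction hypothesis; instead one must show that at $\rho$ the quantity $\lambda_{j} - \delta_{j}(\rho-r_{j}) - P(s^{c})/s^{c}$ is already bounded by the sharper sum $\sum_{a} w_{a}/W_{a}^{1/\alpha}$ taken over the jobs pending at $\rho$ of density at most $\delta_{j}$ --- not merely by its integral relaxation $\frac{\alpha}{\alpha-1}W(\rho)^{(\alpha-1)/\alpha}$ --- so that the Subcase~2.2 recursion can be restarted cleanly from $\rho$. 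Establishing this starting estimate, which reconciles the $r_{j}$-anchored left-hand side with the $\rho$-anchored pending configuration after $m$ has reshuffled the schedule, is where the genuine effort concentrates; once it is in hand the remaining completion-by-completion verification is routine and mirrors Lemma~\ref{lem:base}.
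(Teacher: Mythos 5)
Your treatment of $t < \rho$ is correct ($\lambda_{j}$ depends only on the configuration at $r_{j}$, so it is unchanged by deleting the last job, and the online algorithm behaves identically on the two instances before $\rho$), and your diagnosis that the naive monotonicity $W(t) \geq W'(t)$ can fail after $\rho$ is also right. But the proposal has a genuine gap exactly where you flag it: the ``starting estimate'' at $\rho$ --- that $\lambda_{j} - \delta_{j}(\rho - r_{j}) - P(s^{c})/s^{c}$ is bounded by the discrete sum $\sum_{a} w_{a}/W_{a}^{1/\alpha}$ over jobs pending at $\rho$ with the newcomer inserted --- is declared to be where the effort concentrates and then never proved. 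It also cannot be extracted from the induction hypothesis as the lemma is stated: the hypothesis supplies only the integral-form bound $\frac{\alpha}{\alpha-1}W(t)^{(\alpha-1)/\alpha} = \int_{0}^{W(t)} z^{-1/\alpha}dz$, which strictly over-estimates the discrete sum, so it is too weak to serve as the base of a restarted Subcase~2.2 recursion. Making your route work would require strengthening the induction invariant to the discrete-sum form (in effect, inducting on the Claim inside Lemma~\ref{lem:base} rather than on inequality (\ref{eq:energy-general-main}) itself), which is a substantive restructuring, not a routine verification.

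The paper closes the induction by a different decomposition that avoids any hand-off at $r_{n}$: it pivots on $t_{0} = C_{j}(\mathcal{I}_{n})$ instead. First, for all $t \geq t_{0}$, it proves (\ref{eq:energy-general-main}) by continuously deforming $w_{n}$ from $0$ (where the statement is the induction hypothesis for $n-1$ jobs) up to its actual value: since the machine speed is at least $W(t)^{1/\alpha}$ and the jobs processed then have density at most $\delta_{j}$, the perturbation satisfies $dW(t)/W(t)^{1/\alpha} \leq \delta_{j}\,dt$, i.e.\ the left-hand side of (\ref{eq:energy-general-main}) decreases at least as fast as the right-hand side along the deformation. Second, for $t < t_{0}$ it propagates backward rather than forward: during $[t,t_{0}]$ the machine processes jobs $k,\ldots,j$ of density at least $\delta_{j}$ at speed at least $W_{a}^{1/\alpha}$, so $\delta_{j}(t_{0}-t) \leq \delta_{j}\bigl(q_{k}(t)/W_{k}^{1/\alpha} + \cdots + q_{j}(r_{n})/W_{j}^{1/\alpha}\bigr) \leq \int_{W_{j+1}}^{W_{k}} z^{-1/\alpha}dz$, and since $W(t_{0}) = W_{j+1}$ this telescopes with $\frac{\alpha}{\alpha-1}W(t_{0})^{(\alpha-1)/\alpha}$ to give $\frac{\alpha}{\alpha-1}W(t)^{(\alpha-1)/\alpha}$. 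Both steps consume only the integral-form invariant, which is precisely why the paper's induction closes while your forward restart, as written, does not.
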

\begin{proof}
Denote the instance as $\mathcal{I}_{n}$.
Among such jobs, let $n$ be the last released one (at time $r_{n}$).
By induction hypothesis, it remains to prove the lemma inequality for $t \geq r_{n}$.

We first show the claim that inequality (\ref{eq:energy-general-main}) holds for any time 
$t \geq C_{j}(\mathcal{I}_{n})$ by a similar argument as in Subcase 2.2 of the previous claim. 
Indeed, we prove the claim by fixing the 
processing volume of job $n$ and varying its weight $w_{n}$. Note that 
$C_{j}(\mathcal{I}_{n})$ depends on $w_{n}$ and when $w_{n}$ is varied, 
$C_{j}(\mathcal{I}_{n})$ is also varied. However, with a fixed value of 
$w_{n}$, $C_{j}(\mathcal{I}_{n})$ is fixed and 
we are interested only in $t \geq C_{j}(\mathcal{I}_{n})$. If $w_{n} = 0$ then 
the claim follows the induction hypothesis (the instance becomes the one 
with $(n-1)$ jobs).  Assume that the claim holds for some value $w_{n}$.
Now increase an arbitrarily small amount of $w_{n}$ and 
consider a time $t \geq C_{j}(\mathcal{I}_{n})$ (corresponding to the current value of $w_{n}$).
Due to that increase, during period $[t,t+dt]$ the total fractional weight of pending jobs 
varies by $dV(t)$. During the same period $[t,t+dt]$,
the total processing volume done by algorithm is at least
$V(t)^{1/\alpha}dt$ since the machine speed is at least $W(t)^{1/\alpha}$.
Moreover, jobs processed during $[t,t+dt]$ have density at most $\delta_{j}$.
Therefore, $dV(t) \leq  \delta_{j} V(t)^{1/\alpha}dt$.    
So 
$$
\frac{\alpha}{\alpha-1}dW(t)^{(\alpha-1)/\alpha} 
	= \frac{dW(t)}{ W(t)^{1/\alpha}} \leq \delta_{j}dt
$$
This inequality means that in the lemma inequality (\ref{eq:energy-general-main}), 
the decrease in the left-hand side 
is larger than that in the right-hand side while varying the weight of job $n$. 
Hence, the inequality holds for $t \geq C_{j}(\mathcal{I}_{n})$.


Now we consider instance $\mathcal{I}_{n}$ with fixed parameters for job $n$.
We will prove the lemma for $t < C_{j}(\mathcal{I}_{n})$.
Denote $t_{0} = C_{j}(\mathcal{I}_{n})$. Again, rename jobs 
in non-increasing order of their  densities at time $r_{n}$.
(After $r_{n}$, no new job is released and the relative order of jobs is unchanged.)
Let $W_{a}$ be the total weight of pending jobs at $r_{n}$
which have  density smaller than $\delta_{a}$. 
Recall that the total weight of pending jobs at time $t$ is $W(t)$.

Let $k$ be the pending job with the smallest index at time $t$ 
in the instance $\mathcal{I}_{n}$. 
During $[t,t_{0}]$, the machine processes (a part) of job $k$, jobs $k+1,\ldots, j$.
The jobs have density at least $\delta_{j}$.
We deduce
\begin{align*}
\lambda_{j} - \delta_{j} (t-r_{j}) &- \frac{P(s^{c})}{s^{c}}
	= \lambda_{j} - \delta_{j} (t_{0}-r_{j}) -  \frac{P(s^{c})}{s^{c}}
			 + \delta_{j} (t_{0}-t) \\
&\leq \frac{\alpha}{\alpha-1} W(t_{0})^{\frac{\alpha-1}{\alpha}} + \delta_{j} (t_{0}-t) \\
&\leq \frac{\alpha}{\alpha-1} W(t_{0})^{\frac{\alpha-1}{\alpha}}
	+ \delta_{j} \biggl(  \frac{q_{k}(t)}{ W_{k}^{1/\alpha}} + 
		\frac{q_{k+1}(r_{n})}{ W_{k+1}^{1/\alpha}} + \ldots + 
		\frac{q_{j}(r_{n})}{ W_{j}^{1/\alpha}} \biggl)\\
&\leq \frac{\alpha}{\alpha-1} W(t_{0})^{\frac{\alpha-1}{\alpha}}
	 + \int_{W_{j+1}}^{W_{k}} \frac{dz}{z^{1/\alpha}} \\
&\leq \frac{\alpha}{\alpha-1} W_{k}^{\frac{\alpha-1}{\alpha}} 
	= \frac{\alpha}{\alpha-1} W(t)^{\frac{\alpha-1}{\alpha}}	
\end{align*}
The first inequality follows the previous claim, stating that inequality (\ref{eq:energy-general-main}) 
holds for $t \geq t_{0}$. The second inequality follows the fact that at any time the speed 
of the machine is at least $W(t)^{1/\alpha}$. The third inequality
holds since $\delta_{k} \leq \delta_{k+1} \leq \ldots \leq \delta_{j}$
and function $z^{-1/\alpha}$ is decreasing.
The last inequality holds since $W(t_{0}) = W_{j+1}$ and $W_{k} = W(t)$.
\end{proof}

\begin{theorem}
The algorithm has competive ratio at most $\max\{64, 32\alpha/\ln \alpha\}$.
\end{theorem}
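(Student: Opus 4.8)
The plan is to run the dual-fitting scheme of Section~\ref{sec:4S-energy}: bound the Lagrangian from below by the algorithm's cost up to the target factor, and invoke weak duality. Concretely, any true schedule is feasible for (\ref{relaxation}) with objective exactly twice its real cost, so $OPT(\ref{relaxation})\le 2\,OPT$; and by weak Lagrangian duality $\min_{s,C,F}L(s,C,F,\lambda)\le OPT(\ref{relaxation})$ for the $\lambda_{j}$'s fixed by (\ref{eq:lambda}). Writing $\mathrm{ALG}=\mathcal{E}^{*}_{1}+\mathcal{E}^{*}_{2}+\mathcal{E}^{*}_{3}+\mathcal{F}^{*}$, it therefore suffices to establish $\min_{s,C,F}L\ge \tfrac{2}{c}\,\mathrm{ALG}$ for $c=\max\{64,32\alpha/\ln\alpha\}$. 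In the displayed form of $L$ the term $\sum_{j}\int\delta_{j}(C_{j}-r_{j})s_{j}F\,dt$ is nonnegative and can be dropped, leaving $\sum_{j}\lambda_{j}p_{j}+A\int|F'|dt-\sum_{j}\int s_{j}F\big(\lambda_{j}-2P(s)/s-\delta_{j}(C_{j}-r_{j})\big)dt$. Since $t\le C_{j}$ on the support of $s_{j}$, we have $\lambda_{j}-\delta_{j}(C_{j}-r_{j})\le\lambda_{j}-\delta_{j}(t-r_{j})$, which Lemma~\ref{lem:energy-general-main} bounds by $B(t):=\max\{\frac{\alpha}{\alpha-1}W(t)^{(\alpha-1)/\alpha}+P(s^{c})/s^{c},\,2P(s^{c})/s^{c}\}$. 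Using $\sum_{j}s_{j}=s$ and $s_{j}\ge0$, the integrand is at most $sB(t)-2P(s)$, and minimizing over the adversary's speed replaces it by the conjugate $\Phi(t):=\max_{z\ge0}\big(zB(t)-2P(z)\big)$, so that $\min L\ge\sum_{j}\lambda_{j}p_{j}-\int F\,\Phi(t)\,dt+A\int|F'|dt$.

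The heart of the argument is extracting the $\alpha/\ln\alpha$ factor from $\Phi$. With $P(z)=z^{\alpha}+g$ the maximizer is $z=(B/2\alpha)^{1/(\alpha-1)}$ and $\Phi=2(\alpha-1)(B/2\alpha)^{\alpha/(\alpha-1)}-2g$. On the set where the algorithm runs above the critical speed, $s^{*}=W^{1/\alpha}$ and the first branch of $B$ is active; substituting $B$ and simplifying produces a coefficient $(2(\alpha-1))^{-1/(\alpha-1)}=1-\Theta(\ln\alpha/\alpha)$ of the dynamic-energy rate $W=(s^{*})^{\alpha}$, while the additive $-2g$ absorbs the slack near the regime boundary $\frac{\alpha}{\alpha-1}W^{(\alpha-1)/\alpha}=P(s^{c})/s^{c}$, where $W\approx g$. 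Comparing $\int F\Phi$ against the budget $\sum_{j}\lambda_{j}p_{j}\ge\mathcal{E}^{*}_{1}$ from Lemma~\ref{lem:general-energy-lambda} thus leaves a residual $\big(1-(2(\alpha-1))^{-1/(\alpha-1)}\big)\mathcal{E}^{*}_{1}=\Theta(\ln\alpha/\alpha)\,\mathcal{E}^{*}_{1}$. I would route this through Corollary~\ref{cor:general-energy-lambda}, spending the $\tfrac{7}{8}\mathcal{E}^{*}_{1}$ term to pay $\int F\Phi$ and retaining the explicit $\tfrac{1}{16}\mathcal{F}^{*}$ and the debt $-\tfrac{3}{16}\mathcal{E}^{*}_{2}$ for the static part.

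On the complementary set (critical speed or idle), $B=2P(s^{c})/s^{c}$ forces $\Phi(t)\le0$, so the conjugate contributes nothing positive and the surviving $F$-dependent quantities, namely $A\int|F'|dt$ together with the negative $-2g$ surplus, are bounded below exactly as in the energy-minimization theorem of Section~\ref{sec:4S-energy}: partition time at the algorithm's idle-to-sleep transitions, note that each phase carries a transition cost $A$ and an idle interval of static cost $A$, and do a two-case analysis on whether the adversary's $F$ switches inside the phase to obtain a lower bound of $\tfrac14(\mathcal{E}^{*}_{2}+\mathcal{E}^{*}_{3})$; this discharges the $-\tfrac{3}{16}\mathcal{E}^{*}_{2}$ debt with room to spare. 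Assembling the pieces and using $2\mathcal{E}^{*}_{1}+3\mathcal{E}^{*}_{2}\ge\mathcal{F}^{*}$ (Lemma~\ref{lem:general-energy-lambda}) to rebalance gives, up to the combining constants, $\min L\ge\frac{\ln\alpha}{16\alpha}(\mathcal{E}^{*}_{1}+\mathcal{F}^{*})+\tfrac14(\mathcal{E}^{*}_{2}+\mathcal{E}^{*}_{3})$; dividing by the factor $2$ yields $\mathrm{ALG}\le\max\{64,32\alpha/\ln\alpha\}\,OPT$, the second branch governing large $\alpha$ and the constant branch the small-$\alpha$ regime.

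I expect the decisive difficulty to be the conjugate estimate of the second paragraph: one must control $\Phi$ uniformly across both speed regimes and verify that the $-2g$ term genuinely compensates the factor-$2$ looseness of $B$ near the boundary, so that the surviving coefficient is $1-\Theta(\ln\alpha/\alpha)$ \emph{everywhere} and not merely deep inside the high-speed regime. The secondary nuisance is the bookkeeping that forces the constants for the $64$ branch and the $32\alpha/\ln\alpha$ branch to hold simultaneously, which is what dictates the particular weights $\tfrac{7}{8},\tfrac{1}{16},\tfrac{3}{16}$ emerging from Corollary~\ref{cor:general-energy-lambda}.
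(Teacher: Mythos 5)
Your proposal collapses at the step where you declare that the adversary's weighted flow-time term $\sum_{j}\int_{r_{j}}^{C_{j}}\delta_{j}(C_{j}-r_{j})s_{j}(t)F(t)\,dt$ ``is nonnegative and can be dropped.'' The paper keeps this term, and it is indispensable in the phase analysis of the static and wake-up energy (Claim~\ref{claim:general-energy-L2}, Case~2). Consider a phase $[u,v)$ of the algorithm's schedule in which the adversary's solution has $F(t)=0$ throughout: the adversary then contributes nothing to $A\int|F'(t)|dt$ on that phase and nothing to $\int g\one_{\{F(t)>0\}}dt$, so after your discard the only surviving term restricted to the phase is $\frac12\int_{u}^{v}g\one_{\{s^{*}(t)>0\}}dt$, which cannot pay the algorithm's idle static energy $\int_{u}^{v}g\one_{\{s^{*}(t)=0\}}dt=A$, let alone the transition cost $A$; your claimed bound $\frac14(\mathcal{E}^{*}_{2}+\mathcal{E}^{*}_{3})$ fails on such a phase. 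Your appeal to the ``exactly as in Section~\ref{sec:4S-energy}'' two-case analysis does not transfer: there, a transition-free phase forces $F\equiv 1$ because all jobs must be completed by their deadlines, whereas in the flow-time problem there are no deadlines and the adversary may legitimately sleep through an entire phase, completing the phase's jobs after $v$. The paper charges exactly this case to the flow-time term you dropped, using the algorithm's plan-balancing rule (the plan's start time is chosen so that the plan's energy equals its weighted flow-time) to conclude that the weighted flow-time of jobs released in $[u,v)$ and finished after $v$ dominates the algorithm's static energy in the phase. Without that term there is no source of dual value to charge, so this is a genuine gap, not a bookkeeping issue.

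A secondary soft spot, which you flag yourself, is the conjugate estimate: bounding the integrand by $\Phi(t)=\max_{z\ge 0}(zB(t)-2P(z))$ with the full $B(t)=\frac{\alpha}{\alpha-1}W(t)^{(\alpha-1)/\alpha}+P(s^{c})/s^{c}$ leaves you owing a uniform verification that $-2g$ absorbs the additive $P(s^{c})/s^{c}$ over the entire high-speed regime, and your stated coefficient $(2(\alpha-1))^{-1/(\alpha-1)}$ is obtained only after discarding that additive term. The paper sidesteps the uniformity problem by an asymmetric split: on the set $T$ where $\frac{\alpha}{\alpha-1}W(t)^{(\alpha-1)/\alpha}\le P(s^{c})/s^{c}$, Lemma~\ref{lem:energy-general-main} makes the whole integrand nonpositive against $2P(s(t))/s(t)$; off $T$, one of the two factors $P(s(t))/s(t)\ge P(s^{c})/s^{c}$ cancels the additive term, and the conjugate is taken against a \emph{single} $P$, with maximizer $\bar{s}(t)=(\alpha-1)^{-1/(\alpha-1)}W(t)^{1/\alpha}$, whose leftover $-g$ yields precisely the credits $-\frac12\int g\one_{\{F(t)>0\}}dt-\frac12\int g\one_{\{s^{*}(t)>0\}}dt$ that Claim~\ref{claim:general-energy-L2} then consumes. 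The remainder of your accounting --- Lemma~\ref{lem:general-energy-lambda}, Corollary~\ref{cor:general-energy-lambda} with the weights $\frac78,\frac1{16},\frac3{16}$, the factor-$2$ relaxation, and the final combination into $\max\{64,32\alpha/\ln\alpha\}$ --- does match the paper, but the proof does not stand until the dropped flow-time term is restored and used in the sleeping-adversary case.
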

\begin{proof}
Recall that the dual has value at least $\min L(x,s,C, F, \lambda)$ where the minimum is taken 
over $(x,s,C,F)$ feasible solution of the primal. The goal is to bound 
the Lagrangian function. 
\begin{align}		\label{eq:lagrangian}
L& (x,s,C, F, \lambda) = \sum_{j} \lambda_{j}p_{j}  + A \int_{0}^{\infty} |F'(t)|dt 
			+ \sum_{j} \int_{r_{j}}^{C_{j}}  \delta_{j}(C_{j} - r_{j}) s_{j}(t)F(t)dt \notag \\
		&- \sum_{i,j} \int_{r_{j}}^{C_{j}} s_{j}(t)F(t)
			\biggl( \lambda_{j} - 2 \frac{P(s(t))}{s(t)} 
								- \delta_{j}(C_{j} - r_{j}) \biggl) \one_{\{s(t) > 0\}}dt	
\end{align}
 Define $L_{1}(x,s,C, F, \lambda)$ as 
$$
\sum_{j} \int_{r_{j}}^{C_{j}} s_{j}(t)F(t)
			\biggl( \lambda_{j} - 2 \frac{P(s(t))}{s(t)}
								- \delta_{j}(C_{j} - r_{j}) \biggl)  \one_{\{s(t) > 0\}} dt 
$$

\begin{claim}		\label{claim:general-energy-L1}
Let $(x,s,C,F)$ be an arbitrary feasible solution of the primal. Then,
\begin{align*}
L_{1}(x,s,C, F, \lambda) 
		\leq \frac{\alpha-1}{(\alpha-1)^{\frac{\alpha}{\alpha-1}}} 
			\mathcal{F}^{*}
			- \frac{1}{2} \int_{0}^{\infty} g\one_{\{F(t) > 0\}} dt 
			- \frac{1}{2} \int_{0}^{\infty} g\one_{\{s^{*}(t) > 0\}}  dt \\
\end{align*}
\end{claim}

\begin{claim}		\label{claim:general-energy-L2}
Let $(x,s,C,F)$ be an arbitrary feasible solution of the primal. 
Define
\begin{align*}
L_{2}(F) := &\sum_{j} \int_{r_{j}}^{C_{j}}  \delta_{j}(C_{j} - r_{j}) s_{j}(t)F(t)dt + A \int_{0}^{\infty} |F'(t)|dt \\
		  & \qquad + \frac{1}{2} \int_{0}^{\infty} g\one_{\{F(t) > 0\}} dt 
			+ \frac{1}{2} \int_{0}^{\infty} g\one_{\{s^{*}(t) > 0\}} dt
\end{align*}
Then, $L_{2}(F) \geq \mathcal{E}^{*}_{2}/4$.
\end{claim}

We first show how to prove the theorem assuming the claims.
By (\ref{eq:lagrangian}), we have
\begin{align*}
L(x,s,C, F, \lambda) 
	&\geq \sum_{j} \lambda_{j}p_{j}  + A \int_{0}^{\infty} |F'(t)|dt
			- \sum\frac{\alpha-1}{(\alpha-1)^{\frac{\alpha}{\alpha-1}}} \mathcal{F}^{*} \\
			& \qquad + \frac{1}{2} \int_{0}^{\infty} g \one_{\{F(t) > 0\}} dt 
			+ \frac{1}{2} \int_{0}^{\infty} g\one_{\{s^{*}(t) > 0\}} dt \\
	&\geq \sum_{j}\lambda_{j}p_{j}  
			- \frac{\alpha-1}{(\alpha-1)^{\frac{\alpha}{\alpha-1}}} \mathcal{F}^{*}
			+ \frac{1}{4}\mathcal{E}^{*}_{2} + \frac{1}{4}\mathcal{E}^{*}_{3}   \\
	&\geq \biggl( 1 - \frac{1}{(\alpha-1)^{1/(\alpha-1)}} \biggl) 
		\biggl( \frac{7}{8}\mathcal{E}^{*}_{1} + 
			 \frac{1}{16}\mathcal{F}^{*} - \frac{3}{16} \mathcal{E}^{*}_{2}
			 \biggl) 
		+ \frac{1}{4}\mathcal{E}^{*}_{2} + \frac{1}{4}\mathcal{E}^{*}_{3}\\
	&=  \biggl( 1 - \frac{1}{(\alpha-1)^{1/(\alpha-1)}} \biggl) 
		\biggl( \frac{7}{8}\mathcal{E}^{*}_{1} + \frac{1}{16}\mathcal{F}^{*} \biggl) 
		+ \biggl( \frac{1}{4} - \frac{3}{16} \biggl) \mathcal{E}^{*}_{2} + \frac{1}{4}\mathcal{E}^{*}_{3}\\
	&\geq  \frac{\ln (\alpha-1)}{(\alpha-1)^{\alpha/(\alpha-1)}} 
		\biggl( \frac{3}{4}\mathcal{E}^{*}_{1} + \frac{1}{8}\mathcal{F}^{*} \biggl) 
		+ \frac{1}{16} \mathcal{E}^{*}_{2} + \frac{1}{4}\mathcal{E}^{*}_{3}		
\end{align*}
where the first and second inequalities are due to Claim~\ref{claim:general-energy-L1} and
Claim~\ref{claim:general-energy-L2}, respectively. The third inequality follows
Corollary~\ref{cor:general-energy-lambda} and $\sum_{j} \lambda_{j}p_{j} \geq \mathcal{F}^{*}$. 
The last inequality is due to the fact that
$(\alpha-1)^{\frac{1}{\alpha-1}} \geq 1 + \frac{\ln (\alpha-1)}{\alpha-1}$
for every $\alpha > 1$.

Besides, the primal objective is at most
$
2(\mathcal{F}^{*} + \mathcal{E}^{*}_{1} + \mathcal{E}^{*}_{2} + \mathcal{E}^{*}_{3})
$. 
Hence, the competitive ratio is at most $\max\{32\alpha/\ln \alpha,64\}$.

In the rest, we prove the claims.

\setcounter{claim}{1}
\begin{claim}		\label{claim:general-energy-L1}
Let $(x,s,C,F)$ be an arbitrary feasible solution of the primal. Then,
\begin{align*}
L_{1}(x,s,C, F, \lambda) 
		\leq \frac{\alpha-1}{(\alpha-1)^{\frac{\alpha}{\alpha-1}}} 
			\mathcal{F}^{*}
			- \frac{1}{2} \int_{0}^{\infty} g\one_{\{F(t) > 0\}} dt 
			- \frac{1}{2} \int_{0}^{\infty} g\one_{\{s^{*}(t) > 0\}}  dt \\
\end{align*}
\end{claim}
\begin{claimproof}
We have
\begin{align*}
L_{1}(x,s,C, F, \lambda) & :=  \sum_{j} \int_{r_{j}}^{C_{j}} s_{j}(t)F(t)
			\biggl( \lambda_{j} - 2\frac{P(s(t))}{s(t)} 
								- \delta_{j}(C_{j} - r_{j}) \biggl)  \one_{\{s(t) > 0\}} dt  \\
		&\leq \sum_{j} \int_{r_{j}}^{C_{j}} s_{j}(t)F(t)
			\biggl( \lambda_{j} - 2\frac{P(s(t))}{s(t)} 
								- \delta_{j}(t - r_{j}) \biggl)  \one_{\{s(t) > 0\}} dt 
\end{align*}
where the inequality holds because the integral for each job $j$ is taken over $r_{j} \leq t \leq C_{j}$. 

Let $T$ be the set of time $t$ such that 
$\frac{\alpha}{(\alpha-1)} W(t)^{\frac{\alpha-1}{\alpha}} \leq \frac{P(s^{c})}{ s^{c}}$.
Then by Lemma~\ref{lem:energy-general-main}, for any $t \in T$
$$
\lambda_{j} - \delta_{j} (t-r_{j})
\leq 2 \frac{P(s^{c})}{ s^{c}}
$$
Therefore,
\begin{align}	\label{eq:L1-null}
\sum_{j} \int_{r_{j}}^{C_{j}} s_{j}(t)F(t)
	\biggl( \lambda_{j} - 2\frac{P(s(t))}{s(t)} 
								- \delta_{j}(C_{j} - r_{j}) \biggl) \one_{\{t \in T\}}dt \leq 0	
\end{align}
since $s^{c} = \arg \min_{z \geq 0} P(z)/z$. Hence, 
\begin{align*}
&L_{1}(x,s,C, F, \lambda) \\
		&\leq \sum_{j} \int_{r_{j}}^{C_{j}} s_{j}(t)F(t)
			\biggl( \lambda_{j} - 2\frac{P(s(t))}{s(t)} 
						- \delta_{j}(t - r_{j}) \biggl)  \one_{\{s(t) > 0\}} \one_{\{F(t) > 0\}} \one_{\{t \notin T\}} dt \\
		&\leq \sum_{j} \int_{r_{j}}^{C_{j}} s_{j}(t)F(t)
			\biggl( \frac{\alpha}{(\alpha-1)} W(t)^{\frac{\alpha-1}{\alpha}} - \frac{P(s(t))}{s(t)} \biggl)  
				\one_{\{s(t) > 0\}}  \one_{\{F(t) > 0\}}  \one_{\{t \notin T\}} dt \\	
		&= \int_{0}^{\infty} s(t)
			\biggl( \frac{\alpha}{(\alpha-1)} W(t)^{\frac{\alpha-1}{\alpha}} - \frac{P(s(t))}{s(t)} \biggl)  
				\one_{\{s(t) > 0\}}  \one_{\{F(t) > 0\}}  \one_{\{t \notin T\}} dt \\
		&\leq \int_{0}^{\infty} 
			\biggl( \frac{\alpha}{(\alpha-1)} W(t)^{\frac{\alpha-1}{\alpha}} \bar{s}(t) - P(\bar{s}(t)) \biggl) 
				\one_{\{s(t) > 0\}}  \one_{\{F(t) > 0\}}  \one_{\{t \notin T\}} dt \\
		&\leq \int_{0}^{\infty}
			\biggl( \frac{\alpha}{(\alpha-1)} W(t)^{\frac{\alpha-1}{\alpha}} \bar{s}(t) - P(\bar{s}(t)) \biggl)   
				 \one_{\{F(t) > 0\}}   \one_{\{s^{*}(t) > 0\}} dt \\							
		&\leq \frac{1}{2}  \int_{0}^{\infty} 
			\biggl( \frac{\alpha}{(\alpha-1)} W(t)^{\frac{\alpha-1}{\alpha}} \bar{s}(t) 
					- P(\bar{s}(t) \biggl)  \one_{\{F(t) > 0\}} dt \\
			& \qquad \qquad + \frac{1}{2}  \int_{0}^{\infty} 
			\biggl( \frac{\alpha}{(\alpha-1)} W(t)^{\frac{\alpha-1}{\alpha}} \bar{s}(t) 
					- P(\bar{s}(t) \biggl)  \one_{\{s^{*}(t) > 0\}} dt
\end{align*}
The first inequality is due to (\ref{eq:L1-null}) and note that if $F(t) = 0$ then the contribution of the term 
inside the integral is 0. The second inequality follows Lemma~\ref{lem:energy-general-main} and 
recall that  $s^{c} = \arg \min_{z \geq 0} P(z)/z$. 
The equality is because $\sum_{j} s_{j}(t)F(t) = \sum_{j} s_{j}(t) = s(t)$
for $t$ such that $F(t) > 0$ (meaning $F(t) = 1$).
The third inequality is due to the first order derivative and $\bar{s}(t)$ is the solution of  
$P'(z) = \frac{\alpha}{(\alpha-1)} W(t)^{\frac{\alpha-1}{\alpha}}$. 
The fourth inequality holds since the term inside the integral is non-negative
and $\{t: t \notin T\} \subset \{t: s^{*}(t) > s^{c}\} \subset \{t: s^{*}(t) > 0\}$.

Replacing $\bar{s}(t) =  (\alpha-1)^{-1/(\alpha-1)} W(t)^{1/\alpha}$
(solution of $P'(z) = \frac{\alpha}{(\alpha-1)} W(t)^{\frac{\alpha-1}{\alpha}}$), 
we get:
\begin{align*}
L_{1}(x,s,C, F, \lambda) 
		&\leq \frac{\alpha-1}{(\alpha-1)^{\frac{\alpha}{\alpha-1}}} 
			 \int_{0}^{\infty} W(t) 
			- \frac{1}{2} \int_{0}^{\infty} g\one_{\{F(t) > 0\}} dt 
			- \frac{1}{2} \int_{0}^{\infty} g\one_{\{s^{*}(t) > 0\}}  dt \\
		&= \frac{\alpha-1}{(\alpha-1)^{\frac{\alpha}{\alpha-1}}} 
			\mathcal{F}^{*}
			- \frac{1}{2} \int_{0}^{\infty} g\one_{\{F(t) > 0\}} dt 
			- \frac{1}{2} \int_{0}^{\infty} g \one_{\{s^{*}(t) > 0\}}  dt
\end{align*}
\end{claimproof}

\begin{claim}		\label{claim:general-energy-L2}
Let $(x,s,C,F)$ be an arbitrary feasible solution of the primal. 
Define
\begin{align*}
L_{2}(F) := &\sum_{j} \int_{r_{j}}^{C_{j}}  \delta_{j}(C_{j} - r_{j}) s_{j}(t)F(t)dt + A \int_{0}^{\infty} |F'(t)|dt \\
		  & \qquad + \frac{1}{2} \int_{0}^{\infty} g\one_{\{F(t) > 0\}} dt 
			+ \frac{1}{2} \int_{0}^{\infty} g\one_{\{s^{*}(t) > 0\}} dt
\end{align*}
Then, $L_{2}(F) \geq (\mathcal{E}^{*}_{2} + \mathcal{E}^{*}_{3})/4$.
\end{claim}
\begin{claimproof}
Consider the algorithm schedule. An \emph{end-time} $u$ is a moment in the algorithm schedule such
that the machine switches from the idle state to the sleep state. 
Conventionally, the first end-time in the schedule is 0. 
Partition the time line into phases. A \emph{phase} $[u,v)$ is a time interval such that $u,v$
are consecutive end-times. Observe that in a phase, the schedule
has transition cost $A$ and some new job is released in a phase
(otherwise the machine would not switch to non-sleep state).
We will prove the claim on every phase. In the following, 
we are only interested in phase $[u,v)$ and define 
\begin{align*}
\restr{L_{2}(F)}{[u,v)} := &\sum_{j: u \leq r_{j} < v} \int_{r_{j}}^{C_{j}}  \delta_{j}(C_{j} - r_{j}) s_{j}(t)F(t)dt 
					+ A \int_{0}^{\infty} |F'(t)|dt \\
		  & \qquad + \frac{1}{2} \int_{u}^{v} g\one_{\{F(t) > 0\}} dt 
			+ \frac{1}{2} \int_{u}^{v} g\one_{\{s^{*}(t) > 0\}} dt
\end{align*}

By the algorithm, the static energy on machine $i$ during 
its idle time is $A$, i.e., $\int_{u}^{v}  g\one_{\{s^{*}(t) = 0\}} dt = A$. 
If during $[u,v)$, the schedule induced by solution $(x,s,C,F)$ makes a transition from
non-sleep state to sleep state or inversely then 
$\restr{L_{2}(F)}{[u,v)} \geq \frac{1}{2} \int_{u}^{v} g \one_{\{s^{*}(t) > 0\}} dt + A$. Hence
$$
\restr{L_{2}(F)}{[u,v)} \geq \frac{1}{2} \biggl( \int_{u}^{v} g \one_{\{s^{*}(t) > 0\}} dt 
	      	+  \int_{u}^{v}  g \one_{\{s^{*}(t) = 0\}}dt + A \biggl) 
		= \frac{1}{2}\restr{\mathcal{E}^{*}_{2}}{[u,v)} + \frac{1}{2}\restr{\mathcal{E}^{*}_{3}}{[u,v)}.
$$

If during $[u,v)$, the schedule induced by solution $(x,s,C,F)$ makes no transition (from
non-sleep static to sleep state or inversely) then either $F(t) = 1$ or $F(t) = 0$ for every 
$t \in [u,v]$. Note that by definition of phases, some job is released during $[u,v)$.

\paragraph{Case 1: $F(t) = 1 ~\forall u \leq t \leq v$.} 
Hence,
\begin{align*}
\restr{L_{2}(F)}{[u,v)} &\geq \frac{1}{2} \int_{u}^{v} g \one_{\{s^{*}(t) > 0\}} dt 
		+ \frac{1}{2} \int_{u}^{v} g \one_{\{F(t) =1\}} dt 
		=  \frac{1}{2} \int_{u}^{v} g \one_{\{s^{*}(t) > 0\}} dt 
		+ \frac{1}{2} \int_{u}^{v} g dt \\
	      & \geq \frac{1}{2} \int_{u}^{v} g \one_{\{s^{*}(t) > 0\}} dt 
	      	+ \frac{1}{4} \int_{u}^{v}  g \one_{\{s^{*}(t) = 0\}}dt + \frac{A}{4} \\
	      & \geq \frac{1}{4} \biggl( \int_{u}^{v} g \one_{\{s^{*}(t) > 0\}} dt 
	      	+  \int_{u}^{v}  g \one_{\{s^{*}(t) = 0\}}dt + A \biggl) 
		= \frac{1}{4}\restr{\mathcal{E}^{*}_{2}}{[u,v)} + \frac{1}{4}\restr{\mathcal{E}^{*}_{3}}{[u,v)}
\end{align*}
where the second inequality follows since the total idle duration in $[u,v)$ incurs a cost $A$
(so the machine switches to sleep state at time $v$).

\paragraph{Case 2: $F(t) = 0 ~\forall u \leq t \leq v$.}  As the machine is in the sleep state 
during $[u,v)$ in solution $(x,s,C,F)$, all jobs 
released in $[u,v)$ are completed later than $v$. 
By the algorithm, the total weighted flow-time of such jobs is at least the static energy of the algorithm
during $[u,v)$. In other words,
\begin{align*}
\restr{L_{2}(F)}{[u,v)} &\geq \sum_{j: u \leq r_{j} < v} \int_{r_{j}}^{C_{j}}  \delta_{j}(C_{j} - r_{j}) s_{j}(t)F(t)dt  
		+ \frac{1}{2} \int_{u}^{v} g \one_{\{s^{*}(t) > 0\}} dt \\
		& \geq \int_{u}^{v} g \one_{\{s^{*}(t) = 0\}} dt 
	      	+ \frac{1}{2} \int_{u}^{v}  g \one_{\{s^{*}(t) > 0\}}dt  \\
		& \geq \frac{1}{2} \int_{u}^{v} g \one_{\{s^{*}(t) = 0\}} dt 
	      	+ \frac{1}{2} \int_{u}^{v}  g \one_{\{s^{*}(t) > 0\}}dt  + \frac{A}{2}
	        = \frac{1}{2} \restr{\mathcal{E}^{*}_{2}}{[u,v)} +  \frac{1}{2} \restr{\mathcal{E}^{*}_{3}}{[u,v)}
\end{align*}
where the third inequality is again due to the fact that 
the total idle duration in $[u,v)$ incurs a static energy $A$. 
\end{claimproof}

The above proofs of the claims complete the theorem proof.
\end{proof}

\bibliographystyle{plainnat}
\bibliography{scheduling}
\newpage

\appendix

\section*{Appendix}

\setcounter{lemma}{-1}
\begin{lemma}[Weak duality]
Consider a possibly non-convex optimization problem
\begin{align*}
p^{*} := \min_{x} f_{0}(x) ~: \quad f_{i}(x) \leq 0, \quad i=1,\ldots,m. 
\end{align*}
where $f_{i}: \mathbb{R}^{n} \rightarrow \mathbb{R}$ for $0 \leq i \leq m$.
Let $\mathcal{X}$ be the feasible set of $x$. 
Let $L: \mathbb{R}^{n} \times \mathbb{R}^{m} \rightarrow \mathbb{R}$ be the Lagragian function
$$
L(x,\lambda) = f_{0}(x) + \sum_{i=1}^{m} \lambda_{i} f_{i}(x).
$$
Define $d^{*} = \max_{\lambda \geq 0} \min_{x \in \mathcal{X}} L(x,\lambda)$
where $\lambda \geq 0$ means $\lambda \in \mathbb{R}^{m}_{+}$. 
Then $p^{*} \geq d^{*}$.
\end{lemma}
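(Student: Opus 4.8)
The plan is to prove $p^{*} \ge d^{*}$ by a direct sign argument on the Lagrangian penalty terms; notably, no convexity of the $f_i$ is needed, which is exactly why the statement is asserted for a possibly non-convex problem. First I would fix an arbitrary dual point $\lambda \ge 0$ together with an arbitrary primal-feasible point $x \in \mathcal{X}$. Feasibility gives $f_i(x) \le 0$ for every $i$, and dual feasibility gives $\lambda_i \ge 0$, so each product $\lambda_i f_i(x)$ is non-positive. Summing these, the penalty term $\sum_{i=1}^{m} \lambda_i f_i(x)$ is at most $0$, whence
$$L(x,\lambda) = f_{0}(x) + \sum_{i=1}^{m} \lambda_i f_i(x) \le f_{0}(x).$$

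The second step is to minimize over $x \in \mathcal{X}$. Since the pointwise bound $L(\cdot,\lambda) \le f_{0}(\cdot)$ holds on all of $\mathcal{X}$, the infima obey the same inequality, so
$$\min_{x \in \mathcal{X}} L(x,\lambda) \le \min_{x \in \mathcal{X}} f_{0}(x) = p^{*}.$$
This upper bound by $p^{*}$ is valid for every $\lambda \ge 0$, so taking the maximum over $\lambda \ge 0$ on the left-hand side preserves it, yielding $d^{*} = \max_{\lambda \ge 0} \min_{x \in \mathcal{X}} L(x,\lambda) \le p^{*}$, which is the claim.

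There is essentially no hard step: the result is the classical weak duality and the argument is a two-line sign chase. The one point worth flagging is that the inner minimization is taken over the feasible set $\mathcal{X}$ exactly as written, rather than over all of $\mathbb{R}^{n}$; this is what lets me pass directly from the pointwise bound to $\min_{x \in \mathcal{X}} L(x,\lambda) \le p^{*}$ without invoking any saddle-point theorem or constraint qualification. I would close by emphasizing that no convexity, differentiability, or regularity hypothesis on the $f_i$ enters anywhere, which is precisely the feature that makes this lemma usable for the non-convex relaxations treated by the dual-fitting sections of the paper.
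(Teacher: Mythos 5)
Your proof is correct and follows essentially the same route as the paper's: the pointwise bound $L(x,\lambda) \leq f_{0}(x)$ for feasible $x$ and $\lambda \geq 0$, then minimizing over $x \in \mathcal{X}$ and maximizing over $\lambda$. The paper merely packages the inner minimum as a dual function $g(\lambda)$ (noting its concavity in passing, which is not needed for the inequality), so the two arguments are identical in substance.
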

\begin{proof}
We observe that, for every feasible $x \in \mathcal{X}$, and every $\lambda \geq 0$,
$f_{0}(x)$ is bounded below by $L(x,\lambda)$:
$$
\forall x \in \mathcal{X}, ~\forall \lambda \geq 0: ~ f_{0}(x) \geq L(x,\lambda)
$$

Define a function $g: \mathbb{R}^{m} \rightarrow \mathbb{R}$ such that
$$
g(\lambda) := \min_{z} L(z,\lambda) = \min_{z} f_{0}(z) + \sum_{i=1}^{m} \lambda_{i} f_{i}(z)
$$
As $g$ is defined as a point-wise minimum, it is a concave function.  

We have, for any $x$ and $\lambda$, $L(x,\lambda) \geq g(\lambda)$. Combining with 
the previous inequality, we get 
$$
\forall x \in \mathcal{X}: ~ f_{0}(x) \geq g(\lambda)
$$
Taking the minimum over $x$, we obtain
$
\forall \lambda \geq 0: ~ p^{*} \geq g(\lambda).
$
Therefore,
$$
p^{*} \geq \max_{\lambda \geq 0} g(\lambda) = d^{*}.
$$
\end{proof}

\end{document}